\newtheorem{Thm}{Theorem}[section]
\newtheorem{Prp}[Thm]{Proposition}
\newtheorem{Lem}[Thm]{Lemma}
\newtheorem{Cor}[Thm]{Corollary}
\newtheorem{Def}[Thm]{Definition}
\newtheorem{IntroThm}{Theorem}[section]
\theoremstyle{definition}
\newcommand{\bC}{\mathbb C}
\newcommand{\mE}{\mathcal E}
\newcommand{\bN}{\mathbb N}
\newcommand{\bR}{\mathbb R}
\newcommand{\bZ}{\mathbb Z}
\newcommand{\mO}{\mathcal O}
\newcommand{\mT}{\mathcal T}
\newcommand{\Ber}{\mathrm{Ber}}
\newcommand{\sdet}{\mathrm{sdet}}
\newcommand{\hol}{\mathrm{hol}}
\newcommand{\Hol}{\mathrm{Hol}}
\newcommand{\id}{\mathrm{id}}
\newcommand{\rk}{\mathrm{rk}}
\newcommand{\str}{\mathrm{str}}
\newcommand{\dbar}{\overline{\partial}}
\newcommand{\oalpha}{\overline{\alpha}}
\newcommand{\ogamma}{\overline{\gamma}}
\newcommand{\oxi}{\overline{\xi}}
\newcommand{\setsep}{\;\big|\;}
\newcommand{\dd}[2][]{\frac{\partial #1}{\partial #2}}
\newcommand{\abs}[1]{\left| #1 \right|}
\newcommand{\scal}[3][]{\ifthenelse{\equal{#1}{}}{
  \left\langle #2,\,#3 \right\rangle
}{\ifthenelse{\equal{#1}{(}}{
  \left( #2,\,#3 \right)
}{\ifthenelse{\equal{#1}{[}}{
  \left[ #2,\,#3 \right]
}{
  #1\left( #2,\,#3 \right)
}}}}
\newcommand\blfootnote[1]{%
  \begingroup
  \renewcommand\thefootnote{}\footnotetext{#1}%
  \endgroup
}
\renewcommand{\title}[1]{\vbox{\center\LARGE{\textsc{#1}}}\vspace{5mm}}
\renewcommand{\author}[1]{\vbox{\center\large{\textsc{#1}}}\vspace{5mm}}
\newcommand{\address}[1]{\vbox{\center\em#1}}
\newcommand{\email}[1]{\vbox{\center\tt#1}\vspace{5mm}}
\begin{document}

\title{On Complex Supermanifolds\\ with Trivial Canonical Bundle}

\author{Josua Groeger$^1$}

\blfootnote{Research funded by the Institutional Strategy of the University
of Cologne in the German Excellence Initiative.}

\address{Universit\"at zu K\"oln, Institut f\"ur Theoretische Physik,\\
  Z\"ulpicher Str. 77, 50937 K\"oln, Germany }

\email{$^1$groegerj@thp.uni-koeln.de}

\begin{abstract}
\noindent
We give an algebraic characterisation for the triviality of the
canonical bundle of a complex supermanifold in terms of a certain
Batalin-Vilkovisky superalgebra structure.
As an application, we study the Calabi-Yau case, in which
an explicit formula in terms of the Levi-Civita connection is achieved.
Our methods include the use of complex integral forms and
the recently developed theory of superholonomy.
\end{abstract}

\section{Introduction}

The structure of a Batalin-Vilkovisky (BV) algebra was first found
in the perturbative solutions of the quantum master equation,
which is demanded by BRS invariance in the quantisation of gauge theories
\cite{BV81}.
Through a formal analogy with the Maurer-Cartan equation of complex
deformations on a complex manifold $M$, a certain BV structure also arises
as a necessary condition for the existence of infinitesimal solutions \cite{BK98},
and is tightly related to the canonical bundle of $M$ being trivial.

In this article, we are interested in the case of a complex supermanifold $M$
of dimension $\dim M=(\dim M)_{\overline{0}}|(\dim M)_{\overline{1}}=n|m$.
$M$ is assumed connected and, as usual, we let $M_{\overline{0}}$ denote
the underlying complex manifold, and $\mO_M$ the sheaf of
holomorphic superfunctions. By a slight abuse of notation, the superalgebra
of global sections shall be denoted by the same symbol.
We define the \emph{canonical bundle of $M$} to be the Berezinian
of the complex cotangent sheaf $\Ber M:=\Ber(\mT^{1,0}M)^*$, which
transforms through the superdeterminant $\sdet d\varphi$ under a holomorphic
change of coordinates $\varphi$, and, in the case of an ordinary complex manifold,
reduces to the classical canonical bundle of top degree holomorphic forms.
Moreover, we consider the following sheaf of $(0,q)$-forms with values
in holomorphic multivector fields.
\begin{align}
\label{eqnMultivectorForms}
\Omega^{0,*}\left(\bigwedge^*\mT^{1,0}M\right)
:=\bigoplus_{q=0}^{\infty}\bigoplus_{p=0}^{\infty}\Omega^{0,q}\left(\bigwedge^p\mT^{1,0}M\right)
\end{align}
Our first main theorem, to be stated next, gives an algebraic characterisation
for the triviality of the canonical bundle, that is
$\Ber M\cong\mO_M^{1|0}$ for $m$ even and $\Ber M\cong\mO_M^{0|1}$ for $m$ odd,
respectively.

\begin{IntroThm}
\label{thmWeakCYBV}
Let $M$ be a simply connected complex supermanifold.
Then the superalgebra $\Omega^{0,*}(\bigwedge^*\mT^{1,0}M)$
carries the structure of
a differential Gerstenhaber-Batalin-Vilkovisky (dGBV) superalgebra
strongly compatible with the Schouten-Nijenhuis bracket
if and only if $\Ber M$ is trivial.

In this case, there is a 1:1-correspondence between trivialising
homogeneous global
sections $\omega\in\Ber M$ (up to a complex constant)
and such dGBV-structures.
\end{IntroThm}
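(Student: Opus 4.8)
The plan is to follow the classical strategy of Barannikov--Kontsevich and Tian--Todorov, adapting it to the super setting by replacing top-degree holomorphic forms with complex integral forms. For the implication $\Ber M$ trivial $\Rightarrow$ dGBV, I would fix a homogeneous global trivialising section $\omega\in\Ber M$ and use it to produce the missing BV operator. The associative graded-commutative product on $\Omega^{0,*}(\bigwedge^*\mT^{1,0}M)$ is the wedge product (combining $\bigwedge$ on multivectors with multiplication of $(0,q)$-forms), the differential is $\dbar$, and the Gerstenhaber bracket is the Schouten--Nijenhuis bracket $[\cdot,\cdot]$, extended $\dbar$-linearly. To build $\Delta$ I would transport the natural differential on integral forms: contraction with $\omega$ yields an $\mO_M$-linear isomorphism $\iota_\omega$ between $\bigwedge^p\mT^{1,0}M$ and integral forms of complementary degree, and I set $\Delta:=\iota_\omega^{-1}\circ\partial_{\mathrm{int}}\circ\iota_\omega$, where $\partial_{\mathrm{int}}$ is the holomorphic differential on integral forms, extended $\dbar$-linearly to the full bigraded space.

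Next I would verify the dGBV axioms through this transport. Nilpotency $\Delta^2=0$ is immediate from $\partial_{\mathrm{int}}^2=0$ and the fact that $\iota_\omega$ is an isomorphism; strong compatibility $\Delta\dbar+\dbar\Delta=0$ reduces to the corresponding anticommutation of $\partial_{\mathrm{int}}$ and $\dbar$ on integral forms, and the requirement that $\dbar$ be a derivation of both the product and of $[\cdot,\cdot]$ is structural. The crucial point is that the defect of $\Delta$ from being a derivation of the wedge product equals, up to sign, the Schouten--Nijenhuis bracket, so that $\Delta$ generates exactly $[\cdot,\cdot]$ (this is the meaning of strong compatibility). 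I would prove this super Tian--Todorov identity by a local computation in a coordinate chart, writing $\omega=f\cdot\mathrm{vol}$ for an invertible superfunction $f$ and a standard Berezinian generator, and checking that $\iota_\omega$ intertwines $\partial_{\mathrm{int}}$ with a first-order-plus-divergence operator whose second-order part is precisely the bracket. Here the parity of the Berezinian, and hence the distinction between $m$ even and $m$ odd, enters in fixing the correct degree shift and signs.

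For the converse, suppose a dGBV structure strongly compatible with $[\cdot,\cdot]$ is given, with BV operator $\Delta$. Locally one can always trivialise $\Ber M$, so $\Delta$ locally arises from some volume element by the forward construction; the problem is to glue these into a global, nowhere-vanishing section. I would encode the transition data of $\Delta$ as a holomorphic connection on the rank $1|0$ (resp. $0|1$) line bundle $\Ber M$, and show that compatibility of $\Delta$ with $\dbar$ and the product forces its curvature to vanish. A flat holomorphic line bundle over a simply connected base is trivial, and it is exactly here that the theory of superholonomy is invoked: the superholonomy of the flat structure is trivial because $\pi_1$ vanishes, so the monodromy obstruction disappears and a genuine global trivialising section $\omega$ exists, whose parity matches the target ($\mO_M^{1|0}$ for $m$ even, $\mO_M^{0|1}$ for $m$ odd).

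Finally, for the asserted 1:1 correspondence I would observe that two trivialising sections differ by an invertible global superfunction, that they induce the same $\Delta$ precisely when this function is a nonzero complex constant, and that every admissible BV operator arises from some $\omega$ by the converse construction; together these give the bijection up to a constant. The main obstacle I anticipate is the super Tian--Todorov computation together with the careful bookkeeping of signs and parities forced by the Berezinian and by integral forms; the global step of the converse is conceptually the subtlest, but the superholonomy machinery reduces it cleanly to the simply connected hypothesis.
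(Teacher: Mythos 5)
Your proposal follows essentially the same route as the paper: transporting the differential on integral forms through the isomorphism induced by a trivialising section $\omega$ to define $\Delta$, proving the super Tian--Todorov identity by local computation, and, for the converse, building a flat connection on $\Ber M$ from $\Delta$ and invoking superholonomy (Ambrose--Singer plus the Holonomy Principle) on the simply connected $M$ to produce a global parallel holomorphic trivialising section, with the bijection argument also matching. The only caveat is terminological: in the paper ``strong compatibility'' refers to $\Delta$ respecting the $(p,q)$-bidegree (which your construction automatically satisfies), not to the anticommutation $\Delta\dbar+\dbar\Delta=0$, and the flatness of the connection really rests on $\Delta^2=0$ together with bracket-compatibility rather than on compatibility with $\dbar$.
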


For the case $m=(\dim M)_{\overline{1}}=0$ of a complex manifold,
this theorem is a classical result \cite{Sch98}.
The superisation in the present form is the major achievement of this article.
It only became available thanks to the recently developed
superholonomy theory (\cite{Gal09,Gro14c,Gro16}).
We remark that Thm. \ref{thmWeakCYBV} justifies, a posteriori,
our definition of the canonical bundle of a complex supermanifold.

We comment on notions and structure of the proof, and of the article.
To begin with, we state the general definition of a dGBV superalgebra in
the first part of Sec. \ref{secBV}. In particular, we carefully
explain our conventions used.
In the second part of that section, we treat the case of dGBV structures
on the superalgebra (\ref{eqnMultivectorForms}). In particular, we analyse
compatibility conditions with the Schouten-Nijenhuis bracket, which
is a natural extension of the vector field bracket.

One direction of the proof of Thm. \ref{thmWeakCYBV} is established as follows.
Starting with a trivialisation of the canonical bundle,
the dGBV superalgebra structure is obtained
through a super extension of an identity stated in \cite{BK98} which,
in turn, generalises a lemma proved independently by both
Tian \cite{Tia87} and Todorov \cite{Tod89}.
This is the subject matter of Sec. \ref{secTianTodorov} with the main
result contained in Prp. \ref{prpBV}.
For this implication, it is not necessary to assume that $M$ be simply connected.

Conversely, we construct and study a connection associated with a given dGBV
structure in Sec. \ref{secFlatConnections}, which turns out to be flat.
The remaining direction of Thm. \ref{thmWeakCYBV} is then obtained by means of
the aforementioned superholonomy theory.
More precisely, the Holonomy Principle applied to the present situation
provides us with a global section $\omega\in\Ber M$ as advertised,
parallel with respect to the aforementioned connection
(Prp. \ref{prpBVParallelSection}).

In Sec. \ref{secCY}, we sudy semi-Riemannian supermanifolds with
holonomy contained in some special unitary supergroup.
This condition will be referred to as 'Calabi-Yau', even though
the Calabi-Yau theorem does not generalise to supergeometry
(we refer to the remarks in that section for details).
The Levi-Civita connection on a Calabi-Yau supermanifold induces
a connection on $\Ber M$ with trivial holonomy and, therefore,
a parallel trivialising section. In this case, all constructions
are natural, as established in our second main theorem as follows.
A more precise version is provided as Thm. \ref{thmCYConnections} below.

\begin{IntroThm}
\label{thmIntroCYConnections}
The flat connection on $\Ber M$ constructed in Sec. \ref{secFlatConnections}
with respect to the dGBV structure of Sec. \ref{secTianTodorov}
on a Calabi-Yau supermanifold coincides with the one
induced by the Levi-Civita connection.
In particular, the dGBV structure can be explicitly expressed in terms
of the Levi-Civita connection.
\end{IntroThm}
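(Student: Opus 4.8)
The plan is to prove the equality of two flat connections on the line bundle $\Ber M$ by exhibiting a common parallel section, namely a trivialising global section $\omega\in\Ber M$ that is simultaneously parallel for both connections. Since $M$ is connected, a flat connection on a line bundle is determined up to the choice of a single nonvanishing parallel section (two connections sharing a nonvanishing parallel section agree, because their difference is a closed one-form that annihilates a global frame and hence vanishes). Thus the whole theorem reduces to the identity of \emph{parallel sections}, up to a nonzero constant. This reformulation is the conceptual heart of the argument and reduces a comparison of connection one-forms to a single computation at the level of sections.

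First I would recall, from Sec. \ref{secFlatConnections} and Prp. \ref{prpBVParallelSection}, that the flat connection $\nabla^{\mathrm{BV}}$ built from the dGBV structure has a distinguished parallel trivialising section $\omega_{\mathrm{BV}}$, obtained via the Holonomy Principle; by construction $\omega_{\mathrm{BV}}$ is, up to constant, the trivialisation $\omega$ whose associated divergence operator $\Delta$ generates the given dGBV bracket (the 1:1-correspondence of Thm. \ref{thmWeakCYBV}). On the other side, the Levi-Civita connection $\nabla^{\mathrm{LC}}$ of the Calabi-Yau metric induces a connection on $\Ber M=\Ber(\mT^{1,0}M)^*$ by functoriality of the Berezinian; since the holonomy of $\nabla^{\mathrm{LC}}$ lies in a special unitary supergroup, the induced holonomy on $\Ber M$ is trivial, so $\nabla^{\mathrm{LC}}$ admits its own parallel trivialising section $\omega_{\mathrm{LC}}$. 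The task is then to show $\omega_{\mathrm{BV}}$ and $\omega_{\mathrm{LC}}$ coincide up to a complex constant, equivalently that $\omega_{\mathrm{LC}}$ is parallel for $\nabla^{\mathrm{BV}}$, equivalently that the divergence operator attached to $\omega_{\mathrm{LC}}$ is exactly the BV operator $\Delta$ of the dGBV structure.

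The key step is therefore to compute, in local holomorphic frames, the connection one-form that $\nabla^{\mathrm{LC}}$ induces on $\Ber M$ and to match it against the one-form defining $\nabla^{\mathrm{BV}}$. Concretely, I would fix a local holomorphic frame of $\mT^{1,0}M$, write the Levi-Civita Christoffel symbols $\Gamma^k_{ij}$ (in the graded sense), and use the fact that the connection induced on a Berezinian acts by the \emph{supertrace} of the connection matrix: the induced connection form on $\Ber M$ is $-\str\,\Gamma=-\sum(-1)^{|k|}\Gamma^k_{ik}\,dz^i$ in the $(1,0)$-directions, with the $(0,1)$-part governed by the holomorphic structure and the metric compatibility. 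On the dGBV side, the Tian-Todorov identity of Prp. \ref{prpBV} expresses $\Delta$ — and hence $\nabla^{\mathrm{BV}}$ — through the divergence determined by $\omega$, whose connection form is likewise a supertrace expression. The goal is to identify these two supertrace forms; metric compatibility $\nabla^{\mathrm{LC}}g=0$ and the Calabi-Yau (special unitary) reduction force the relevant supertrace of $\Gamma$ to equal the logarithmic derivative of the Berezinian density, which is precisely what enters the divergence operator. Establishing this at the level of connection forms gives $\nabla^{\mathrm{LC}}=\nabla^{\mathrm{BV}}$ on $\Ber M$, and the final clause of the theorem — the explicit formula for $\Delta$ in terms of $\nabla^{\mathrm{LC}}$ — follows by reading the Tian-Todorov identity with the Christoffel symbols substituted in.

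The main obstacle I anticipate is the careful sign and parity bookkeeping in the supertrace computation: the Berezinian's behaviour is genuinely sensitive to the $\bZ_2$-grading, and the identification of the induced connection form with $-\str\,\Gamma$ must be done with the correct Koszul signs, especially where odd coordinates contribute with a sign to the supertrace and where the metric pairs even with odd directions. A secondary difficulty is ensuring that the $\dbar$-part of both connections agrees; here I expect that the holomorphic (Dolbeault) structure on $\Ber M$ fixes the $(0,1)$-component canonically for \emph{both} connections, so that only the $(1,0)$-component, governed by the metric, requires genuine comparison — but this must be verified rather than assumed. Once the supertrace identity is in hand, the remainder is formal, and the uniqueness of parallel sections on a connected base closes the argument.
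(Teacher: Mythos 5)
Your overall strategy is the same as the paper's: two flat connections on the line bundle $\Ber M$ sharing a nonvanishing parallel section must coincide, so it suffices to exhibit the Levi-Civita-parallel trivialising section $\omega$ of Prp.~\ref{prpCYTrivialBer} as parallel for the BV connection as well. Where you diverge is in how you plan to close that last step, and here your plan is over-engineered in a way that hides the actual logic. In the precise form of the statement (Thm.~\ref{thmCYConnections}), the dGBV structure being compared is \emph{by definition} $\Delta=\Delta^{\omega}$ for that same Levi-Civita-parallel $\omega$; so the identity you call the key step --- that the divergence operator attached to $\omega_{\mathrm{LC}}$ is the BV operator $\Delta$ --- is definitional, not something to be extracted from the metric. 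Lem.~\ref{lemDeltaOmega} gives $\Delta(\partial_{\xi^k})=\dd[h]{\xi^k}\,h^{-1}$ for $\omega=h\cdot[d\xi]$, whence $\nabla^{\Delta}_{\partial_{\xi^k}}\omega=\left(\partial_{\xi^k}(h)-h\,\Delta(\partial_{\xi^k})\right)[d\xi]=0$ immediately; comparing this with $\nabla^{\Ber}_{\partial_{\xi^k}}\omega=0$ yields both the equality of connections and the formula $\Delta(\partial_{\xi^k})=\str\left(\jmath\circ\nabla_{\partial_{\xi^k}}\circ\jmath^{-1}\right)$, which is the paper's entire proof.

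Two cautions about the computation you do propose. First, deriving the supertrace identity \emph{from metric compatibility} is off target and risks circularity: metric compatibility alone cannot identify $\str\Gamma$ in the $(1,0)$-directions with the logarithmic derivative of the density of the dGBV-defining section, because for a generic trivialising section $f\cdot\omega$ (with $f$ nonconstant holomorphic) that identity --- and indeed the theorem itself --- is false. The identity is precisely the statement $\nabla^{\Ber}\omega=0$, i.e.\ Prp.~\ref{prpCYTrivialBer}, whose proof goes through the Holonomy Principle (Thm.~\ref{thmHolonomyPrinciple}, via Lem.~\ref{lemSdetParallelTransport}), not through a Christoffel-symbol calculation; any attempt to re-derive it by relating $h$ to $\sdet g$ ends up invoking that same parallelism. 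Second, your worry about the $(0,1)$-parts resolves itself: both connections are constructed with the same complex linearity $\nabla_{JX}=i\cdot\nabla_{X}$ (cf.\ the remark after Lem.~\ref{lemBVConnection}), so each is determined by its values on the $\partial_{\xi^k}$, which is exactly where the comparison is made.
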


Throughout this article, we have tried not to disrupt the main flow of argument
longer than necessary. Mainly, but not only, for this reason
we have compiled an independent appendix, App. \ref{secComplexSupergeometry},
on selected elements of complex supergeometry, with
definitions, conventions and results contained for easy reference in the main text.
Topics include a synopsis on supermanifolds and vector bundles,
definition and properties of the Schouten-Nijenhuis bracket,
parallel transport and superholonomy and, finally, an exposition on
the canonical bundle and integral forms.

\section{Batalin-Vilkovisky Superstructures}
\label{secBV}

This section introduces the general notions of BV and dGBV superalgebras
first, followed by a brief account on algebraic properties
in the cases based on our main example (\ref{eqnMultivectorForms})
of a superalgebra.

The structures presented are meant to be superisations
of their classical counterparts.
We remark that several variants of the latter exist in the literature,
of which we mainly follow \cite{Man99}, \cite{Huy05}.
If not stated otherwise, a superalgebra will be either real or complex,
and the notion of linearity refers to the choice of scalars.

On the superalgebra (\ref{eqnMultivectorForms})
which is supposed to carry a BV structure, one observes that there
are two sorts of $\bZ_2$-gradings involved: The cohomological degree
$q+p$ as well as the parity ('super degree') of objects induced
by the parities of vectors and covectors.
In general, such a situation can be modelled by a \emph{graded superalgebra},
which is a vector space with a bigrading and a correponding multiplication.
Another possibility is to combine the two degrees to a single one
and consider ordinary super (i.e. $\bZ_2$-graded) algebras.
It has been argued in \cite{DM99}
that both approaches are equivalent, although leading to different signs.
They are referred to as
''point of view I'' and ''point of view II'' in that reference.

Throughout this article, we shall consistently adopt the first point of view.
This affects, in particular, the rest of this section but also parts of
the appendix, App. \ref{secComplexSupergeometry}.
We use the following convention: The 'super degree'
of an object $X$ is denoted $\abs{X}\in\bZ_2$, while its cohomological
degree is referred to as $\deg{X}\in\bZ_2$. To give an example,
we state the supercommutativity rule for a graded superalgebra with this notation.
\begin{align*}
a\cdot b=(-1)^{\deg(a)\deg(b)+\abs{a}\abs{b}}b\cdot a
\end{align*}

\begin{Def}
We call a linear map $f$ between graded superalgebras \emph{deg-odd},
if it is odd (i.e. parity-reversing) with respect to the cohomological degree
and even (i.e. parity-preserving) with respect to the super degree.
\end{Def}

We proceed with our main definitions.

\begin{Def}
A \emph{Batalin-Vilkovisky (BV) superalgebra} is a pair $(A,\Delta)$,
where $A$ is a supercommutative graded complex superalgebra,
and $\Delta:A\rightarrow A$ is a deg-odd complex linear map
such that, for every $\alpha\in A$, the map
\begin{align*}
\delta_{\alpha}:A\rightarrow A\;,\qquad
\beta\mapsto(-1)^{\deg(\alpha)}\Delta(\alpha\cdot\beta)-(-1)^{\deg(\alpha)}\Delta(\alpha)\cdot\beta-\alpha\cdot\Delta(\beta)
\end{align*}
is a derivation of degree $\deg(\alpha)+1$ and super degree the same as $\alpha$.
In other words, it satisfies
\begin{align*}
\delta_{\alpha}(\beta\cdot\gamma)=\delta_{\alpha}(\beta)\cdot\gamma+(-1)^{(\deg(\alpha)+1)\deg(\beta)+\abs{\alpha}\abs{\beta}}\beta\cdot\delta_{\alpha}(\gamma)
\end{align*}
\end{Def}

\begin{Def}
We say that a BV superalgebra $(A,\Delta)$ is \emph{compatible} with
a bracket, i.e. with a complex bilinar map
$\scal[[]{\cdot}{\cdot}:A\times A\rightarrow A$
if $\scal[[]{\alpha}{\beta}=-\delta_{\alpha}(\beta)$ for all $\alpha,\beta\in A$.
\end{Def}

We remark that the sign in the previous definition may be replaced by any
constant $c\in\bC^*$ upon redefining $\Delta$ accordingly.
The present choice is consistent with our conventions for the operators
occurring in the generalised
Tian-Todorov lemma, Lem. \ref{lemTianTodorov}, below.

\begin{Def}
A BV superalgebra $(A,\Delta)$ is called \emph{Gerstenhaber-Batalin-Vilkovisky (GBV)}
if $\Delta\circ\Delta=0$.
\end{Def}

The following statement is a straightforward consequence of the axioms.

\begin{Lem}
\label{lemGBVCompatible}
On a GBV superalgebra $(A,\Delta)$ which is compatible with a bracket
$\scal[[]{\cdot}{\cdot}$, the following equation is satisfied.
\begin{align*}
-\Delta(\scal[[]{\alpha}{\beta})=-\scal[[]{\Delta(\alpha)}{\beta}+(-1)^{\deg(\alpha)}\scal[[]{\alpha}{\Delta(\beta)}
\end{align*}
\end{Lem}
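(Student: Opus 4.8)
The plan is to reduce the identity to the defining formula for $\delta_{\alpha}$ together with the two standing hypotheses, namely the GBV condition $\Delta\circ\Delta=0$ and compatibility $\scal[[]{\alpha}{\beta}=-\delta_{\alpha}(\beta)$. First I would rewrite the bracket entirely in terms of $\Delta$ and the product by unfolding the definition of $\delta_{\alpha}$,
\begin{align*}
\scal[[]{\alpha}{\beta}=-(-1)^{\deg(\alpha)}\Delta(\alpha\cdot\beta)+(-1)^{\deg(\alpha)}\Delta(\alpha)\cdot\beta+\alpha\cdot\Delta(\beta),
\end{align*}
then apply $\Delta$ and invoke its linearity. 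The summand involving $\Delta(\Delta(\alpha\cdot\beta))$ vanishes by $\Delta\circ\Delta=0$, so only $\Delta(\Delta(\alpha)\cdot\beta)$ and $\Delta(\alpha\cdot\Delta(\beta))$ remain.

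The key step is to evaluate these two surviving terms by re-applying the defining identity for $\delta$, now with $\Delta(\alpha)$ substituted for $\alpha$ in the first and with $\Delta(\beta)$ substituted for $\beta$ in the second. Since $\Delta$ is deg-odd, one has $\deg(\Delta(\alpha))=\deg(\alpha)+1$, which flips the leading sign prefactor, while $\Delta(\Delta(\alpha))=0$ and $\Delta(\Delta(\beta))=0$ eliminate two further summands in each expansion. A brief inspection shows that each term then contributes a single $\delta$-expression together with one copy of $\Delta(\alpha)\cdot\Delta(\beta)$, and that the two copies carry opposite signs and cancel. What is left is exactly $\Delta(\scal[[]{\alpha}{\beta})=-\delta_{\Delta(\alpha)}(\beta)+(-1)^{\deg(\alpha)}\delta_{\alpha}(\Delta(\beta))$.

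Finally I would translate back to the bracket through compatibility, substituting $\delta_{\Delta(\alpha)}(\beta)=-\scal[[]{\Delta(\alpha)}{\beta}$ and $\delta_{\alpha}(\Delta(\beta))=-\scal[[]{\alpha}{\Delta(\beta)}$, and then multiply through by $-1$ to reach the asserted equation. I expect the only genuine difficulty to be the sign bookkeeping: one must track the degree shift $\deg(\alpha)\mapsto\deg(\alpha)+1$ induced by $\Delta$ consistently at every substitution, and confirm that no spurious super-degree signs appear---which they do not here, as such signs are already absorbed into the definitions of $\delta_{\alpha}$ and of the bracket. Since the whole argument is purely formal, neither the specific superalgebra (\ref{eqnMultivectorForms}) nor any topological hypothesis on $M$ enters.
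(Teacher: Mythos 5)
Your proposal is correct and follows essentially the same route as the paper's proof: both apply $\Delta$ to the compatibility identity, kill the $\Delta^2(\alpha\cdot\beta)$ term, and evaluate the two surviving terms $\Delta(\Delta(\alpha)\cdot\beta)$ and $\Delta(\alpha\cdot\Delta(\beta))$ via the compatibility equation with $\alpha\mapsto\Delta(\alpha)$ and $\beta\mapsto\Delta(\beta)$, using the degree shift $\deg(\Delta(\alpha))=\deg(\alpha)+1$ and the cancellation of the two $\Delta(\alpha)\cdot\Delta(\beta)$ terms. The only difference is the order of operations (the paper records the two substituted identities first, then combines), which is immaterial.
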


\begin{proof}
Using $\Delta^2=0$, the compatibility equation
$\scal[[]{\alpha}{\beta}=-\delta_{\alpha}(\beta)$ with $\alpha$ replaced
by $\Delta(\alpha)$, and with $\beta$ replaced by $\Delta(\beta)$, respectively,
reads as follows.
\begin{align*}
-(-1)^{\deg(\alpha)}\Delta(\Delta(\alpha)\cdot\beta)&=-\scal[[]{\Delta(\alpha)}{\beta}+\Delta(\alpha)\cdot\Delta(\beta)\\
-\Delta(\alpha\cdot\Delta(\beta))&=(-1)^{\deg(\alpha)}\scal[[]{\alpha}{\Delta(\beta)}-\Delta(\alpha)\cdot\Delta(\beta)
\end{align*}
Similarly, applying $\Delta$ on both sides of the compatibility equation,
we obtain
\begin{align*}
-\Delta(\scal[[]{\alpha}{\beta})=-(-1)^{\deg(\alpha)}\Delta(\Delta(\alpha)\cdot\beta)-\Delta(\alpha\cdot\Delta(\beta))
\end{align*}
With the previous formulas, the statement follows.
\end{proof}

\begin{Def}
A \emph{differential Gerstenhaber-Batalin-Vilkovisky (dGBV) superalgebra}
is a triple $(A,\Delta,d)$ such that $(A,\Delta)$ is a GBV algebra
and $d$ is a deg-odd complex linear derivation of degree $1$, i.e.
such that
\begin{align*}
d(\alpha\cdot\beta)=d(\alpha)\cdot\beta+(-1)^{\deg(\alpha)}\alpha\cdot d(\beta)
\end{align*}
and, in addition, such that $d\circ d=0$ and $d\circ\Delta+\Delta\circ d=0$.
\end{Def}

\subsection{The Schouten-Nijenhuis Case}

Having introduced the general definitions, we now consider dGBV superalgebras
of the form
$\left(\Omega^{0,*}(\bigwedge^*\mT^{1,0}M),\Delta,\dbar\right)$
which are compatible with the Schouten-Nijenhuis bracket as defined in
Def. \ref{defSchoutenBracket}.
The latter property we shall simply refer to as \emph{compatible} in the following.
The only datum to be specified in the case at hand is an appropriate operator
$\Delta:\Omega^{0,*}(\bigwedge^*\mT^{1,0}M)\circlearrowleft$.
By the following simple observation, the definition of such an operator
on the entire superalgebra is highly redundant.

\begin{Lem}
\label{lemSchoutenBVGenerators}
Let $\left(\Omega^{0,*}(\bigwedge^*\mT^{1,0}M),\Delta,\dbar\right)$
be a dGBV superalgebra compatible with the Schouten-Nijenhuis bracket.
Then $\Delta$ is uniquely determined by its values on elements
of the following types, for every open subset $U\subseteq M_{\overline{0}}$:
Functions $f\in\mO_M(U)$, tangent vectors
$X\in\mT^{1,0}M(U)$, and forms $\lambda\in\Omega^{0,1}M(U)$.
\end{Lem}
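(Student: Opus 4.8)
The plan is to derive a Leibniz-type product rule for $\Delta$ and combine it with the fact that the three listed types of elements generate the whole superalgebra. Since the Schouten-Nijenhuis bracket is fixed, such a product rule reduces the computation of $\Delta$ on any product to the computation of $\Delta$ on its factors, so that an induction on the number of factors determines $\Delta$ everywhere once its values on generators are known.

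First I would combine the defining relation for $\delta_{\alpha}$ with the compatibility condition $\scal[[]{\alpha}{\beta}=-\delta_{\alpha}(\beta)$. Solving the definition of $\delta_{\alpha}$ for $\Delta(\alpha\cdot\beta)$ and substituting the bracket yields
\begin{align*}
\Delta(\alpha\cdot\beta)=\Delta(\alpha)\cdot\beta+(-1)^{\deg(\alpha)}\alpha\cdot\Delta(\beta)-(-1)^{\deg(\alpha)}\scal[[]{\alpha}{\beta}.
\end{align*}
The crucial point is that the right-hand side involves $\Delta$ only through its values $\Delta(\alpha)$ and $\Delta(\beta)$ on the factors, while the remaining term is the Schouten-Nijenhuis bracket, which is part of the fixed data and is itself governed by a Leibniz rule over the generators.

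Next I would invoke the generation statement: over any open $U\subseteq M_{\overline{0}}$, every element of $\Omega^{0,*}(\bigwedge^*\mT^{1,0}M)(U)$ is a finite sum of products of functions, tangent vectors, and $(0,1)$-forms. Indeed, products of tangent vectors produce the multivector degree $p$, products of $(0,1)$-forms produce the form degree $q$ (their smooth coefficients being absorbed into a single factor), and functions account for the purely $(0,0)$-part. Writing a monomial as $\gamma=\xi\cdot\gamma'$ with $\xi$ a single generator and applying the product rule expresses $\Delta(\gamma)$ through $\Delta(\xi)$, the value $\Delta(\gamma')$ known by induction on the number of factors, and a Schouten-Nijenhuis bracket that expands, via its own Leibniz rule, into brackets of generators. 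By $\bC$-linearity, $\Delta$ is thereby determined on all of $\Omega^{0,*}(\bigwedge^*\mT^{1,0}M)(U)$. Phrased as uniqueness: two operators defining compatible dGBV structures and agreeing on the three types of elements agree on every product, hence coincide.

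The main obstacle, and the point requiring care, is the generation step in the graded-super setting: I must ensure that the smooth (non-holomorphic) coefficients occurring in $(0,q)$-forms and in multivector fields are genuinely captured by the three generator types, and that no $(0,0)$-component with non-holomorphic coefficient is left over. I would also verify that $\Delta$ is a local operator—this follows because the derivation property of each $\delta_{\alpha}$ makes $\Delta$ a differential operator of order at most two, hence compatible with restriction to open sets—so that the per-$U$ argument is legitimate and local agreement forces global agreement. Finally, although the odd directions render $\bigwedge^p\mT^{1,0}M$ nonzero for arbitrarily large $p$, every individual section is still a finite sum of monomials, so the induction on the number of factors terminates in each case.
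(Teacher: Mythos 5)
Your proposal is correct and follows essentially the same route as the paper: rewrite the compatibility condition $\scal[[]{\alpha}{\beta}=-\delta_{\alpha}(\beta)$ as the product rule $\Delta(\alpha\cdot\beta)=-(-1)^{\deg(\alpha)}\scal[[]{\alpha}{\beta}+\Delta(\alpha)\cdot\beta+(-1)^{\deg(\alpha)}\alpha\cdot\Delta(\beta)$, note that locally every section is a polynomial in functions, tangent vectors and $(0,1)$-forms, and conclude by linearity and recursion on monomials. Your additional remarks on locality of $\Delta$ and finiteness of the induction are sensible refinements of points the paper leaves implicit, but they do not change the argument.
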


\begin{proof}
Locally, that is upon restriction to a sufficiently small open subset
$U\subseteq M_{\overline{0}}$,
every element of $\Omega^{0,*}(\bigwedge^*\mT^{1,0}M)$ can be written as a
polynomial in functions, tangent vectors and $(0,1)$-forms.
By linearity, it suffices to consider monomials.
Having specified $\Delta$ restricted to the according subspaces,
the original operator is recursively determined by the
Schouten-Nijenhuis bracket compatibility condition in the following form.
\begin{align*}
\Delta(\alpha\cdot\beta)=-(-1)^{\deg(\alpha)}\scal[[]{\alpha}{\beta}+\Delta(\alpha)\cdot\beta+(-1)^{\deg(\alpha)}\alpha\cdot\Delta(\beta)
\end{align*}
\end{proof}

The following condition is the one arising in Thm. \ref{thmWeakCYBV}.

\begin{Def}
A dGBV superalgebra $\left(\Omega^{0,*}(\bigwedge^*\mT^{1,0}M),\Delta,\dbar\right)$
is called \emph{strongly compatible} (with the Schouten-Nijenhuis bracket) if
it is compatible and $\Delta$ is a direct sum of operators
\begin{align*}
\Delta^{p,q}:\Omega^{0,q}\left(\bigwedge^p\mT^{1,0}M\right)\rightarrow\Omega^{0,q}\left(\bigwedge^{p-1}\mT^{1,0}M\right)
\end{align*}
\end{Def}

In other words, we demand $\Delta$ to treat the $\bZ$-degree (that we otherwise
do not consider) analogous to the Schouten-Nijenhuis bracket.
The characterisation in terms of Lem. \ref{lemSchoutenBVGenerators} reads as follows.

\begin{Cor}
A compatible dGBV superalgebra is strongly compatible if and only if
$\Delta$ has the following properties.
$\Delta(f)=0$ for $f\in\mO_M(U)$ and
$\Delta(X)\in\mO_M(U)$ for $X\in\mT^{1,0}M(U)$ and $\Delta(\lambda)=0$ for
$\lambda\in\Omega^{0,1}M(U)$,
for every open subset $U\subseteq M_{\overline{0}}$.
\end{Cor}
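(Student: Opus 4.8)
The plan is to prove the two implications separately, with the forward (''only if'') direction being an immediate unravelling of the definition and the reverse (''if'') direction requiring an inductive bidegree bookkeeping against the recursion of Lem. \ref{lemSchoutenBVGenerators}.

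For the forward direction, I would assume strong compatibility, so that $\Delta=\bigoplus_{p,q}\Delta^{p,q}$ with each $\Delta^{p,q}$ lowering the multivector degree $p$ by one and preserving the form degree $q$, and then simply record the bidegrees of the three types of generators. A function $f\in\mO_M(U)$ lives in $\Omega^{0,0}(\bigwedge^0\mT^{1,0}M)(U)$, so $\Delta(f)=\Delta^{0,0}(f)\in\Omega^{0,0}(\bigwedge^{-1}\mT^{1,0}M)(U)=0$. A tangent vector $X\in\mT^{1,0}M(U)$ lies in $\Omega^{0,0}(\bigwedge^1\mT^{1,0}M)(U)$, whence $\Delta(X)\in\Omega^{0,0}(\bigwedge^0\mT^{1,0}M)(U)=\mO_M(U)$. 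Finally $\lambda\in\Omega^{0,1}M(U)$ sits in $\Omega^{0,1}(\bigwedge^0\mT^{1,0}M)(U)$, so $\Delta(\lambda)\in\Omega^{0,1}(\bigwedge^{-1}\mT^{1,0}M)(U)=0$. These are exactly the three asserted conditions.

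For the reverse direction I would assume the three conditions and show that $\Delta$ maps $\Omega^{0,q}(\bigwedge^p\mT^{1,0}M)$ into $\Omega^{0,q}(\bigwedge^{p-1}\mT^{1,0}M)$ for all $p,q$, which is equivalent to the claimed direct-sum decomposition. By Lem. \ref{lemSchoutenBVGenerators} it suffices to check this locally on a suitable $U$ for monomials in the generators $f$, $X$, $\lambda$, and I would induct on the number of generators. The base case is precisely the three hypotheses, which do place $\Delta(f)$, $\Delta(X)$, $\Delta(\lambda)$ in the correct target spaces. For the inductive step I would split a monomial $\gamma$ of bidegree $(p,q)$ as a product $\gamma=\alpha\cdot\beta$ of two shorter monomials, of bidegrees $(p_1,q_1)$ and $(p_2,q_2)$ with $p_1+p_2=p$ and $q_1+q_2=q$, and feed this into the recursion
\begin{align*}
\Delta(\alpha\cdot\beta)=-(-1)^{\deg(\alpha)}\scal[[]{\alpha}{\beta}+\Delta(\alpha)\cdot\beta+(-1)^{\deg(\alpha)}\alpha\cdot\Delta(\beta).
\end{align*}
By the inductive hypothesis $\Delta(\alpha)$ and $\Delta(\beta)$ have bidegrees $(p_1-1,q_1)$ and $(p_2-1,q_2)$, so, multiplication being additive on bidegrees, both product terms land in $\Omega^{0,q}(\bigwedge^{p-1}\mT^{1,0}M)$.

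The crux, and the one point I would verify with care, is the bidegree behaviour of the bracket term: I need that $\scal[[]{\cdot}{\cdot}$ preserves the total form degree $q$ and lowers the total multivector degree by exactly one, so that $\scal[[]{\alpha}{\beta}\in\Omega^{0,q}(\bigwedge^{p-1}\mT^{1,0}M)$ as well. This is built into the definition of the Schouten-Nijenhuis bracket (Def. \ref{defSchoutenBracket}), which extends the vector field bracket $\bigwedge^1\times\bigwedge^1\to\bigwedge^1$ and acts on the $(0,*)$-form part by wedging; I would cite it rather than reprove it. Granting this, all three summands lie in $\Omega^{0,q}(\bigwedge^{p-1}\mT^{1,0}M)$, hence so does $\Delta(\gamma)$, closing the induction. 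By linearity $\Delta$ then respects the bigrading with the stated shift, i.e. it is the direct sum of the $\Delta^{p,q}$, which is strong compatibility.
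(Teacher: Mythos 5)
Your proposal is correct and follows the same route the paper intends: the paper states this Corollary without proof, treating it as immediate from the definition of strong compatibility, the recursion of Lem.~\ref{lemSchoutenBVGenerators}, and the bidegree behaviour of the Schouten-Nijenhuis bracket (Def.~\ref{defSchoutenBracket}) and exterior product (Def.~\ref{defExteriorProduct}), which is precisely what your forward bidegree check and reverse induction make explicit.
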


As to be elaborated now, the operator $\Delta$ of a compatible dGBV superalgebra
can be projected such as to give rise to a strongly compatible dGBV superalgebra.
Let $\Delta:\Omega^{0,*}(\bigwedge^*\mT^{1,0}M)\circlearrowleft$
be a complex linear operator. We let
$\tilde{\Delta}:\Omega^{0,*}(\bigwedge^*\mT^{1,0}M)\circlearrowleft$
denote the direct sum of projections
\begin{align*}
\tilde{\Delta}^{p,q}:=\Pi_{p-1,q}\circ\Delta|_{\Omega^{0,q}(\bigwedge^p\mT^{1,0}M)}\;,\qquad
\Pi_{p-1,q}:=\Pi_{\Omega^{0,q}(\bigwedge^{p-1}\mT^{1,0}M)}
\end{align*}
corresponding to the decomposition (\ref{eqnMultivectorForms}).

\begin{Lem}
Let $\left(\Omega^{0,*}(\bigwedge^*\mT^{1,0}M),\Delta,\dbar\right)$
be a compatible dGBV superalgebra. Then
$\left(\Omega^{0,*}(\bigwedge^*\mT^{1,0}M),\tilde{\Delta},\dbar\right)$
is a strongly compatible dGBV superalgebra.
\end{Lem}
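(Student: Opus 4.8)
The plan is to verify the defining properties of a strongly compatible dGBV superalgebra for the projected operator $\tilde{\Delta}$ one by one, exploiting that $\Omega^{0,*}(\bigwedge^*\mT^{1,0}M)=\bigoplus_{p,q}\Omega^{0,q}(\bigwedge^p\mT^{1,0}M)$ carries a genuine $\bZ\times\bZ$-bigrading by the pair $(p,q)$ with respect to which every relevant operation is homogeneous. First I would record the bidegrees: the wedge product is of bidegree $(0,0)$, the Schouten-Nijenhuis bracket is of bidegree $(-1,0)$ (it maps $\Omega^{0,q}(\bigwedge^p\mT^{1,0}M)\times\Omega^{0,q'}(\bigwedge^{p'}\mT^{1,0}M)$ into $\Omega^{0,q+q'}(\bigwedge^{p+p'-1}\mT^{1,0}M)$), and $\dbar$ is of bidegree $(0,1)$. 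I would then decompose the given operator into bihomogeneous pieces $\Delta=\sum_{a,b}\Delta_{a,b}$, where $\Delta_{a,b}$ shifts $(p,q)$ by $(a,b)$ and $a+b$ is odd since $\Delta$ is deg-odd; by construction $\tilde{\Delta}=\Delta_{-1,0}$, which is therefore again deg-odd and complex linear, while $\dbar$ (hence $\dbar\circ\dbar=0$ and the derivation property of $\dbar$) is left untouched.

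The central observation is that the associated operator decomposes accordingly, $\delta^{\Delta}_{\alpha}=\sum_{a,b}\delta^{\Delta_{a,b}}_{\alpha}$, with $\delta^{\Delta_{a,b}}_{\alpha}$ of bidegree $(a,b)$. Since compatibility reads $\scal[[]{\alpha}{\beta}=-\delta^{\Delta}_{\alpha}(\beta)$ and the left-hand side is purely of bidegree $(-1,0)$, comparing bihomogeneous components yields two conclusions simultaneously: the $(-1,0)$-part gives $\scal[[]{\alpha}{\beta}=-\delta^{\tilde{\Delta}}_{\alpha}(\beta)$, so $\tilde{\Delta}$ is again compatible, while every other part satisfies $\delta^{\Delta_{a,b}}_{\alpha}\equiv0$, that is each $\Delta_{a,b}$ with $(a,b)\neq(-1,0)$ is a derivation, whence $D:=\Delta-\tilde{\Delta}$ is a deg-odd derivation. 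From the compatibility of $\tilde{\Delta}$ and the Leibniz property of the Schouten-Nijenhuis bracket from Def.~\ref{defSchoutenBracket}, the map $\delta^{\tilde{\Delta}}_{\alpha}=-\scal[[]{\alpha}{\cdot}$ is a derivation of the degree and super degree required by the BV axioms, so $(\Omega^{0,*}(\bigwedge^*\mT^{1,0}M),\tilde{\Delta})$ is a BV superalgebra; strong compatibility holds by the very definition of $\tilde{\Delta}$ as the direct sum of the projections $\Pi_{p-1,q}\circ\Delta$. The relation $\dbar\circ\tilde{\Delta}+\tilde{\Delta}\circ\dbar=0$ is obtained by the same bookkeeping: extracting the bidegree $(-1,1)$ component of $\dbar\circ\Delta+\Delta\circ\dbar=0$, the cross terms $\dbar\circ\Delta_{a,b}+\Delta_{a,b}\circ\dbar$ are of bidegree $(a,b+1)$ and hence never hit $(-1,1)$ for $(a,b)\neq(-1,0)$.

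The main obstacle is the GBV condition $\tilde{\Delta}\circ\tilde{\Delta}=0$, which does not follow from $\Delta\circ\Delta=0$ by naive projection, since a priori the bidegree $(-2,0)$ part of $\Delta^2$ could mix $\tilde{\Delta}^2$ with cross contributions. I would resolve this by extracting the bidegree $(-2,0)$ component of $\Delta\circ\Delta=0$ and analysing which products can survive. Writing $\Delta=\tilde{\Delta}+D$, the mixed terms $\tilde{\Delta}\circ D+D\circ\tilde{\Delta}$ split into pieces of bidegree $(a-1,b)$, equal to $(-2,0)$ only for $(a,b)=(-1,0)$, a bidegree absent from $D$, so they do not contribute. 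For $D\circ D$ the key point is that a nonzero bihomogeneous derivation must have $a\geq-1$: being determined by its values on the generators (functions, holomorphic vector fields, and $(0,1)$-forms, by Lem.~\ref{lemSchoutenBVGenerators}), a shift $a\leq-2$ would land these in $\bigwedge^{p}\mT^{1,0}M$ with $p<0$, which vanishes. Hence a $(-2,0)$-contribution to $D\circ D$ would force $a=a'=-1$ and $b'=-b$ with $b\neq0$; but the same generator analysis shows $\Delta_{-1,b}$ can be nonzero only for $b\geq0$ (its sole possibly nonzero value is on a vector field, landing in $\Omega^{0,b}M$), so in each product $\Delta_{-1,b}\circ\Delta_{-1,-b}$ one factor vanishes identically. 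Therefore the bidegree $(-2,0)$ part of $\Delta^2$ equals $\tilde{\Delta}^2$, and $\tilde{\Delta}\circ\tilde{\Delta}=0$ follows. As a consistency cross-check one notes $\tilde{\Delta}f=0$, $\tilde{\Delta}X\in\mO_M$ and $\tilde{\Delta}\lambda=0$, so that $\tilde{\Delta}^2$ indeed vanishes on all generators, in agreement with the conclusion. This completes the verification that $(\Omega^{0,*}(\bigwedge^*\mT^{1,0}M),\tilde{\Delta},\dbar)$ is a strongly compatible dGBV superalgebra.
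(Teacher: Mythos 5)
Your proof is correct, and for the essential step it takes a genuinely different route from the paper's. The two easy properties are handled the same way in both proofs: compatibility of $\tilde{\Delta}$ and the relation $\dbar\circ\tilde{\Delta}+\tilde{\Delta}\circ\dbar=0$ follow by comparing bidegree components (equivalently, applying the projections $\Pi_{p,q}$) in the corresponding identities for $\Delta$. The difference lies in the proof of $\tilde{\Delta}\circ\tilde{\Delta}=0$. The paper argues by induction on the multivector degree $s$: it factors $\alpha=\beta\cdot\gamma$ locally with $\beta\in\Omega^{0,r}(\bigwedge^1\mT^{1,0}M)$ and $\gamma\in\Omega^{0,0}(\bigwedge^{s-1}\mT^{1,0}M)$, pushes $\tilde{\Delta}^2(\alpha)$ through the compatibility relation (using the induction hypothesis to discard $\tilde{\Delta}^2(\gamma)$), and recognises the outcome as a projection of $\Delta^2(\alpha)=0$. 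You instead decompose $\Delta=\sum_{a,b}\Delta_{a,b}$ into bihomogeneous pieces, observe that comparing bidegrees in the compatibility equation forces every component other than $\tilde{\Delta}=\Delta_{-1,0}$ to be a deg-odd derivation, and then eliminate all potential $(-2,0)$-contributions of $D=\Delta-\tilde{\Delta}$ to $\Delta^2$ by showing that the derivations whose bidegrees would be needed (those with $a\leq-2$, or with $a=-1$ and $b<0$) vanish on the local generators and hence identically. The one cosmetic blemish is your citation: what you invoke is not the statement of Lem.~\ref{lemSchoutenBVGenerators} (which concerns the bracket recursion for a compatible $\Delta$) but the principle underlying its proof, namely that sections are locally polynomials in functions, vector fields and $(0,1)$-forms, so that a derivation vanishing on these generators vanishes; note the paper's induction rests on exactly the same locality principle through the factorisation $\alpha=\beta\cdot\gamma$, so the two arguments have the same foundations. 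What your route buys is structural insight: it describes the whole error term $\Delta-\tilde{\Delta}$ as a sum of bihomogeneous derivations confined to bidegrees with $a\geq-1$ (and $b\geq0$ when $a=-1$), which explains conceptually why the projection cannot destroy the GBV property; the paper's induction is more computational but self-contained, requiring no analysis of the structure of $\Delta-\tilde{\Delta}$.
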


\begin{proof}
The statement will be clear from the following three properties of the
operator $\tilde{\Delta}$ that we shall establish subsequently:
It is compatible with the Schouten-Nijenhuis bracket, anticommutes with $\dbar$ and
squares to zero.

Let $\alpha\in\Omega^{0,r}(\bigwedge^s\mT^{1,0}M)$ and $\beta\in\Omega^{0,q}(\bigwedge^p\mT^{1,0}M)$. Applying $\Pi_{s+p-1,r+q}$ to
both sides of the equation of Schouten-Nijenhuis compatibility with respect to
$\Delta$, we obtain
\begin{align*}
\scal[[]{\alpha}{\beta}=(-1)^{\deg(\alpha)}\tilde{\Delta}(\alpha\cdot\beta)
-(-1)^{\deg(\alpha)}\Pi_{s+p-1,r+q}(\Delta(\alpha)\cdot\beta)-\Pi_{s+p-1,r+q}(\alpha\cdot\Delta(\beta))
\end{align*}
from which compatibility of $\tilde{\Delta}$ is clear.

The equation $\dbar\circ\tilde{\Delta}+\tilde{\Delta}\circ\dbar=0$ follows from
the corresponding property of $\Delta$ by an analogous calculation.

It remains to show $\tilde{\Delta}^2=0$. Obviously, this is true upon
application to $\alpha\in\Omega^{0,r}M=\Omega^{0,r}(\bigwedge^0\mT^{1,0}M)$
since, for such $\alpha$, already $\tilde{\Delta}(\alpha)=0$ holds by
definition of $\tilde{\Delta}$. For $\alpha\in\Omega^{0,r}(\bigwedge^1\mT^{1,0}M)$,
we have that $\tilde{\Delta}(\alpha)\in\Omega^{0,r}M$ and, therefore,
also $\tilde{\Delta}^2(\alpha)=0$.
By induction on $s$, we now assume that $\tilde{\Delta}^2(\alpha)=0$
for $\alpha\in\Omega^{0,r}(\bigwedge^{s-1}\mT^{1,0}M)$ with $r\in\bN$ arbitrary.
Let $\alpha\in\Omega^{0,r}(\bigwedge^s\mT^{1,0}M)$. Locally we may write
$\alpha=\beta\cdot\gamma$ with $\beta\in\Omega^{0,r}(\bigwedge^1\mT^{1,0}M)$
and $\gamma\in\Omega^{0,0}(\bigwedge^{s-1}\mT^{1,0}M)$. Using Schouten-Nijenhuis compatibility,
we calculate
\begin{align*}
\tilde{\Delta}^2(\alpha)&=\tilde{\Delta}\left((-1)^{\deg(\beta)}\scal[[]{\beta}{\gamma}+\tilde{\Delta}(\beta)\cdot\gamma+(-1)^{\deg(\beta)}\beta\cdot\tilde{\Delta}(\gamma)\right)\\
&=(-1)^{\deg(\beta)}\tilde{\Delta}\scal[[]{\beta}{\gamma}+(-1)^{\deg(\beta)-1}\scal[[]{\tilde{\Delta}(\beta)}{\gamma}+(-1)^{\deg(\beta)-1}\tilde{\Delta}(\beta)\cdot\tilde{\Delta}(\gamma)\\
&\qquad+\scal[[]{\beta}{\tilde{\Delta}(\gamma)}+(-1)^{\deg(\beta)}\tilde{\Delta}(\beta)\cdot\tilde{\Delta}(\gamma)\\
&=(-1)^{\deg(\beta)}\tilde{\Delta}\scal[[]{\beta}{\gamma}+(-1)^{\deg(\beta)-1}\scal[[]{\tilde{\Delta}(\beta)}{\gamma}+\scal[[]{\beta}{\tilde{\Delta}(\gamma)}\\
&=\Pi_{s-1,r}\left((-1)^{\deg(\beta)}\Delta\scal[[]{\beta}{\gamma}+(-1)^{\deg(\beta)-1}\scal[[]{\Delta(\beta)}{\gamma}+\scal[[]{\beta}{\Delta(\gamma)}\right)
\end{align*}
Now, performing the first two steps in this calculation in reverse order,
with $\tilde{\Delta}$ replaced by $\Delta$, we find
\begin{align*}
\tilde{\Delta}^2(\alpha)=\Pi_{s-1,r}\circ\Delta^2(\alpha)=0
\end{align*}
which was to be shown.
\end{proof}

\section{A Generalised Tian-Todorov Formula}
\label{secTianTodorov}

This section establishes one direction of Thm. \ref{thmWeakCYBV} by means of
a generalised Tian-Todorov lemma.
We shall use a suitable generalisation of
the classical operator $\partial:\Omega^{p,q}\rightarrow\Omega^{p+1,q}$
(with $q=0$) on a complex manifold. This operator acts on integral
forms $\partial:I^{n-p}\rightarrow I^{n-p+1}$ on $M$
rather than on differential forms and, modulo some
identifications, was already studied in \cite{Man88}.
We refer to the appendix, Sec. \ref{secCanonicalBundle}, for details,
and be very brief at this point.
A local formula, in terms of coordinates $(\xi^k)$, reads as follows.
\begin{align*}
&\partial\left(f\cdot\dd{\xi^1}\wedge\ldots\wedge\dd{\xi^p}\otimes[d\xi]\right)\\
&\qquad\qquad:=\sum_{i=1}^{n+m}(-1)^{M_i}\dd[f]{\xi^i}\cdot\left(\dd{\xi^1}\wedge\ldots\wedge\dd{\xi^{i-1}}\wedge\widehat{\dd{\xi^i}}\wedge\dd{\xi^{i+1}}\wedge\ldots\wedge\dd{\xi^p}\right)\otimes[d\xi]
\end{align*}
where $M_i$ is the sign arising from moving $\dd{\xi^i}$ to the front.
We shall denote the extension (\ref{eqnPartial}) to an operator
\begin{align*}
\partial:\Omega^{0,q}\left(I^{n-p}\right)=\Omega^{0,q}M\otimes I^{n-p}\longrightarrow\Omega^{0,q}M\otimes I^{n-p+1}=\Omega^{0,q}\left(I^{n-p+1}\right)
\end{align*}
by the same symbol.

For the rest of this section, we assume that $\Ber M$ is trivial,
and we fix a global trivialising section $\omega\in\Ber M$.
Every $\alpha\in\Ber M$ is then uniquely determined by a unique superfunction
$f\in\mO_M$ such that $\alpha=f\cdot\omega$.
Without loss of generality, $\omega$ may (and will) be chosen homogeneous, that is
of parity $\abs{\omega}=(\dim M)_{\overline{1}}$.
Locally, with respect to coordinates $(\xi^k)$, we can write $\omega=h\cdot[d\xi]$
with $h$ a local, even, invertible and holomorphic superfunction.
Moreover, a choice of $\omega$ defines an isomorphism (even or odd)
\begin{align*}
\eta:\bigwedge^p\mT^{1,0}M\rightarrow I^{n-p}\;,\qquad\eta(v_1\wedge\ldots\wedge v_p):=(v_1\wedge\ldots\wedge v_p)\otimes\omega
\end{align*}
The map $\eta$ induces canonical isomorphisms, denoted by the same symbol:
\begin{align*}
\eta:\Omega^{0,q}\left(\bigwedge^p\mT^{1,0}M\right)
\longrightarrow\Omega^{0,q}\left(I^{n-p}\right)
\end{align*}
On both sides, there is the $\dbar$-operator (\ref{eqnDbar}). By construction,
it acts only on the form part, whence we obtain
\begin{align}
\label{eqnEtaDbar}
\eta\circ\dbar=\dbar\circ\eta
\end{align}

\begin{Def}
\label{defDelta}
We define the operator $\Delta^{\omega}$ as follows.
\begin{align*}
\Delta^{\omega}:\Omega^{0,q}\left(\bigwedge^p\mT^{1,0}M\right)\rightarrow\Omega^{0,q}\left(\bigwedge^{p-1}\mT^{1,0}M\right)
\;,\qquad\Delta^{\omega}:=\eta^{-1}\circ\partial\circ\eta
\end{align*}
\end{Def}

\begin{Lem}
\label{lemDeltaOmega}
The operator $\Delta^{\omega}$ satisfies the following properties,
for every open subset $U\subseteq M_{\overline{0}}$.
It vanishes on functions $f\in\mO_M(U)$ and forms $\lambda\in\Omega^{0,1}M(U)$.
Applied to vectors $X\in\mT^{1,0}M(U)$, it takes values in functions
$\Delta^{\omega}(X)\in\mO_M(U)$.
The local formula reads $\Delta^{\omega}(\partial_{\xi^k})=\dd[h]{\xi^k}h^{-1}$,
where $\omega=h\cdot[d\xi]$.
\end{Lem}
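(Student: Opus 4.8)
The plan is to exploit the fact that, by its very definition $\Delta^{\omega}=\eta^{-1}\circ\partial\circ\eta$ in Def.~\ref{defDelta} together with the behaviour of $\partial$ on integral-form degrees, the operator lowers the multivector degree $p$ by one. Most of the assertions then follow from a bookkeeping of degrees, and only the final local formula requires an explicit computation.

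First I would recall that $\Delta^{\omega}$ restricts to maps $\Omega^{0,q}(\bigwedge^p\mT^{1,0}M)\rightarrow\Omega^{0,q}(\bigwedge^{p-1}\mT^{1,0}M)$. A function $f\in\mO_M(U)$ and a $(0,1)$-form $\lambda\in\Omega^{0,1}M(U)$ both sit in the $p=0$ component $\Omega^{0,q}(\bigwedge^0\mT^{1,0}M)$ (with $q=0$ and $q=1$, respectively). Since $\bigwedge^{-1}\mT^{1,0}M=0$, the image is zero, which yields the vanishing on functions and on $(0,1)$-forms. A tangent vector $X\in\mT^{1,0}M(U)$ lies in the $p=1$ component, so $\Delta^{\omega}(X)\in\Omega^{0,0}(\bigwedge^0\mT^{1,0}M)=\mO_M(U)$, establishing that $\Delta^{\omega}$ sends vectors to functions.

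It then remains to compute $\Delta^{\omega}(\dd{\xi^k})$ in coordinates. Writing $\omega=h\cdot[d\xi]$ with $h$ even, invertible and holomorphic, and absorbing the even coefficient into the density, one has $\eta(\dd{\xi^k})=\dd{\xi^k}\otimes\omega=h\cdot\dd{\xi^k}\otimes[d\xi]$. Applying the local formula for $\partial$ to this single-vector monomial, the only surviving term in the sum is the one differentiating $h$ along $\xi^k$ and deleting the unique factor $\dd{\xi^k}$, whose sign $M_k$ is trivial as nothing must be moved past it; hence $\partial(\eta(\dd{\xi^k}))=\dd[h]{\xi^k}\otimes[d\xi]$. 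Finally, applying $\eta^{-1}$, which on the $p=0$ component identifies $g\otimes\omega=(gh)\otimes[d\xi]$ with the function $g$, I would solve $gh=\dd[h]{\xi^k}$ to obtain $\Delta^{\omega}(\dd{\xi^k})=\dd[h]{\xi^k}\cdot h^{-1}$.

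The part needing the most care is this last computation: one must track the parity of $\dd[h]{\xi^k}$ when $\xi^k$ is an odd coordinate, verify that moving the even invertible function $h$ across the tensor product and through $\eta$, $\eta^{-1}$ introduces no spurious signs (which holds since even elements are central), and confirm that the sign $M_i$ contributes only the trivial term. The well-definedness of $\partial$, i.e.\ its independence of the chosen coordinates, is supplied by the appendix and may be taken for granted here.
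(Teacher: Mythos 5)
Your proposal is correct and follows essentially the same route as the paper: the vanishing and degree statements are read off directly from the definition of $\partial$ (and $\eta$), since $\Delta^{\omega}$ lowers the multivector degree $p$ by one and $\bigwedge^{-1}\mT^{1,0}M=0$, and the local formula comes from the same one-line computation $\eta^{-1}\partial\left(h\cdot\partial_{\xi^k}\otimes[d\xi]\right)=\eta^{-1}\left(\dd[h]{\xi^k}\cdot[d\xi]\right)=\dd[h]{\xi^k}h^{-1}$, with the sign $M_k$ trivial because $h$ is even. Your additional care about parities and the centrality of the even function $h$ is exactly the right bookkeeping, though the paper leaves it implicit.
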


\begin{proof}
This statement is clear by the definition of $\partial$ (and $\eta$),
with the calculation for coordinate vector fields as follows.
\begin{align*}
\Delta^{\omega}(\partial_{\xi^k})
=\eta^{-1}\partial\left(h\cdot\partial_{\xi^k}\otimes[d\xi]\right)
=\eta^{-1}\left(\dd[h]{\xi^k}\cdot[d\xi]\right)
=\dd[h]{\xi^k}h^{-1}
\end{align*}
\end{proof}

The identity of the following result was stated in \cite{BK98}
for the case of a classical complex manifold with trivial canonical bundle.
This, in turn, generalises a lemma proved independently by both
Tian \cite{Tia87} and Todorov \cite{Tod89}.
We will, therefore, refer to the following supergeometric generalisation
also as ''generalised Tian-Todorov lemma''.

\begin{Lem}[Tian-Todorov]
\label{lemTianTodorov}
Let $\alpha\in\Omega^{0,p}(\bigwedge^s\mT^{1,0}M)$ and $\beta\in\Omega^{0,q}(\bigwedge^r\mT^{1,0}M)$. Then the operator $\Delta^{\omega}$ just defined is
compatible with the Schouten-Nijenhuis bracket $\scal[[]{\cdot}{\cdot}$
in the following sense.
\begin{align*}
-\scal[[]{\alpha}{\beta}=(-1)^{\deg(\alpha)}\Delta^{\omega}(\alpha\wedge\beta)-(-1)^{\deg(\alpha)}\Delta^{\omega}(\alpha)\wedge\beta
-\alpha\wedge\Delta^{\omega}(\beta)
\end{align*}
\end{Lem}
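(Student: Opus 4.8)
The plan is to reduce the asserted identity to a local, generator-level computation and then to propagate it by the derivation properties of the two operations involved. Since $\Delta^{\omega}=\eta^{-1}\circ\partial\circ\eta$ is built from the local operator $\partial$ and the Schouten-Nijenhuis bracket of Def.~\ref{defSchoutenBracket} is bidifferential, no topological input is needed and it suffices to argue on a coordinate neighbourhood $U\subseteq M_{\overline{0}}$ with coordinates $(\xi^k)$ and $\omega=h\cdot[d\xi]$. As recalled in the proof of Lem.~\ref{lemSchoutenBVGenerators}, every local section of $\Omega^{0,*}(\bigwedge^*\mT^{1,0}M)$ is a polynomial in functions $f\in\mO_M(U)$, tangent vectors $X\in\mT^{1,0}M(U)$ and $(0,1)$-forms $\lambda\in\Omega^{0,1}M(U)$. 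By complex bilinearity I would therefore first verify the identity when $\alpha$ and $\beta$ each range over these three types, and then extend to arbitrary monomials.

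For the extension step the key structural observation is that the defining local formula for $\partial$ exhibits it as a \emph{first-order} operator in the coefficient function, so that $\Delta^{\omega}$ is a differential operator of order at most two with respect to the wedge product; its second-order part is exactly the Schouten-Nijenhuis bracket, as the computation $h^{-1}\partial_{\xi^i}(fh)=\partial_{\xi^i}f+f\cdot\Delta^{\omega}(\partial_{\xi^i})$ already suggests. Concretely, I would phrase the right-hand side as $-\delta_{\alpha}(\beta)$ with $\Delta=\Delta^{\omega}$; the order-two property guarantees that $\delta_{\alpha}$ is a graded derivation in $\beta$, which matches the Leibniz rule of $\scal[[]{\alpha}{\cdot}$ on the left. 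Hence the defect between the two sides is, for fixed generating $\alpha$, a derivation in $\beta$, and the symmetric argument handles the dependence on $\alpha$; a derivation vanishing on algebra generators vanishes identically, so the reduction to the generator cases is complete.

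The cases involving a $(0,1)$-form are immediate: by Lem.~\ref{lemDeltaOmega} the operator $\Delta^{\omega}$ annihilates both functions and $(0,1)$-forms, and such forms are central for the Schouten-Nijenhuis bracket, so the identity collapses. The genuinely substantial case is $\alpha=X$, $\beta=Y$ with $X,Y$ tangent vectors; inserting $\deg(X)=1$, the claim becomes the classical statement that the divergence operator attached to $\omega$ generates the vector-field bracket, namely $\scal[[]{X}{Y}=\Delta^{\omega}(X\wedge Y)-\Delta^{\omega}(X)\wedge Y+X\wedge\Delta^{\omega}(Y)$ up to the prescribed signs. I would check this directly for coordinate fields $X=\partial_{\xi^i}$, $Y=\partial_{\xi^j}$ using the local formula for $\partial$ together with $\Delta^{\omega}(\partial_{\xi^k})=(\partial h/\partial\xi^k)h^{-1}$ from Lem.~\ref{lemDeltaOmega}, and then pass to general $X,Y$ by the function-linearity established in the mixed function/vector cases.

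I expect the main obstacle to be sign coherence rather than conceptual difficulty. The two independent $\bZ_2$-gradings --- the cohomological degree $\deg$ and the super degree $\abs{\cdot}$ --- cause every Koszul sign to factor into two, and the reduction to the vector-field case must track both of them through the local formula for $\partial$, whose signs $M_i$ already encode a Koszul rule on the multivector part. In particular, care is needed so that the odd wedge-degree shift built into $\partial$ interacts correctly with the parities of the coordinates $\xi^k$, ensuring that the signs emerging from the order-two expansion agree exactly with those dictated by the Schouten-Nijenhuis bracket in the conventions of point of view~I.
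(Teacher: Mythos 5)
Your strategy is genuinely different from the paper's: you propose a Koszul-style reduction (establish that $\Delta^{\omega}$ is a second-order operator, check the identity on generators, propagate by the derivation property of both sides), whereas the paper first writes out $\Delta^{\omega}(f\,\partial_{\xi^1}\wedge\cdots\wedge\partial_{\xi^r})$ explicitly, verifies the identity by a direct local computation in the case $p=q=0$, and then deduces the general case by tensoring on the $(0,q)$-form part. Your route could be made to work, but as written it has two gaps, one of which is a concrete error.

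The error: $(0,1)$-forms are \emph{not} central for the Schouten-Nijenhuis bracket. In Def.~\ref{defSchoutenBracket} the coefficient function of a monomial is bracketed together with its multivector part, so for $\lambda=f\,d\oxi^k$ and $Y=\partial_{\xi^j}$ one gets $\scal[[]{\lambda}{Y}=\pm\,\partial_{\xi^j}(f)\,d\oxi^k$, which is nonzero whenever $f$ is non-constant. Correspondingly, the right-hand side of the identity does not collapse in this case either: one computes $\Delta^{\omega}(\lambda\wedge Y)=\pm\,h^{-1}\partial_{\xi^j}(fh)\,d\oxi^k$ and $\lambda\wedge\Delta^{\omega}(Y)=\pm\,f\,\partial_{\xi^j}(h)h^{-1}\,d\oxi^k$, and the identity holds only because the $h$-dependent terms cancel, leaving exactly $\pm\,\partial_{\xi^j}(f)\,d\oxi^k$. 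So the (form, vector) generator case carries real content --- it is precisely where the first-order behaviour of $\partial$ in the coefficient manifests itself --- and your argument dismisses it on a false premise. This is repairable: take as generators the coordinate differentials $d\oxi^k$ (unit coefficient), which are indeed bracket-central and annihilated by $\Delta^{\omega}$, together with functions and coordinate vector fields; or simply carry out the short computation above.

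The more serious gap: the pivotal claim that $\delta_{\alpha}$ formed from $\Delta^{\omega}$ is a graded derivation in $\beta$ --- equivalently, that $\Delta^{\omega}$ has order at most two with respect to the wedge product --- is asserted, not proven. ``First order in the coefficient function'' does not by itself give this: one must also control the commutators of $\Delta^{\omega}$ with left wedge multiplication by vector fields, which is exactly where the vector-removal structure of $\partial$ and its Koszul signs $M_i$ enter, and checking that all iterated triple commutators vanish (in both gradings, in the conventions of point of view I) is a local computation of essentially the same length and nature as the paper's direct verification for $p=q=0$. In other words, the main work of the lemma has been deferred into an unproven structural assertion rather than carried out. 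Note also that running your ``symmetric argument in $\alpha$'' requires the graded symmetry of the $\Delta^{\omega}$-bracket matching Lem.~\ref{lemSchoutenSymmetry}; this is an easy consequence of supercommutativity, but it should be stated, since the reduction from (arbitrary, arbitrary) to (generator, arbitrary) pairs rests on it together with Lem.~\ref{lemSchoutenDerivation}.
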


The Schouten-Nijenhuis bracket in the present context is defined in
Def. \ref{defSchoutenBracket} below, while
$\deg(\alpha)=p+s$ denotes the cohomological degree of $\alpha$ as in
Sec. \ref{secBV}.

\begin{proof}
The local formula of Lem. \ref{lemDeltaOmega}
for $\Delta$ applied to tangent vectors generalises as follows.
\begin{align*}
&\Delta(f\cdot\partial_{\xi^1}\wedge\ldots\wedge\partial_{\xi^r})\\
&\qquad=\sum_i(-1)^{(i-1)+\abs{\xi^i}(\abs{f}+\sum_{l=1}^{i-1}\abs{\xi^l})}\dd[(f\cdot h)]{\xi^i}h^{-1}(\partial_{\xi^1}\wedge\ldots\wedge\widehat{\partial_{\xi^i}}\wedge\ldots\wedge\partial_{\xi^r})
\end{align*}
As usual, the hat symbol means omission.
The Tian-Todorov formula, for the case $p=q=0$, is established from
this formula and the definition of the Schouten-Nijenhuis bracket,
Def. \ref{defSchoutenBracket}, in a lengthy but straightforward calculation.

In the second step, we observe that the general case can be deduced from
the case $p=q=0$ already established. Again, this is a direct calculation
in local coordinates, unwinding the definitions.
\end{proof}

\begin{Prp}
\label{prpBV}
The triple $\left(\Omega^{0,*}(\bigwedge^*\mT^{1,0}M),\Delta^{\omega},\dbar\right)$
is a dGBV superalgebra strongly compatible with the Schouten-Nijenhuis bracket.
\end{Prp}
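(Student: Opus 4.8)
The plan is to verify the three defining properties of a strongly compatible dGBV superalgebra for the triple $\left(\Omega^{0,*}(\bigwedge^*\mT^{1,0}M),\Delta^{\omega},\dbar\right)$, drawing on the results already established. Recall that a strongly compatible dGBV structure requires: (i) $(A,\Delta^{\omega})$ is a GBV superalgebra, i.e. $\Delta^{\omega}$ is deg-odd and satisfies $\Delta^{\omega}\circ\Delta^{\omega}=0$, with the associated $\delta_{\alpha}$ being a derivation of the correct degree and parity; (ii) $\dbar$ is a deg-odd derivation of degree $1$ with $\dbar\circ\dbar=0$ and $\dbar\circ\Delta^{\omega}+\Delta^{\omega}\circ\dbar=0$; and (iii) compatibility with the Schouten-Nijenhuis bracket together with the grading condition that $\Delta^{\omega}$ decomposes as a direct sum of the maps $\Delta^{\omega,p,q}$ lowering the multivector degree $p$ by one and preserving $q$.

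First I would dispose of compatibility, which is immediate: the generalised Tian-Todorov lemma (Lem.~\ref{lemTianTodorov}) states precisely that $-\scal[[]{\alpha}{\beta}=(-1)^{\deg(\alpha)}\Delta^{\omega}(\alpha\wedge\beta)-(-1)^{\deg(\alpha)}\Delta^{\omega}(\alpha)\wedge\beta-\alpha\wedge\Delta^{\omega}(\beta)$, which is exactly the statement $\scal[[]{\alpha}{\beta}=-\delta_{\alpha}(\beta)$ that defines compatibility. Since $\Delta^{\omega}=\eta^{-1}\circ\partial\circ\eta$ is by Def.~\ref{defDelta} a direct sum of operators $\Delta^{\omega,p,q}:\Omega^{0,q}(\bigwedge^p\mT^{1,0}M)\to\Omega^{0,q}(\bigwedge^{p-1}\mT^{1,0}M)$, the strong compatibility grading condition holds by construction, and in particular $\Delta^{\omega}$ is deg-odd (it preserves $q$ and lowers $p$ by one, hence reverses the cohomological parity $q+p$ and preserves super degree). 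The fact that $\delta_{\alpha}$ is a derivation of the asserted degree and super degree then follows formally once compatibility is known, since $-\delta_{\alpha}=\scal[[]{\alpha}{\cdot}$ and the Schouten-Nijenhuis bracket is known to satisfy the graded Leibniz rule in its second argument (Def.~\ref{defSchoutenBracket}).

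Next I would handle the $\dbar$-axioms. That $\dbar$ is a deg-odd derivation of degree $1$ with $\dbar^2=0$ is standard for the Dolbeault operator on the superalgebra (\ref{eqnMultivectorForms}). The key relation $\dbar\circ\Delta^{\omega}+\Delta^{\omega}\circ\dbar=0$ is where the intertwining (\ref{eqnEtaDbar}), namely $\eta\circ\dbar=\dbar\circ\eta$, does the work: I would transport the anticommutator across $\eta$ to rewrite it as $\eta^{-1}\circ(\dbar\circ\partial+\partial\circ\dbar)\circ\eta$, reducing the claim to $\dbar\circ\partial+\partial\circ\dbar=0$ on integral forms. This last identity holds because $\partial$ acts only on the integral-form part while $\dbar$ acts only on the $(0,q)$-form part, so the two operators act on independent tensor factors of $\Omega^{0,q}M\otimes I^{n-p}$ and anticommute by the Koszul sign rule, both being deg-odd.

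The remaining and most delicate point is $\Delta^{\omega}\circ\Delta^{\omega}=0$. By the same intertwining this is equivalent to $\partial\circ\partial=0$ on integral forms, which can be checked directly from the local coordinate formula for $\partial$: expanding $\partial^2$ on a monomial $f\cdot\partial_{\xi^{i_1}}\wedge\dots\wedge\partial_{\xi^{i_p}}\otimes[d\xi]$ produces a double sum over pairs of deleted indices, and the terms cancel pairwise because the sign $M_i$ governing the reordering makes the contribution of deleting $\xi^i$ then $\xi^j$ cancel that of deleting $\xi^j$ then $\xi^i$, using the supercommutativity of second derivatives $\partial_{\xi^i}\partial_{\xi^j}=(-1)^{\abs{\xi^i}\abs{\xi^j}}\partial_{\xi^j}\partial_{\xi^i}$. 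This is the main obstacle, since the bookkeeping of Koszul signs in the super setting is precisely where sign errors tend to hide; alternatively, one can avoid the direct computation and instead derive $\Delta^{\omega}\circ\Delta^{\omega}=0$ abstractly from compatibility, exactly as in the proof of the preceding lemma establishing $\tilde{\Delta}^2=0$: one shows the claim on functions and on $\Omega^{0,r}(\bigwedge^1\mT^{1,0}M)$ by the vanishing established in Lem.~\ref{lemDeltaOmega}, and then propagates it to all $p$ by induction, writing a local generator $\alpha=\beta\cdot\gamma$ and applying the compatibility (Tian-Todorov) identity twice together with Lem.~\ref{lemGBVCompatible}. Once all three properties are in hand, the triple is a strongly compatible dGBV superalgebra, which completes the proof.
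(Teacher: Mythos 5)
Your main line of argument is correct and is essentially the paper's own proof: compatibility and the BV axiom come from the Tian--Todorov lemma, Lem.~\ref{lemTianTodorov}, combined with the Leibniz rule for the Schouten--Nijenhuis bracket (note that this rule is Lem.~\ref{lemSchoutenDerivation}, not part of Def.~\ref{defSchoutenBracket}); strong compatibility and deg-oddness hold by construction; and the two remaining identities are transported through $\eta$ via (\ref{eqnEtaDbar}) to the statements $\partial^2=0$ and $\partial\dbar+\dbar\partial=0$ on integral forms, which are exactly (\ref{eqnPartialSquared}) and Lem.~\ref{lemPartialDbar}. Two cosmetic differences: the paper obtains $\partial^2=0$ by comparison with Manin's operator $\delta$ (Lem.~\ref{lemManinOperator}) rather than by the double-sum cancellation you sketch, though the appendix acknowledges that such a direct calculation is possible; and your ``independent tensor factors'' heuristic for $\partial\dbar+\dbar\partial=0$ is only valid because the sign $(-1)^q$ in (\ref{eqnPartial}) builds the Koszul rule into the definition of $\partial$ --- both operators do differentiate the same coefficient functions ($\partial$ in $\xi$, $\dbar$ in $\overline{\xi}$), so the content of Lem.~\ref{lemPartialDbar} is a genuine, if routine, local computation.

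However, the ``alternative'' you offer for $\Delta^{\omega}\circ\Delta^{\omega}=0$ is circular and should be dropped. Lem.~\ref{lemGBVCompatible} is a statement about GBV superalgebras: its hypothesis and its proof both presuppose $\Delta^2=0$, which is precisely what you would be trying to establish. Nor can the induction from the paper's lemma on $\tilde{\Delta}$ be transplanted: its final step identifies the expression $(-1)^{\deg(\beta)}\Delta\scal[[]{\beta}{\gamma}+(-1)^{\deg(\beta)-1}\scal[[]{\Delta(\beta)}{\gamma}+\scal[[]{\beta}{\Delta(\gamma)}$ with $\Delta^2(\beta\cdot\gamma)$ and then uses that the \emph{ambient} operator $\Delta$ already squares to zero; in your situation there is no ambient square-zero operator to fall back on. Indeed, no argument from compatibility alone can succeed: any two operators compatible with the same bracket differ by a deg-odd derivation, and adding a derivation to a square-zero operator destroys the square-zero property in general --- this is why the paper treats $\Delta^2=0$ as genuinely extra information (it is an input to the flatness computation in Lem.~\ref{lemBVConnectionFlat}, via Lem.~\ref{lemGBVCompatible}). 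The GBV property of $\Delta^{\omega}$ must therefore be proved from its specific construction, i.e.\ by the reduction to $\partial^2=0$ (equivalently, using the local potential $\Delta^{\omega}(\partial_{\xi^k})=\dd[h]{\xi^k}h^{-1}$ of Lem.~\ref{lemDeltaOmega}), which your primary route does correctly.
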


We remark that, thanks to this result together with
Lem. \ref{lemSchoutenBVGenerators}, the operator $\Delta^{\omega}$
is completely determined by its values stated in Lem. \ref{lemDeltaOmega}.

\begin{proof}
By construction, $\Delta^{\omega}$ is a deg-odd linear map.
$(A,\Delta^{\omega})$ is a BV superalgebra by Lem. \ref{lemSchoutenDerivation},
together with the Tian-Todorov lemma, Lem. \ref{lemTianTodorov}.
Compatibility holds true by the same formula.
It remains to show that, together with $\dbar$, it is in fact dGBV.
Strong compatibility is then clear by construction of $\Delta^{\omega}$.
In fact, by (\ref{eqnPartialSquared}),
we see that $\Delta^2=\eta^{-1}\circ\partial^2\circ\eta=0$.
Moreover, the formula $\dbar\circ\Delta^{\omega}+\Delta^{\omega}\circ\dbar=0$
holds true since $\dbar$ commutes with $\eta$ and anticommutes with
$\partial$, by (\ref{eqnEtaDbar}) and Lem. \ref{lemPartialDbar}, respectively.
\end{proof}

\section{Flat Connections on the Canonical Bundle}
\label{secFlatConnections}

Having successfully translated triviality of the canonical bundle into
existence of a dGBV structure, we now turn to the converse, thus establishing
the remaining direction of Thm. \ref{thmWeakCYBV}.
In the following, we assume that $M$ is a complex supermanifold equipped
with an operator
$\Delta:\Omega^{0,*}(\bigwedge^*\mT^{1,0}M)\circlearrowleft$, such that
the triple
$\left(\Omega^{0,*}(\bigwedge^*\mT^{1,0}M),\Delta,\dbar\right)$
is a dGBV superalgebra strongly compatible with the Schouten-Nijenhuis bracket.
We shall construct a flat connection on $\Ber M$ and establish triviality
by means of a parallel section obtained by superholonomy theory,
which is sketched in the appendix.
To be precise, we shall need the Holonomy Principle,
Thm. \ref{thmHolonomyPrinciple}, together with the Ambrose-Singer Theorem,
cf. (\ref{eqnHolonomyAlgebra}).
At this point, the assumption that $M$ be simply-connected becomes important.

\begin{Lem}
\label{lemBVConnection}
Let $(\xi^k)$ denote complex coordinates of $M$ with induced real
coordinates (\ref{eqnComplexCoordinates}). Then the prescription
\begin{align*}
(\nabla^{\xi})_{\partial_{\xi^k_R}}[d\xi]:=-\Delta(\partial_{\xi^k})\cdot[d\xi]
\;,\qquad
(\nabla^{\xi})_{\partial_{\xi^k_I}}[d\xi]:=-i\cdot\Delta(\partial_{\xi^k})\cdot[d\xi]
\end{align*}
together with the Leibniz rule defines a connection on $\Ber M$.
\end{Lem}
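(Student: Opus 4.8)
The plan is to read the prescription as the choice of a connection one-form relative to the holomorphic frame $[d\xi]$, and then to reduce the lemma to a single coordinate-change identity. First I would note that, by the characterisation of strong compatibility established above, each $\Delta(\partial_{\xi^k})$ lies in $\mO_M(U)$, so that
\[
\theta^{\xi}:=-\sum_k\Delta(\partial_{\xi^k})\,d\xi^k
\]
is a well-defined holomorphic $(1,0)$-form on the chart $U$. Decomposing the real coordinate fields of (\ref{eqnComplexCoordinates}) into their $(1,0)$- and $(0,1)$-parts, the two displayed prescriptions are equivalent to the single requirement $\nabla^{\xi}[d\xi]=\theta^{\xi}\otimes[d\xi]$ together with vanishing of $\nabla^{\xi}$ in the antiholomorphic directions; in particular the $(0,1)$-part of $\nabla^{\xi}$ equals $\dbar$. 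Since a connection on a line bundle is freely prescribed by a one-form relative to a frame, extending $\theta^{\xi}$ by the Leibniz rule yields a genuine connection on $\Ber M|_U$ in each chart. Note that only strong compatibility and compatibility with the Schouten--Nijenhuis bracket enter here; the remaining dGBV axioms ($\Delta^2=0$, $\dbar\Delta+\Delta\dbar=0$) are reserved for the flatness statement.

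The actual content is therefore that the construction is independent of the chosen coordinates, i.e. that the connections $\nabla^{\xi}$ and $\nabla^{\zeta}$ attached to two overlapping charts $(\xi^k)$ and $(\zeta^l)$ coincide. Writing $[d\zeta]=g\cdot[d\xi]$ for the even, invertible, holomorphic transition function $g=\sdet(\partial\zeta/\partial\xi)$, and unwinding the Leibniz rule (using that $g$ is central and that $dg$ is of type $(1,0)$ since $g$ is holomorphic), I would observe that $\nabla^{\xi}=\nabla^{\zeta}$ holds precisely when the one-forms obey the gauge law
\[
\theta^{\zeta}=\theta^{\xi}+g^{-1}\,dg .
\]
This is the identity to be verified.

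To verify it, the key input is that strong compatibility together with Schouten--Nijenhuis compatibility forces $\Delta$ to behave as a divergence on a function times a vector: specialising the compatibility equation in the proof of Lem. \ref{lemSchoutenBVGenerators} to $f\cdot X$ and using $\Delta f=0$ yields $\Delta(f\cdot X)=f\cdot\Delta(X)-\scal[[]{f}{X}$, where the Schouten--Nijenhuis bracket reduces to a signed directional derivative of $f$ along $X$. Expanding $\partial_{\zeta^l}=\sum_k\dd[\xi^k]{\zeta^l}\,\partial_{\xi^k}$ and inserting this rule splits $\Delta(\partial_{\zeta^l})$ into a part carrying the factors $\Delta(\partial_{\xi^k})$ and an anomalous part $\sum_k(\pm)\,\partial_{\xi^k}\!\big(\dd[\xi^k]{\zeta^l}\big)$ with super-signs governed by the coordinate parities. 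Substituting into $\theta^{\zeta}=-\sum_l\Delta(\partial_{\zeta^l})\,d\zeta^l$, the first part recombines, by the chain rule $\sum_l\dd[\xi^k]{\zeta^l}\,d\zeta^l=d\xi^k$, exactly into $\theta^{\xi}$, and it remains to identify the anomaly with $g^{-1}dg$; I expect this to be the main obstacle. The plan is to recognise, via the supertrace form of Jacobi's formula $d\log\sdet A=\str(A^{-1}\,dA)$, that the anomalous part equals $\partial_{\zeta^l}\log\sdet(\partial\xi/\partial\zeta)$, so that $-\sum_l(\cdots)\,d\zeta^l=-d\log\sdet(\partial\xi/\partial\zeta)=g^{-1}dg$, as required. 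The delicate point throughout is that it is precisely the super-signs of the Schouten--Nijenhuis bracket and of the coordinate parities that turn the naive trace and determinant into the supertrace and superdeterminant, so that the anomaly reproduces the Berezinian transition function rather than an ordinary Jacobian.
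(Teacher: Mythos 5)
Your proposal is correct, and once unwound it is the paper's own argument in gauge-theoretic packaging rather than a genuinely different route. Where you reduce well-definedness to the cocycle law $\theta^{\zeta}=\theta^{\xi}+g^{-1}dg$ for the local connection one-forms and then split $\Delta(\partial_{\zeta^l})$ by the product rule $\Delta(f\cdot X)=f\cdot\Delta(X)-\scal[[]{f}{X}$ into a recombining part plus an anomaly, the paper verifies the pullback identity $\varphi^{\star}\bigl((\nabla^{\xi})_{\partial_{\xi^k}}[d\xi]\bigr)=(\nabla^{\zeta})_{\varphi^{\star}\partial_{\xi^k}}\varphi^{\star}[d\xi]$ directly: it expands the right-hand side by the Leibniz rule and (\ref{eqnCoordinateVFTrafo}), kills the $\partial_{\zeta^m}(\sdet d\varphi)$ contribution with Lem.~\ref{lemJacobiSum}, and then runs your product rule in the opposite direction to reassemble $\Delta\bigl(\partial_{\zeta^m}\cdot(d\varphi^{-1})^m_{\phantom{m}k}\bigr)=\Delta(\varphi^{\star}\partial_{\xi^k})$, whose coordinate-independence (globality of $\Delta$) finishes the proof. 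The two computations are identical term by term; in particular, the step you flag as the main obstacle --- identifying the anomaly with $g^{-1}dg$ via the supertrace form of Jacobi's formula --- is precisely the content of Lem.~\ref{lemJacobiSum}, which the paper derives in the appendix from Jacobi's formula (\ref{eqnJacobiFormula}) together with the symmetry of second derivatives (\ref{eqnDifferentialHessian}); your route needs that same Hessian symmetry to move the derivative onto the correct index before the supertrace contraction, so nothing is saved there, but nothing is missing either. Your side remarks also match the paper: the reduction to the complex picture ($\nabla$ of type $(1,0)$ plus $\dbar$ on the frame) is the paper's observation preceding its proof, and only compatibility and strong compatibility enter here, with $\Delta^2=0$ reserved for flatness (Lem.~\ref{lemBVConnectionFlat} via Lem.~\ref{lemGBVCompatible}).
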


We remark that, as usual, this definition is meant to be done with respect
to any coordinate system, and the resulting expressions are claimed to
transform such as to constitute a global object.
By construction, this is then a connection.
Moreover, we remind the reader of our convention that a connection is
always real, with complex linearity if present considered as an extra structure.
In the case at hand, we find
\begin{align*}
(\nabla^{\xi})_{JX}[d\xi]=i\cdot(\nabla^{\xi})_X[d\xi]
\end{align*}
which, upon complex linear extension, implies
\begin{align*}
(\nabla^{\xi})_{\partial_{\xi^k}}[d\xi]
=(\nabla^{\xi})_{\partial_{\xi^k_R}}[d\xi]=-\Delta(\partial_{\xi^k})\cdot[d\xi]
\end{align*}

\begin{proof}
By the preceding remark, we may work completely in the complex picture.
We need to show the following analogon of (\ref{eqnConnectionPullback}) under
a coordinate transformation $\varphi:\zeta\rightarrow\xi$.
\begin{align*}
\varphi^{\star}((\nabla^{\xi})_{\partial_{\xi^k}}[d\xi])
=(\nabla^{\zeta})_{\varphi^{\star}\partial_{\xi^k}}\varphi^{\star}[d\xi]
\end{align*}
We calculate the right hand side, using (\ref{eqnCoordinateVFTrafo}) and Leibniz' rule.
\begin{align*}
(\nabla^{\zeta})_{\varphi^{\star}\partial_{\xi^k}}\varphi^{\star}[d\xi]
&=(-1)^{(m+k)m}(d\varphi^{-1})^m_{\phantom{m}k}\partial_{\zeta^m}(\sdet d\varphi)\cdot[d\zeta]\\
&\qquad-(-1)^{(m+k)m}(d\varphi^{-1})^m_{\phantom{m}k}\sdet d\varphi\cdot\Delta(\partial_{\zeta^m})\cdot[d\zeta]\\
&=\partial_{\zeta^m}\left((d\varphi^{-1})^m_{\phantom{m}k}\sdet d\varphi\right)[d\zeta]
-\partial_{\zeta^m}(d\varphi^{-1})^m_{\phantom{m}k}\sdet d\varphi\cdot[d\zeta]\\
&\qquad-(-1)^{(m+k)m}(d\varphi^{-1})^m_{\phantom{m}k}\sdet d\varphi\cdot\Delta(\partial_{\zeta^m})\cdot[d\zeta]
\end{align*}
By Lem. \ref{lemJacobiSum}, the first term vanishes.
Further utilising the compatibility equation with the Schouten-Nijenhuis bracket,
we further calculate
\begin{align*}
(\nabla^{\zeta})_{\varphi^{\star}\partial_{\xi^k}}\varphi^{\star}[d\xi]&=-\left(\partial_{\zeta^m}(d\varphi^{-1})^m_{\phantom{m}k}+\Delta(\partial_{\zeta^m})\cdot(d\varphi^{-1})^m_{\phantom{m}k}\right)\sdet d\varphi\cdot[d\zeta]\\
&=-\Delta(\partial_{\zeta^m}\cdot(d\varphi^{-1})^m_{\phantom{m}k})\sdet d\varphi\cdot[d\zeta]\\
&=-\Delta(\varphi^{\sharp}\circ\partial_{\xi^k}\circ(\varphi^{-1})^{\sharp})\varphi^{\star}[d\xi]\\
&=-\varphi^*\Delta(\partial_{\xi^k})\varphi^{\star}[d\xi]\\
&=\varphi^{\star}((\nabla^{\xi})_{\partial_{\xi^k}}[d\xi])
\end{align*}
thus proving well-definedness.
\end{proof}

\begin{Lem}
\label{lemBVConnectionFlat}
The connection from Lem. \ref{lemBVConnection} is flat (has vanishing curvature).
\end{Lem}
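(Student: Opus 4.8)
The plan is to compute the curvature $R$ of $\nabla^{\xi}$ on the local frame $[d\xi]$ belonging to complex coordinates $(\xi^k)$ and to verify that $R=0$ on every pair of coordinate vector fields; tensoriality of the curvature then gives the assertion. Working in the complex picture as permitted by the remark after Lem.~\ref{lemBVConnection}, the two relevant actions of the connection on $[d\xi]$ are $(\nabla^{\xi})_{\partial_{\xi^k}}[d\xi]=-\Delta(\partial_{\xi^k})\cdot[d\xi]$ in the holomorphic directions and $(\nabla^{\xi})_{\partial_{\oxi^k}}[d\xi]=0$ in the antiholomorphic ones, the latter being an immediate consequence of the defining formulae together with $(\nabla^{\xi})_{JX}[d\xi]=i\cdot(\nabla^{\xi})_X[d\xi]$. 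Since coordinate vector fields have vanishing graded brackets, on each pair the curvature is simply the graded commutator of the associated covariant derivatives.

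Two of the three types of curvature components are then immediate. On a pair $(\partial_{\oxi^j},\partial_{\oxi^k})$ both covariant derivatives annihilate $[d\xi]$, so their commutator vanishes. On a mixed pair $(\partial_{\xi^j},\partial_{\oxi^k})$ I would invoke strong compatibility, by which $\Delta(\partial_{\xi^j})\in\mO_M(U)$ is holomorphic and hence annihilated by $\partial_{\oxi^k}$; together with $(\nabla^{\xi})_{\partial_{\oxi^k}}[d\xi]=0$ and the Leibniz rule this makes the mixed component vanish as well.

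The holomorphic-holomorphic component is the only nontrivial one. Expanding the commutator with the Leibniz rule, the terms quadratic in $\Delta(\partial_{\xi^k})$ cancel by graded commutativity, and the curvature on $(\partial_{\xi^j},\partial_{\xi^k})$ collapses to
\begin{align*}
\left(-\partial_{\xi^j}\Delta(\partial_{\xi^k})+(-1)^{\abs{\xi^j}\abs{\xi^k}}\partial_{\xi^k}\Delta(\partial_{\xi^j})\right)\cdot[d\xi]
\end{align*}
so that flatness is equivalent to the integrability condition $\partial_{\xi^j}\Delta(\partial_{\xi^k})=(-1)^{\abs{\xi^j}\abs{\xi^k}}\partial_{\xi^k}\Delta(\partial_{\xi^j})$. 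To obtain it I would apply $\Delta^2=0$ to $\partial_{\xi^j}\wedge\partial_{\xi^k}$. Using the Schouten-Nijenhuis compatibility equation together with $\scal[[]{\partial_{\xi^j}}{\partial_{\xi^k}}=0$ and $\Delta(f)=0$ for functions, one computes $\Delta(\partial_{\xi^j}\wedge\partial_{\xi^k})=\Delta(\partial_{\xi^j})\cdot\partial_{\xi^k}-(-1)^{\abs{\xi^j}\abs{\xi^k}}\Delta(\partial_{\xi^k})\cdot\partial_{\xi^j}$. Applying the compatibility equation a second time to each summand, now a function times a vector with the Schouten bracket of a function and a vector reproducing the directional derivative, the quadratic terms again cancel and $\Delta^2(\partial_{\xi^j}\wedge\partial_{\xi^k})=0$ yields exactly the integrability condition.

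The hard part will be the super-sign bookkeeping, especially keeping track of the signs in the Schouten-Nijenhuis bracket of a function with a vector field (Def.~\ref{defSchoutenBracket}) and reconciling the two applications of the compatibility equation; once these are fixed, each individual step is a routine local computation.
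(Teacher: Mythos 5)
Your proposal is correct and takes essentially the same route as the paper: the terms quadratic in $\Delta(\partial_{\xi^j})$ cancel by supercommutativity, and the remaining integrability condition $\partial_{\xi^j}\Delta(\partial_{\xi^k})=(-1)^{\abs{\xi^j}\abs{\xi^k}}\partial_{\xi^k}\Delta(\partial_{\xi^j})$ follows from $\Delta^2=0$ combined with Schouten--Nijenhuis compatibility --- the paper gets it by citing Lem.~\ref{lemGBVCompatible} for the pair $(\partial_{\xi^j},\partial_{\xi^k})$ with vanishing bracket, which is precisely what your double application of the compatibility equation to $\partial_{\xi^j}\wedge\partial_{\xi^k}$ re-derives. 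Your explicit verification of the mixed and antiholomorphic curvature components (via $(\nabla^{\xi})_{\partial_{\oxi^k}}[d\xi]=0$ and holomorphy of $\Delta(\partial_{\xi^j})$ from strong compatibility) addresses a point the paper leaves implicit, so it is added completeness rather than a different method.
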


\begin{proof}
We calculate
\begin{align*}
\scal[R]{\partial_{\xi^l}}{\partial_{\xi^m}}[d\xi]&=\nabla_{\partial_{\xi^l}}\nabla_{\partial_{\xi^m}}[d\xi]-(-1)^{lm}\nabla_{\partial_{\xi^m}}\nabla_{\partial_{\xi^l}}[d\xi]\\
&=-\nabla_{\partial_{\xi^l}}\left(\Delta(\partial_{\xi^m})\cdot[d\xi]\right)+(-1)^{lm}\nabla_{\partial_{\xi^m}}\left(\Delta(\partial_{\xi^l})[d\xi]\right)\\
&=\left(-\partial_{\xi^l}(\Delta(\partial_{\xi^m}))+(-1)^{lm}\partial_{\xi^m}(\Delta(\partial_{\xi^l}))\right)[d\xi]\\
&\qquad+\left(-(-1)^{lm}\Delta(\partial_{\xi^m})\nabla_{\partial_{\xi^l}}[d\xi]+\Delta(\partial_{\xi^l})\nabla_{\partial_{\xi^m}}[d\xi]\right)\\
&=:(1)+(2)
\end{align*}
where
\begin{align*}
(2)=\left((-1)^{lm}\Delta(\partial_{\xi^m})\Delta(\partial_{\xi^l})-\Delta(\partial_{\xi^l})\Delta(\partial_{\xi^m})\right)[d\xi]=0
\end{align*}
Consider the term
\begin{align*}
(-1)^{lm}\partial_{\xi^m}(\Delta(\partial_{\xi^l}))
=(-1)^{lm}\scal[[]{\partial_{\xi^m}}{\Delta(\partial_{\xi^l})}
=-\scal[[]{\Delta(\partial_{\xi^l})}{\partial_{\xi^m}}
\end{align*}
By Lem. \ref{lemGBVCompatible}, it equals
\begin{align*}
(-1)^{lm}\partial_{\xi^m}(\Delta(\partial_{\xi^l}))
=\scal[[]{\partial_{\xi^l}}{\Delta(\partial_{\xi^m})}
=\partial_{\xi^l}(\Delta(\partial_{\xi^m}))
\end{align*}
whence also $(1)=0$ vanishes. The statement is proved.
\end{proof}

\begin{Prp}
\label{prpBVParallelSection}
Let $M$ be simply connected. Under the above hypotheses (existence
of a strongly compatible dGBV structure),
there exists a global parallel nonzero homogeneous and holomorphic
section $\omega\in\Ber M$, which is unique up to multiplication by
a complex number.
\end{Prp}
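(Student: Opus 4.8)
The plan is to derive both existence and uniqueness of $\omega$ from the flatness of the connection $\nabla^{\xi}$ established in Lem.~\ref{lemBVConnectionFlat}, by feeding it into the superholonomy machinery recalled in the appendix. First I would invoke the Ambrose-Singer Theorem, cf.~(\ref{eqnHolonomyAlgebra}): since the curvature of $\nabla^{\xi}$ vanishes identically, the holonomy algebra $\hol$ is trivial. Because $M$ is simply connected, its holonomy group $\Hol$ coincides with the restricted holonomy group and is therefore connected; having trivial Lie algebra, it reduces to the identity. This is the step where simple connectedness enters, exactly as announced.

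Next I would apply the Holonomy Principle, Thm.~\ref{thmHolonomyPrinciple}. It identifies the global parallel sections of $\Ber M$ with the elements of a fibre $(\Ber M)_p$ over a base point $p\in M_{\overline{0}}$ that are fixed by $\Hol$. As $\Hol$ is now trivial, every fibre element is fixed, so each extends uniquely to a global parallel section. Since $\Ber M$ is a line bundle---of rank $1|0$ for $m$ even and $0|1$ for $m$ odd---its fibre is one-dimensional and concentrated in parity $(\dim M)_{\overline{1}}$. Choosing a nonzero generator of this fibre and extending it, I obtain a parallel section $\omega$ that is nonzero (its value at $p$ is the chosen generator) and homogeneous of parity $\abs{\omega}=(\dim M)_{\overline{1}}$, parallel transport being parity-preserving. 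Uniqueness up to a complex constant follows at once, the space of parallel sections of fixed parity being isomorphic to the one-dimensional fixed subspace of the fibre.

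It remains to check that $\omega$ is holomorphic, for which I would inspect the $(0,1)$-part of $\nabla^{\xi}$. Writing $\partial_{\oxi^k}=\tfrac12(\partial_{\xi^k_R}+i\,\partial_{\xi^k_I})$ and using the defining formulas of Lem.~\ref{lemBVConnection}, one computes
\begin{align*}
(\nabla^{\xi})_{\partial_{\oxi^k}}[d\xi]
=\tfrac12\left((\nabla^{\xi})_{\partial_{\xi^k_R}}[d\xi]+i\,(\nabla^{\xi})_{\partial_{\xi^k_I}}[d\xi]\right)
=\tfrac12\left(-\Delta(\partial_{\xi^k})+\Delta(\partial_{\xi^k})\right)\cdot[d\xi]=0
\end{align*}
Thus $\nabla^{\xi}$ annihilates the holomorphic frame $[d\xi]$ in antiholomorphic directions, so its $(0,1)$-part agrees with $\dbar$. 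Writing $\omega=f\cdot[d\xi]$ locally, the parallel condition in the directions $\partial_{\oxi^k}$ forces $\partial_{\oxi^k}f=0$ for all $k$, whence $f\in\mO_M$ and $\omega$ is holomorphic.

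The genuine obstacle here is conceptual rather than computational: one must ensure that the classical line-bundle holonomy argument transfers faithfully to the super setting. Concretely, the Holonomy Principle and the Ambrose-Singer Theorem must be applicable to the \emph{real} connection $\nabla^{\xi}$ on the super line bundle $\Ber M$; the topological hypothesis ``simply connected'' must be read on the underlying manifold $M_{\overline{0}}$; and the parity bookkeeping must be consistent so that the extended section is honestly homogeneous of the asserted parity. These are precisely the points that the superholonomy theory assembled in the appendix is designed to control, so once those results are granted the argument closes.
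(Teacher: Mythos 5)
Your proposal is correct and takes essentially the same route as the paper: flatness of the connection plus simple connectedness and the Ambrose--Singer theorem give trivial holonomy, the Holonomy Principle then yields a nonzero parallel homogeneous section unique up to a constant, and holomorphicity comes from parallelness in antiholomorphic directions. Your explicit check that $(\nabla^{\xi})_{\partial_{\oxi^k}}[d\xi]=0$, so that the $(0,1)$-part of the connection is $\dbar$, is just a more transparent phrasing of the paper's argument, which derives $h\Delta(\partial_{\xi^k})=\partial_{\xi^k}(h)$ from parallelness and concludes via the resulting Cauchy--Riemann equations for $h$.
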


\begin{proof}
By Lem. \ref{lemBVConnectionFlat},
the connection $\nabla$ from Lem. \ref{lemBVConnection} is flat.
Under the hypothesis of simply connectedness, the Ambrose-Singer Theorem
for the superholonomy functor (cf. (\ref{eqnHolonomyAlgebra}) in the
appendix) implies that $\Hol^{\nabla}_T=\{1\}$.
By the holonomy principle, Thm. \ref{thmHolonomyPrinciple},
there exists a global parallel nonzero section $\omega\in\Ber M$, which
is unique upon specifying $\omega_x\in x^*\Ber M\cong\bC$ for
some (topological) point $x\in M_{\overline{0}}$
(in the framework of \cite{Gro14c}, we consider $S=\bR^{0|0}$).
In terms of coordinates we may, locally, write $\omega=h\cdot[d\xi]$.
By construction, the superfunction $h$ is invertible and even.

It remains to show that $h$ is holomorphic.
As $\omega$ is parallel, we find
\begin{align*}
0=\nabla_{\partial_{\xi^k}}\omega=\nabla_{\partial_{\xi^k}}(h[d\xi])
=\partial_{\xi^k}(h)[d\xi]+h\nabla_{\partial_{\xi^k}}[d\xi]
=\left(\partial_{\xi^k}(h)-h\Delta(\partial_{\xi^k})\right)[d\xi]
\end{align*}
Therefore
\begin{align}
\label{eqnDeltaFormula}
h\Delta(\partial_{\xi^k})=\partial_{\xi^k}(h)
\end{align}
Since $\Delta$ is, by assumption, complex linear, it follows
thet $h$ and, therefore, $\omega$ is holomorphic.
\end{proof}

By means of the trivialising section $\omega\in\Ber M$ provided by
Prp. \ref{prpBVParallelSection}, we may use Def. \ref{defDelta} to
define an operator, to be denoted $\Delta^{\omega}$.
From the previous section, we know that $\Delta^{\omega}$ constitutes
a strongly compatible dGBV structure, just like $\Delta$.
By the following statement, both structures coincide.

\begin{Lem}
\label{lemDeltaDelta}
Let $\omega\in\Ber M$ be induced by $\Delta$ as in
Prp. \ref{prpBVParallelSection}. Then $\Delta=\Delta^{\omega}$.
\end{Lem}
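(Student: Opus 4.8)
The plan is to reduce the claim to a comparison on generators by means of the uniqueness result of Lem.~\ref{lemSchoutenBVGenerators}. Both $\Delta$ and $\Delta^{\omega}$ arise from strongly compatible dGBV structures---$\Delta$ by hypothesis and $\Delta^{\omega}$ by Prp.~\ref{prpBV}---so each is, over every open $U\subseteq M_{\overline{0}}$, uniquely determined by its values on functions $f\in\mO_M(U)$, tangent vectors $X\in\mT^{1,0}M(U)$, and $(0,1)$-forms $\lambda\in\Omega^{0,1}M(U)$. I would therefore only need to verify that the two operators agree on these three types of generators.

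First I would dispose of functions and $(0,1)$-forms, where there is nothing to do: both operators vanish there, by strong compatibility (the Corollary to Lem.~\ref{lemSchoutenBVGenerators}) for $\Delta$, and by Lem.~\ref{lemDeltaOmega} for $\Delta^{\omega}$. The only genuine comparison is on tangent vectors, and by local $\mO_M$-linearity it suffices to treat coordinate vector fields $\partial_{\xi^k}$.

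The decisive input will be equation (\ref{eqnDeltaFormula}) from the proof of Prp.~\ref{prpBVParallelSection}, which reads $h\,\Delta(\partial_{\xi^k})=\partial_{\xi^k}(h)$, where $\omega=h\cdot[d\xi]$ is precisely the local expression of the parallel section used to build $\eta$ and hence $\Delta^{\omega}$. Since $h$ is even and invertible, this rearranges to $\Delta(\partial_{\xi^k})=\dd[h]{\xi^k}\,h^{-1}$, which is exactly the local formula for $\Delta^{\omega}(\partial_{\xi^k})$ recorded in Lem.~\ref{lemDeltaOmega}. Having matched the operators on functions, $(0,1)$-forms and tangent vectors, I would conclude $\Delta=\Delta^{\omega}$ by the uniqueness in Lem.~\ref{lemSchoutenBVGenerators}.

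The hard part, such as it is, is not computational but bookkeeping: one must be sure that the superfunction $h$ appearing in (\ref{eqnDeltaFormula}) is literally the same as the one defining $\eta$ in Def.~\ref{defDelta}. This is guaranteed because $\Delta^{\omega}$ is built from exactly the parallel section $\omega$ produced by $\Delta$ in Prp.~\ref{prpBVParallelSection}; once this identification is in place, the two local formulas coincide verbatim and no further calculation is required.
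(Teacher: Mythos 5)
Your proof is correct and follows essentially the same route as the paper's: match $\Delta$ and $\Delta^{\omega}$ on coordinate vector fields via (\ref{eqnDeltaFormula}) and Lem.~\ref{lemDeltaOmega}, then invoke strong compatibility and the uniqueness of Lem.~\ref{lemSchoutenBVGenerators}. You are in fact slightly more explicit than the paper in checking the vanishing on functions and $(0,1)$-forms, which the paper leaves implicit in the phrase ``strongly compatible.''
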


\begin{proof}
By (\ref{eqnDeltaFormula}), we obtain
$\Delta(\partial_{\xi^k})=h^{-1}\partial_{\xi^k}(h)$.
The same formula holds with $\Delta$ replaced with $\Delta^{\omega}$,
by Lem. \ref{lemDeltaOmega}. Since both operators are, moreover,
strongly compatible, they must agree (cf. Lem. \ref{lemSchoutenBVGenerators}).
\end{proof}

\begin{proof}[Proof of Thm. \ref{thmWeakCYBV}]
Assuming that $\Ber M$ is trivial, we let $\omega\in\Ber M$ denote a
homogeneous trivialising global section.
The operator $\Delta^{\omega}$ from Def. \ref{defDelta} then induces
a strongly compatible dGBV structure (cf. Prp. \ref{prpBV}).
We thus obtain a map
$F:\omega\mapsto\Delta^{\omega}$.

On the other hand, let $\Delta$ constitute a strongly compatible dGBV structure.
By Prp. \ref{prpBVParallelSection}, we obtain a homogeneous trivialising section
$\omega_{\Delta}\in\Ber M$
(up to a constant) which is parallel with respect to the connection
$\nabla^{\Delta}$ of Lem. \ref{lemBVConnection}.
We denote this map by $G:\Delta\mapsto\omega_{\Delta}$
(defined up to a constant).
By Lem. \ref{lemDeltaDelta}, we get $F\circ G=\id_{\Delta}$.

It remains to show $G\circ F=\id_{\omega}$ (up to a constant).
Let $\omega\in\Ber M$ be trivialising and homogeneous.
With respect to coordinates $(\xi^k)$ we write, locally, $\omega=h\cdot[d\xi]$.
Moreover, we set $\tilde{\omega}:=G\circ F(\omega)$ and, locally,
$\tilde{\omega}=\tilde{h}\cdot[d\xi]=f\cdot h\cdot[d\xi]$.
Using Lem. \ref{lemDeltaOmega} and (\ref{eqnDeltaFormula}), we obtain
\begin{align*}
h^{-1}\partial_{\xi^k}(h)=\Delta^{\omega}(\partial_{\xi^k})
=\tilde{h}^{-1}\partial_{\xi^k}(\tilde{h})
=h^{-1}\partial_{\xi^k}(h)+f^{-1}\partial_{\xi^k}(f)
\end{align*}
It follows that $f$ is constant, which was to be shown.
\end{proof}

\section{The Calabi-Yau Case}
\label{secCY}

In this section, we study a particular kind of supermanifolds with
trivial canonical bundle that we refer to as Calabi-Yau.
To begin with, consider the following superisation of one of the equivalent
characterisations of a K\"ahler manifold.

\begin{Def}
\label{defKaehler}
A \emph{K\"ahler supermanifold} is a semi-Riemannian supermanifold
$(M,g)$ such that the holonomy group functor (associated with the
Levi-Civita connection) satisfies
\begin{align*}
\Hol_x(T)\subseteq U_{p_0,q_0|p_1,q_1}(\mO_T)
\end{align*}
for all $T=\bR^{0|L'}$ at some $x\in M_{\overline{0}}$.
\end{Def}

Here, $(p_0,q_0)$ and $(p_1,q_1)$ denote the signatures of the
underlying metrics in the even/even and odd/odd directions, respectively,
in the definition of the unitary supergroup.
It is clear that Def. \ref{defKaehler} is independent of $x$
(we remind the reader that we consider only connected supermanifolds).
Considering a different point $y$ results in the holonomy conjugated by parallel
transport from $x$ to $y$.
By the Twofold Theorem of \cite{Gro16}, this definition
is equivalent to Galaev's given in \cite{Gal09}.
By the Holonomy Principle, Thm. \ref{thmHolonomyPrinciple},
a K\"ahler supermanifold can be
characterised as a Hermitian supermanifold $(M,g)$ with a parallel
complex structure $\nabla J=0$.

The Calabi-Yau case is not so straightforward.
Already for classical manifolds, that term is used to denote
several different concepts, which are tightly related but not equivalent.
This is summarised e.g. in Sec. 6.1 of \cite{Joy00}.
The various different definitions of a Calabi-Yau manifolds can
all be promoted to supergeometry, but these generalisations have even less
in common, as we shall presently see.
To be definite, we propose the following definition which, from the point of
view of holonomy theory, is the most natural one.

\begin{Def}
\label{defCalabiYau}
A \emph{Calabi-Yau supermanifold} is a K\"ahler supermanifold
such that
\begin{align*}
\Hol_x(T)\subseteq SU_{p_0,q_0|p_1,q_1}(\mO_T)
\end{align*}
for all $T=\bR^{0|L'}$ at some $x\in M_{\overline{0}}$.
\end{Def}

This definition may be translated to Galaev's holonomy theory
by means of the aforementioned Twofold Theorem.
A Calabi-Yau supermanifold
in our denomination is then equivalent to
what is called ''special K\"ahler supermanifold'' in \cite{Gal09}.
We state the following characterisation in the simply-connected case,
which was proved as Prp. 11.1 in that reference.

\begin{Prp}
\label{prpCalabiYau}
Let $(M,g)$ be a simply-connected K\"ahler supermanifold. Then
$(M,g)$ is Calabi-Yau if and only if the Ricci tensor vanishes.
\end{Prp}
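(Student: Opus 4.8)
The plan is to reduce the holonomy condition defining Calabi-Yau to a curvature condition and then to identify that curvature with the Ricci tensor. Since $(M,g)$ is already K\"ahler, its Levi-Civita connection preserves the complex structure, $\nabla J=0$, and hence descends to a connection on the complex tangent bundle $\mT^{1,0}M$ and, via the Berezinian, to a connection on the line bundle $\Ber M$. The key algebraic observation is that the Lie superalgebra of $SU_{p_0,q_0|p_1,q_1}$ is exactly the kernel of the supertrace $\str$ on the Lie superalgebra of $U_{p_0,q_0|p_1,q_1}$, and that $\str$ is precisely the infinitesimal form of the Berezinian representation. Thus $\Hol_x(T)\subseteq SU_{p_0,q_0|p_1,q_1}(\mO_T)$ should be equivalent to the induced Berezinian connection on $\Ber M$ having trivial holonomy.

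First I would invoke the super Ambrose-Singer theorem (\ref{eqnHolonomyAlgebra}): the holonomy algebra is generated by the curvature endomorphisms $R(X,Y)$ of $\mT^{1,0}M$ after parallel transport to the base point. Because parallel transport acts by conjugation with an invertible supermatrix, it preserves $\str$, and because the K\"ahler hypothesis already confines the holonomy to $U_{p_0,q_0|p_1,q_1}$, the reduction $\Hol_x(T)\subseteq SU_{p_0,q_0|p_1,q_1}(\mO_T)$ holds for all $T=\bR^{0|L'}$ if and only if $\str R(X,Y)=0$ for all $X,Y$, the supertrace being taken over $\mT^{1,0}M$. Simple connectedness enters here through the Holonomy Principle, Thm. \ref{thmHolonomyPrinciple}: it guarantees that vanishing of the supertrace of the holonomy \emph{algebra} forces triviality of the holonomy \emph{group} functor, so that no discrete determinant factor can survive.

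Next I would identify $\str R(X,Y)$ with the Ricci tensor. Restricting the Riemann curvature to $\mT^{1,0}M$ and applying $\str$ produces a two-form which, by the super K\"ahler identities (the graded analogues of $\nabla J=0$ and the first Bianchi identity), is a constant multiple of the Ricci form $\rho(X,Y)=\mathrm{Ric}(JX,Y)$; indeed the super Ricci tensor is itself defined as a supertrace contraction of the curvature, so the two contractions match up to the insertion of $J$. Since $J$ is an isomorphism of the tangent bundle, $\rho$ vanishes identically if and only if $\mathrm{Ric}$ does, and combining this with the previous equivalence yields the proposition. The main obstacle I anticipate is precisely this identification: establishing, with the paper's sign and grading conventions, that the supertrace of the curvature over $\mT^{1,0}M$ is genuinely proportional to the Ricci form requires the super K\"ahler identities together with careful bookkeeping of the parities entering $\str$, and one must also check that the Berezinian connection on $\Ber M$ differentiates to $\str$ rather than to some other contraction. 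Once that identification is secured, the remaining implications are formal consequences of Ambrose-Singer, the Holonomy Principle, and the invertibility of $J$.
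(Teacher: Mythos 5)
The paper does not actually prove Prp.~\ref{prpCalabiYau}: it observes that, via the Twofold Theorem of \cite{Gro16}, Def.~\ref{defCalabiYau} coincides with Galaev's notion of a ``special K\"ahler supermanifold'', and then quotes the statement as Prp.~11.1 of \cite{Gal09}. Your attempt at a direct proof is therefore a reconstruction of the argument behind that citation, and its skeleton --- holonomy in the special unitary supergroup $\Leftrightarrow$ supertrace-free curvature $\Leftrightarrow$ vanishing Ricci tensor, with simple connectedness powering the passage from curvature back to the group --- is the standard and correct one. However, two steps in your sketch are genuine gaps relative to what can actually be supported.

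First, the Holonomy Principle, Thm.~\ref{thmHolonomyPrinciple}, does not do what you ask of it: it exchanges holonomy-invariant vectors with parallel global sections, and contains no statement that the holonomy \emph{algebra} controls the holonomy \emph{group} functor. The only algebra-to-group statement available in the paper is the remark following (\ref{eqnHolonomyAlgebra}): a \emph{flat} connection on a simply connected supermanifold has trivial holonomy. To exploit it you must first transfer the problem to the line bundle $\Ber M$, mirroring the proof of Prp.~\ref{prpCYTrivialBer}: by Lem.~\ref{lemCYBerConnection} and Lem.~\ref{lemSdetParallelTransport}, the K\"ahler hypothesis yields a connection $\nabla^{\Ber}$ whose parallel transports are $\sdet(\jmath\circ P_{\gamma}\circ\jmath^{-1})^{-1}$, so that the Calabi-Yau condition becomes triviality of the holonomy of $\nabla^{\Ber}$, while $\str\left(\jmath\circ R(\cdot,\cdot)\circ\jmath^{-1}\right)=0$ becomes flatness of $\nabla^{\Ber}$; on simply connected $M$ these two are then equivalent, by (\ref{eqnHolonomyAlgebra}) in one direction and by the flat-case remark in the other. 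Working directly with the holonomy of $\mT^{1,0}M$, as your sketch does, requires a connectedness statement for $\Hol_x$ that the paper nowhere establishes. Second, the identification $\str\left(\jmath\circ R(X,Y)\circ\jmath^{-1}\right)=c\cdot\mathrm{Ric}(JX,Y)$ is the actual mathematical content of the proposition, and you defer it to ``careful bookkeeping''. In the super setting this is not routine: one must fix the definition of the super Ricci tensor as a signed supertrace contraction, and verify the super Bianchi identity and the consequences of $\nabla J=0$ with consistent parities. That computation is precisely what the cited Prp.~11.1 of \cite{Gal09} supplies; as written, your proposal reproduces the frame of the argument but leaves its core unproven.
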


Def. \ref{defCalabiYau} is the natural generalisation of the
classical definition of a Calabi-Yau manifold in terms of holonomy,
and Prp. \ref{prpCalabiYau} generalises a well-known classical characterisation.
However, the analogy to classical geometry is not as close for
the following two reasons.

First, a Calabi-Yau supermanifold is, in particular, a K\"ahler supermanifold.
Moreover, the smooth manifold underlying a K\"ahler supermanifold
naturally inherits a K\"ahler structure.
However, the classical K\"ahler manifold underlying a Calabi-Yau supermanifold
need not be a Calabi-Yau manifold.

Second, Calabi-Yau manifolds obtained their name from Calabi's conjecture
which was later proved by Yau.
The analogous theorem in supergeometry is, however, wrong.
There are explicit counterexamples, see \cite{RW05}.
Nevertheless, the following generalisation of a well-known result
holds true.

\begin{Prp}
\label{prpCYTrivialBer}
The Levi-Civita connection of a Calabi-Yau supermanifold induces
a connection $\nabla^{\Ber}$ on $\Ber M$ with trivial holonomy.
There is a global parallel nowhere vanishing homogeneous and holomorphic section
$\omega\in\Ber M$, unique up to a complex number.
In particular, a Calabi-Yau supermanifold has trivial canonical bundle.
\end{Prp}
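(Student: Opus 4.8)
The plan is to push the special-unitary holonomy constraint from the holomorphic tangent bundle down to the Berezinian line bundle, where it forces the induced holonomy to be trivial, and then to invoke the Holonomy Principle. First I would recall from the appendix (App.~\ref{secComplexSupergeometry}, Sec.~\ref{secCanonicalBundle}) that on a K\"ahler supermanifold the parallelism of the complex structure, $\nabla J=0$, forces the Levi-Civita connection $\nabla$ to preserve the holomorphic tangent bundle $\mT^{1,0}M$, inducing a connection there and hence, by functoriality of the Berezinian, a connection $\nabla^{\Ber}$ on $\Ber M=\Ber(\mT^{1,0}M)^*$. Thus the whole statement reduces to computing the superholonomy of $\nabla^{\Ber}$.

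The key step is to identify this induced superholonomy. By construction, the holonomy representation on $\Ber M$ is obtained from the holonomy on $\mT^{1,0}M$ by applying the Berezinian, the dual accounting for an inverse, so that each holonomy element $g$ acts as multiplication by $\sdet(g)^{-1}$. By Def.~\ref{defCalabiYau}, the Calabi-Yau condition reads $\Hol_x(T)\subseteq SU_{p_0,q_0|p_1,q_1}(\mO_T)$, and every element of the special unitary supergroup has $\sdet(g)=1$ by definition. Hence the induced holonomy is trivial, $\Hol^{\nabla^{\Ber}}_x(T)=\{1\}$. This is the step I expect to demand the most care: in the functorial superholonomy framework one must check that forming the induced connection on the Berezinian commutes with taking holonomy, so that the holonomy functor of $\nabla^{\Ber}$ is genuinely the $\sdet$-image of that of $\nabla$.

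With trivial superholonomy in hand, the Holonomy Principle (Thm.~\ref{thmHolonomyPrinciple}) provides a global parallel section $\omega\in\Ber M$, unique once its value $\omega_x\in x^*\Ber M$ is prescribed at a chosen topological point, hence unique up to a complex constant; no assumption of simple connectedness is needed, since the \emph{full} holonomy already vanishes. As parallel transport is a fibrewise isomorphism, $\omega$ is nowhere vanishing, and being a section of a super line bundle of fixed parity it is automatically homogeneous. Holomorphicity follows from the K\"ahler condition exactly as in the proof of Prp.~\ref{prpBVParallelSection}: since $\nabla J=0$, the $(0,1)$-part of $\nabla^{\Ber}$ coincides with $\dbar$, so a $\nabla^{\Ber}$-parallel section is in particular $\dbar$-closed, i.e.\ holomorphic. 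The existence of a nowhere vanishing homogeneous holomorphic global section is precisely the triviality of $\Ber M$, which completes the argument.
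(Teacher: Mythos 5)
Your proposal is correct and takes essentially the same route as the paper: the paper constructs $\nabla^{\Ber}$ in Lem.~\ref{lemCYBerConnection}, and the compatibility step you flag as the delicate one is exactly its Lem.~\ref{lemSdetParallelTransport}, which establishes $P^{\Ber}_{\gamma}=\sdet(\jmath\circ P_{\gamma}\circ\jmath^{-1})^{-1}$ at the level of parallel transport, so that the Calabi-Yau condition $\Hol_x(T)\subseteq SU_{p_0,q_0|p_1,q_1}(\mO_T)$ forces trivial holonomy on $\Ber M$. The remaining steps (Holonomy Principle, uniqueness up to a complex constant, homogeneity and holomorphicity) are carried out in the paper by continuing as in the proof of Prp.~\ref{prpBVParallelSection}, with no simple-connectedness assumption needed, just as you observe.
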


It follows at once (by Prp. \ref{prpBV}) that a Calabi-Yau supermanifold bears
a strongly-compatible dGBV structure.
Prp. \ref{prpCYTrivialBer} will be established with the help of the
following two lemmas. In the first, we construct the connection
$\nabla^{\Ber}$, while the second is concerned about its parallel transports.

\begin{Lem}
\label{lemCYBerConnection}
A connection $\nabla$ on the tangent bundle of a complex supermanifold,
which satisfies $\nabla J=0$,
induces a connection on $\Ber M$, denoted $\nabla^{\Ber}$, through the
following local definition in coordinates $(\xi^k)$.
\begin{align*}
\nabla^{\Ber}_{\partial_{\xi^l}}[d\xi]:=-\str\left(\jmath\circ(\nabla^{\xi})_{\dd{\xi^l}}\circ\jmath^{-1}\right)\cdot[d\xi]
\end{align*}
\end{Lem}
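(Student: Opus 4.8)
The plan is to show that the displayed local prescription is independent of the chosen coordinate system, which — together with the Leibniz rule — is exactly what is required for $\nabla^{\Ber}$ to be a globally defined connection on the line bundle $\Ber M$. Since $\Ber M=\Ber(\mT^{1,0}M)^*$ is built from the holomorphic (co)tangent bundle, I would first record that the hypothesis $\nabla J=0$ is precisely what guarantees that the endomorphism $\jmath\circ(\nabla^{\xi})_{\partial_{\xi^l}}\circ\jmath^{-1}$ preserves the $(1,0)$-type and hence descends to a genuine endomorphism of $\mT^{1,0}M$; its supertrace is then a well-defined local even function, so the right hand side makes sense. In other words, $\nabla$ together with $\jmath$ induces a linear connection on $\mT^{1,0}M$ whose connection form, contracted against $\partial_{\xi^l}$, is the displayed endomorphism, and the minus sign reflects that $\Ber M$ is formed from the cotangent bundle. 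Because $\nabla J=0$, the connection is complex linear and, as in Lem. \ref{lemBVConnection}, we may work entirely in the complex picture.

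The heart of the argument is the transformation behaviour under a holomorphic change of coordinates $\varphi:\zeta\to\xi$, entirely parallel to the proof of Lem. \ref{lemBVConnection}. Writing $g:=d\varphi$ for the Jacobian, the frames $\{\partial_{\xi^k}\}$ and $\{\partial_{\zeta^m}\}$ of $\mT^{1,0}M$ are related by $g$, so the associated connection forms obey the usual change-of-frame law $\theta^{(\zeta)}=g^{-1}\theta^{(\xi)}g+g^{-1}dg$. I would verify the well-definedness identity
\begin{align*}
\nabla^{\Ber}_{\varphi^{\star}\partial_{\xi^l}}\varphi^{\star}[d\xi]=\varphi^{\star}\left(\nabla^{\Ber}_{\partial_{\xi^l}}[d\xi]\right)
\end{align*}
(the $\zeta$-coordinate instance of the definition on the left, the $\xi$-instance inside on the right) by inserting $\varphi^{\star}[d\xi]=\sdet(d\varphi)\cdot[d\zeta]$ and $\varphi^{\star}\partial_{\xi^l}=(d\varphi^{-1})^m_{\phantom{m}l}\partial_{\zeta^m}$ (up to the signs of (\ref{eqnCoordinateVFTrafo})), expanding with the Leibniz rule and the definition in $\zeta$-coordinates.

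The key point is that taking $\str$ annihilates the conjugation term $g^{-1}\theta^{(\xi)}g$, by cyclicity of the supertrace, leaving only the inhomogeneous contribution $\str(g^{-1}dg)$. This term is the logarithmic derivative of the transition function, $\str(g^{-1}dg)=d\log\sdet(d\varphi)$ (the super-Jacobi identity), and it is this identity that combines with the derivative of the factor $\sdet(d\varphi)$ produced by the Leibniz rule. Concretely, the surplus terms should assemble into an expression of the form $\partial_{\zeta^m}\big((d\varphi^{-1})^m_{\phantom{m}l}\,\sdet d\varphi\big)\cdot[d\zeta]$, which vanishes by Lem. \ref{lemJacobiSum}, exactly as the analogous term did in the proof of Lem. \ref{lemBVConnection}. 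After this cancellation the two sides agree, proving well-definedness.

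The main obstacle I anticipate is the sign bookkeeping intrinsic to the super setting: the cyclicity of $\str$, the transformation law of the connection form, and the super-Jacobi identity $\str(g^{-1}dg)=d\log\sdet g$ all carry signs depending on the parities of the coordinates, and these must be tracked consistently with the conventions of (\ref{eqnCoordinateVFTrafo}) and Lem. \ref{lemJacobiSum}. Apart from this, the proof is essentially a transcription of the mechanism already used for Lem. \ref{lemBVConnection}, with the quantity $\Delta(\partial_{\xi^l})$ there replaced by $\str\big(\jmath\circ(\nabla^{\xi})_{\partial_{\xi^l}}\circ\jmath^{-1}\big)$ here.
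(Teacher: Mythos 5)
Your proposal is correct and follows essentially the same route as the paper: both prove well-definedness under a holomorphic coordinate change by combining the transformation law of the connection coefficients (Lem. \ref{lemChristoffelTrafo}, which is your frame-change law $\theta^{(\zeta)}=g^{-1}\theta^{(\xi)}g+g^{-1}dg$ in index form), Jacobi's formula (\ref{eqnJacobiFormula}) for $\sdet d\varphi$, and the cancellation of Lem. \ref{lemJacobiSum}, in direct parallel with the proof of Lem. \ref{lemBVConnection}. Your phrasing via cyclicity of $\str$ and $\str(g^{-1}dg)=d\log\sdet g$ is just a matrix-notation transcription of the paper's indexed Christoffel-symbol computation.
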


Here, conjugation with $\jmath$ translates $\nabla$ to a connection
(\ref{eqnJmathNabla}) on $\mT^{1,0}M$, which the definition requires
to be complex linear.
The condition $\nabla J=0$ is necessary and sufficient for that property.

\begin{proof}
We need to show that the local definition given behaves correctly under a
coordinate transformation $\varphi:\zeta\rightarrow\xi$.
This proof largely parallels that of Lem. \ref{lemBVConnection}.
Abbreviating $\nabla:=\nabla^{\Ber}$,
one side of the equation to be established may be expressed as
\begin{align*}
(\nabla^{\zeta})_{\varphi^{\star}\partial_{\xi^k}}\varphi^{\star}[d\xi]
=-\left(\partial_{\zeta^m}(d\varphi^{-1})^m_{\phantom{m}k}+\str\left(\jmath\circ(\nabla^{\zeta})_{\partial_{\zeta^m}}\circ\jmath^{-1}\right)\cdot(d\varphi^{-1})^m_{\phantom{m}k}\right)\sdet d\varphi\cdot[d\zeta]
\end{align*}
whereas the other side reads as follows.
\begin{align*}
\varphi^{\star}((\nabla^{\xi})_{\partial_{\xi^k}}[d\xi])
=-\str(\jmath\circ\varphi^{\sharp}(^R(\Gamma^{\xi})_{k\cdot}^{\cdot})\circ\jmath^{-1})\cdot\varphi^{\star}[d\xi]
\end{align*}
The Christoffel symbols transform according to Lem. \ref{lemChristoffelTrafo}.
Further utilising Jacobi's formula (\ref{eqnJacobiFormula}), together with
Lem. \ref{lemJacobiSum}, one arrives at the same expression as before, thus
showing well-definedness.
\end{proof}

\begin{Lem}
\label{lemSdetParallelTransport}
Let $P_{\gamma}$ denote parallel transport with respect to the
connection $\nabla$ along some $S$-path $\gamma$. Then
\begin{align*}
P^{\Ber}_{\gamma}=\sdet(\jmath\circ P_{\gamma}\circ\jmath^{-1})^{-1}
\end{align*}
is the parallel transport with respect to $\nabla^{\Ber}$ along the
same path.
\end{Lem}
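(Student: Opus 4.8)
The plan is to identify the proposed operator $Q_\gamma:=\sdet(\jmath\circ P_\gamma\circ\jmath^{-1})^{-1}$ with genuine parallel transport for $\nabla^{\Ber}$ by showing it solves the same transport equation with the same initial value, and then to invoke uniqueness of parallel transport. Throughout I would work over a fixed $S$-point and in local coordinates $(\xi^k)$, writing $A(t):=\jmath\circ P_{\gamma|[0,t]}\circ\jmath^{-1}$ for the conjugated parallel transport on $\mT^{1,0}M$ and $\Gamma(\dot\gamma):=\jmath\circ(\nabla^{\xi})_{\dot\gamma}\circ\jmath^{-1}$ for the (even) connection endomorphism already used in Lem. \ref{lemCYBerConnection}.

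First I would record the defining transport equation of $P_\gamma$. Since a section is parallel precisely when $\nabla_{\dot\gamma}s=0$, conjugation by $\jmath$ turns this into the linear equation $\dot A(t)=-\Gamma(\dot\gamma(t))\,A(t)$ with $A(0)=\id$. Because parallel transport is an isomorphism preserving parity, $A(t)$ is even and invertible, so $\sdet A(t)$ is defined, invertible, and $Q(t)=(\sdet A(t))^{-1}$ makes sense.

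The heart of the argument is a super Liouville computation. Applying Jacobi's formula for the superdeterminant (\ref{eqnJacobiFormula}) along the path gives
\begin{align*}
\frac{d}{dt}\sdet A(t)=\sdet(A(t))\cdot\str(A(t)^{-1}\dot A(t))=-\sdet(A(t))\cdot\str(A(t)^{-1}\Gamma(\dot\gamma)A(t))=-\sdet(A(t))\cdot\str\Gamma(\dot\gamma),
\end{align*}
where the last equality uses cyclicity of the supertrace together with the fact that $A$ is even. Consequently $\dot Q=Q\cdot\str\Gamma(\dot\gamma)$. Feeding this into the Leibniz rule and the definition of $\nabla^{\Ber}$ from Lem. \ref{lemCYBerConnection}, I would then compute $\nabla^{\Ber}_{\dot\gamma}(Q\cdot[d\xi])=\dot Q\,[d\xi]+Q\,\nabla^{\Ber}_{\dot\gamma}[d\xi]=(\dot Q-Q\str\Gamma(\dot\gamma))[d\xi]=0$, so $Q\cdot[d\xi]$ is parallel. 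Since $Q(0)=1$ matches the initial value, uniqueness of parallel transport forces $Q_\gamma=P^{\Ber}_\gamma$. As a consistency check that simultaneously yields the concatenation law, multiplicativity $\sdet(AB)=\sdet(A)\sdet(B)$ together with $P_{\gamma_2\ast\gamma_1}=P_{\gamma_2}\circ P_{\gamma_1}$ gives $Q_{\gamma_2\ast\gamma_1}=Q_{\gamma_2}\circ Q_{\gamma_1}$, as required of any parallel transport.

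The main obstacle I anticipate is not the formal Liouville calculation but making it rigorous in the $S$-path superholonomy framework: parallel transport there is a morphism over the base $S$ rather than a naive matrix solution of an ordinary differential equation, so I must justify that conjugation by $\jmath$ and passage to $\sdet$ are compatible with the transport equation for all $S$-points in a functorial way, and that $A(t)$ genuinely remains even and invertible so that both $\sdet$ and the cyclicity of $\str$ apply. Once the transport equation is available in this functorial form, the remaining steps are the routine super-analogues of the classical computation.
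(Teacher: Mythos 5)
Your proposal is correct and follows essentially the same route as the paper: both verify that $\sdet(\jmath\circ P_{\gamma}\circ\jmath^{-1})^{-1}$ satisfies the parallel-transport equation for $\nabla^{\Ber}$ by differentiating the superdeterminant via Jacobi's formula (\ref{eqnJacobiFormula}) and substituting the transport equation (\ref{eqnParallelTransport}) for $P_{\gamma}$. The only cosmetic difference is that the paper works with the right logarithmic derivative $(\partial_t P_{\gamma})P_{\gamma}^{-1}$, which makes the substitution immediate without invoking cyclicity of $\str$, whereas you use $A^{-1}\dot A$ and cyclicity; both are fine.
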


Parallel transport with respect to the induced connection on the cotangent
bundle $\mT^*M$
comes with an inverse. Given the construction of the Berezinian,
the formula stated is thus very natural.
In fact, the connection in Lem. \ref{lemCYBerConnection} was constructed
such as to have this parallel transport associated.
The sign in that definition can be understood
as the linearisation of the aforementioned inverse.

\begin{proof}
We must show that $P^{\Ber}_{\gamma}$ as stated satisfies
the differential equation for parallel transport in coordinates $(\xi^k)$,
(\ref{eqnParallelTransport}), which reads as follows.
\begin{align*}
\partial_tP^{\Ber}_{\gamma}
=-\partial_t(\gamma^*(\xi^l))\cdot\left(-\str(\jmath(^R(\Gamma^{\xi})_{l\cdot}^{\cdot}))\right)\cdot
P^{\Ber}_{\gamma}
\end{align*}
By Jacobi's formula (\ref{eqnJacobiFormula}), the left hand side of this
equation can be written as follows.
\begin{align*}
\partial_tP^{\Ber}_{\gamma}=-\str(\jmath\circ(\partial_tP_{\gamma})P_{\gamma}^{-1}\circ\jmath^{-1})\cdot P^{\Ber}_{\gamma}
\end{align*}
Replacing, in that expression, $\partial_tP_{\gamma}$ with the right hand side
of the parallel transport equation (\ref{eqnParallelTransport}) with respect to
$P_{\gamma}$, the analogous equation for $P^{\Ber}_{\gamma}$ follows.
\end{proof}

\begin{proof}[Proof of Prp. \ref{prpCYTrivialBer}]
By hypothesis, the parallel transport $P_{\gamma}$ with respect to the
Levi-Civita connection along some loop
$\gamma:x\rightarrow x$ is contained in some special unitary supergroup.
In particular, $\sdet(\jmath\circ P_{\gamma}\circ\jmath^{-1})=1$ holds.
By Lem. \ref{lemSdetParallelTransport}, the holonomy group functor
with respect to the induced connection $\nabla^{\Ber}$ on $\Ber M$
is trivial.
Continuing analogous to the proof of Prp. \ref{prpBVParallelSection},
one finds a global parallel nonzero homogeneous and holomorphic section
$\omega\in\Ber M$, unique up to a complex number.
\end{proof}

We are now in the position to give a more precise statement and
proof of our second main theorem, Thm. \ref{thmIntroCYConnections}.

\begin{Thm}
\label{thmCYConnections}
Let $M$ be a Calabi-Yau supermanifold and $\omega\in\Ber M$ be the
trivialising section of Prp. \ref{prpCYTrivialBer}, parallel with respect
to $\nabla^{\Ber}$. Let $\nabla^{\Delta}$ denote the connection from
Lem. \ref{lemBVConnection} with respect to the operator
$\Delta=\Delta^{\omega}$ defined in
Def. \ref{defDelta}. Then $\nabla^{\Ber}=\nabla^{\Delta}$ agree.
In particular, it holds
\begin{align*}
\Delta(\partial_{\xi^k})=\str\left(\jmath\circ\nabla_{\dd{\xi^k}}\circ\jmath^{-1}\right)
\end{align*}
\end{Thm}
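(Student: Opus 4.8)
The plan is to show that the two connections $\nabla^{\Ber}$ and $\nabla^{\Delta}$ on $\Ber M$ agree by comparing their local connection coefficients on coordinate vector fields, since both are determined by the Leibniz rule once their action on $[d\xi]$ is known (and both are complex-linear on $\mT^{1,0}M$, so it suffices to check the holomorphic coordinate vectors $\partial_{\xi^k}$). Comparing the defining formula from Lem.~\ref{lemBVConnection},
\begin{align*}
(\nabla^{\Delta})_{\partial_{\xi^k}}[d\xi]=-\Delta(\partial_{\xi^k})\cdot[d\xi],
\end{align*}
with that of Lem.~\ref{lemCYBerConnection},
\begin{align*}
\nabla^{\Ber}_{\partial_{\xi^k}}[d\xi]=-\str\left(\jmath\circ(\nabla^{\xi})_{\partial_{\xi^k}}\circ\jmath^{-1}\right)\cdot[d\xi],
\end{align*}
shows that the asserted equality of connections is \emph{equivalent} to the displayed identity $\Delta(\partial_{\xi^k})=\str(\jmath\circ\nabla_{\partial_{\xi^k}}\circ\jmath^{-1})$. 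So the whole theorem reduces to proving that single local formula.

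To prove that formula, first I would invoke Prp.~\ref{prpCYTrivialBer} to obtain the Levi-Civita-parallel trivialising section $\omega\in\Ber M$ and, locally, write $\omega=h\cdot[d\xi]$ with $h$ even, invertible and holomorphic. Since $\omega$ is parallel with respect to $\nabla^{\Ber}$, the same computation as in the proof of Prp.~\ref{prpBVParallelSection} gives
\begin{align*}
0=\nabla^{\Ber}_{\partial_{\xi^k}}(h\cdot[d\xi])=\left(\partial_{\xi^k}(h)-h\cdot\str(\jmath\circ\nabla_{\partial_{\xi^k}}\circ\jmath^{-1})\right)[d\xi],
\end{align*}
hence $\str(\jmath\circ\nabla_{\partial_{\xi^k}}\circ\jmath^{-1})=h^{-1}\partial_{\xi^k}(h)$. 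On the other hand, this very $\omega$ is the section used to define $\Delta=\Delta^{\omega}$, and Lem.~\ref{lemDeltaOmega} gives $\Delta^{\omega}(\partial_{\xi^k})=\partial_{\xi^k}(h)\,h^{-1}=h^{-1}\partial_{\xi^k}(h)$, using that $h$ is even and hence central. Comparing the two expressions yields the desired identity $\Delta(\partial_{\xi^k})=\str(\jmath\circ\nabla_{\partial_{\xi^k}}\circ\jmath^{-1})$, and therefore $\nabla^{\Ber}=\nabla^{\Delta}$.

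The one genuine point requiring care is that the section $\omega$ furnished by Prp.~\ref{prpCYTrivialBer} (parallel for $\nabla^{\Ber}$) really is the \emph{same} section entering the hypothesis $\Delta=\Delta^{\omega}$ of the theorem; this is guaranteed by the statement of Thm.~\ref{thmCYConnections} itself, which fixes $\omega$ as the trivialising section of Prp.~\ref{prpCYTrivialBer}, so no separate uniqueness argument is needed beyond the up-to-a-constant ambiguity, which drops out of the logarithmic-derivative expression $h^{-1}\partial_{\xi^k}(h)$. The main obstacle, and the only step that is more than bookkeeping, is verifying that the supertrace expression $\str(\jmath\circ\nabla_{\partial_{\xi^k}}\circ\jmath^{-1})$ genuinely equals the logarithmic derivative of the Berezinian density $h$; this is exactly the content of Jacobi's formula $(\ref{eqnJacobiFormula})$ relating the variation of a superdeterminant to the supertrace, applied along the coordinate direction, and is precisely what underlies the construction of $\nabla^{\Ber}$ in Lem.~\ref{lemCYBerConnection}. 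Once that linearised Jacobi identity is in hand, the remaining comparison is immediate.
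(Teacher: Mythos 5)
Your proof is correct and follows essentially the same route as the paper: both arguments compare the two connections' local coefficients on $[d\xi]$ via the parallel section $\omega=h\cdot[d\xi]$, the paper citing the second part of the proof of Thm.~\ref{thmWeakCYBV} for $\nabla^{\Delta}$-parallelness where you instead quote Lem.~\ref{lemDeltaOmega} directly -- both being the same local identity $\Delta(\partial_{\xi^k})=h^{-1}\partial_{\xi^k}(h)$. One small correction: your closing remark attributes the key step to Jacobi's formula (\ref{eqnJacobiFormula}), but that formula only enters the well-definedness of $\nabla^{\Ber}$ in Lem.~\ref{lemCYBerConnection}, which you may take as given; the computation you actually wrote requires nothing beyond the Leibniz rule, exactly as in the paper.
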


\begin{proof}
By assumption, $\omega$ is parallel with respect to $\nabla^{\Ber}$.
By the second part of the proof of Thm. \ref{thmWeakCYBV},
it is also parallel with respect to $\nabla^{\Delta}$.
Writing $\omega=h\cdot[d\xi]$ (in coordinates $(\xi^k)$), we thus find
\begin{align*}
\left(\partial_{\xi^k}(h)-h\Delta(\partial_{\xi^k})\right)[d\xi]
=\nabla^{\Delta}_{\partial_{\xi^k}}\omega=0
=\nabla^{\Ber}_{\partial_{\xi^k}}\omega
=\left(\partial_{\xi^k}(h)-h\,\str\left(\jmath\circ\nabla_{\partial_{\xi^k}}\circ\jmath^{-1}\right)\right)[d\xi]
\end{align*}
from which the formula claimed is obvious. Moreover, the connections agree.
\end{proof}

\section*{Acknowledgements}

I would like to thank Anton Galaev for helpful discussions concerning
supergroups and Alexander Alldridge for making suggestions
which helped to improve a previous version of the article.

\appendix

\section{Elements of Complex Supergeometry}
\label{secComplexSupergeometry}

This appendix is meant to serve as a short introduction
to selected elements of complex supergeometry needed in the main text.
Although the theory presented here is certainly not new,
we consider it worth to have it collected in a uniform and consistent fashion.
The definitions, conventions and results contained for easy reference
allow for a more concise form of the main text with less disruptions,
as a reward for the few extra pages here.

The first subsection is rather general and
applies simultaneously to the smooth and complex categories.
This is different from the main text, and the remainder of this appendix.
Starting with the second subsection, topics include definition
and properties of the Schouten-Nijenhuis bracket,
parallel transport and superholonomy and, finally, an exposition on
the canonical bundle and integral forms.

\subsection{Supermanifolds and Vector Bundles}

Let $M=(M_{\overline{0}},\mO_M)$ be a supermanifold in the sense of
Berezin-Kostant-Leites (\cite{Lei80}, \cite{Var04}, \cite{CCF11})
with underlying classical manifold $M_{\overline{0}}$.
We allow $M$ to be either real or complex.
Ignorant of some abuse of notation, a (super) vector bundle $\mE$ on $M$
can be defined as a sheaf of locally free $\mO_M$-supermodules.
The definition implies that transition functions are smooth (holomorphic)
if $M$ is a real (complex) supermanifold.
Examples include the (super) tangent bundle $\mT M$ which, by definition,
is the sheaf of $\mO_M$-superderivations.

Let $\mE\rightarrow N$ be a super vector bundle over $N$ and $\varphi:M\rightarrow N$ be a morphism
of supermanifolds. The pullback of $\mE$ under $\varphi$ is defined as
\begin{align}
\label{eqnPullbackBundle}
\varphi^*\mE(U):=\mO_M(U)\otimes_{\varphi}(\varphi_0^*\mE)(U)\;,\qquad U\subseteq M_0\quad\mathrm{open}
\end{align}
Here, $\varphi_0^*\mE$ is the pullback of the sheaf $\mE$ under the continuous map $\varphi_0$
which, in terms of its sheaf space, is the bundle of stalks $\mE_{\varphi_0(x)}$ attached to $x\in M_0$.
(\ref{eqnPullbackBundle}) indeed yields a vector bundle $\varphi^*\mE$ on $M$ of rank $\rk\mE$.
For details, consult \cite{Han12} and \cite{Ten75}.
In the same context, there is a canonical notion of pullback
$\varphi_0^*X\in\varphi_0^*\mE$ for a section $X\in\mE$, and the extension
\begin{align*}
\varphi^*X:=1\otimes_{\varphi}\varphi_0^*X\in\varphi^*\mE
\end{align*}
The definition is such that
\begin{align*}
\varphi^*(X\cdot f)=\varphi^*X\cdot\varphi^{\sharp}(f)\qquad\mathrm{for}\quad X\in\mE\;,\quad f\in\mO_N
\end{align*}
A local frame $(T^k)$ of $\mE$ gives rise to a local frame $(\varphi_0^*T^k)$ of $\varphi_0^*\mE$ and a local frame $(\varphi^*T^k)$ of $\varphi^*\mE$ such that,
locally, every section $X\in\varphi^*\mE$ can be written $X=\varphi^*T^k\cdot X^k$ with $X^k\in\mO_M(U)$.

In the case of the tangent bundle $\mT N$, there is a canonical
identification of the pullback $\varphi^*\mT N$ with the sheaf of derivations
along $\varphi$, through the prescription
\begin{align*}
\varphi^*X\mapsto\varphi^{\sharp}\circ X
\end{align*}
with the right hand side acting on functions $f\in\mO_N$.
Similarly, the pullback $\varphi^*\mT^*N$ of the cotangent bundle is
identified with the dual $(\varphi^*\mT N)^*$ through
\begin{align*}
\varphi^*\xi\mapsto\left(\varphi^*X\mapsto(\varphi^*\xi)(\varphi^*X):=\varphi^{\sharp}(\xi(X))\right)\qquad\mathrm{for}\quad\xi\in\mT^*N
\end{align*}
In the following, we shall use the aforementioned identifications
without an explicit mention.

The differential of $\varphi$ is defined by
\begin{align*}
d\varphi:\mT M\rightarrow\varphi^*\mT N\;,\qquad d\varphi[X]:=X\circ\varphi^{\sharp}
\end{align*}
In the case that $\varphi$ is an isomorphism, there is a further
identification of the pullback of a vector field $X\in\mT N$ or covector field $\xi\in\mT^*N$, respectively, as follows.
\begin{align}
\label{eqnIsoPullback}
\varphi^*X\mapsto\varphi^{\star}X:=\varphi^{\sharp}\circ X\circ(\varphi^{-1})^{\sharp}\in\mT M\;,\qquad\varphi^*\xi\mapsto\varphi^{\star}\xi:=(\varphi^*\xi)\circ d\varphi\in\mT^*M
\end{align}
We state the following natural properties.
\begin{align*}
(\varphi^{\star}\xi)(\varphi^{\star}X)=\varphi^{\sharp}(\xi(X))\;,\qquad
\varphi^{\star}df=d(\varphi^{\sharp}(f))\quad\mathrm{for}\qquad f\in\mO_N
\end{align*}

An isomorphism $\varphi:M\supseteq U\mapsto\bR^{n|m}$ is determined by
coordinates, that is
superfunctions, $(\xi^k)=(x^1,\ldots,x^n,\theta^1,\ldots,\theta^m)$ on $U$,
through the pullback of the corresponding functions on $\bR^{n|m}$.
Given coordinates, (co)vector fields on $U$ and (co)vector fields on $\bR^{n|m}$
are identified through (\ref{eqnIsoPullback}).
A vector field $X\in\mT M$ has a local expression
$X=\dd{\xi^l}\cdot X^l$ where $\dd{\xi^k}$ corresponds to the canonical
derivation on $\bR^{n|m}$ and $X^l$ is a uniquely determined superfunction on $U$.
With $(\zeta^i)$ denoting coordinates on $N$, the differential of
a map $\varphi:M\rightarrow N$ has the following local expression.
\begin{align*}
d\varphi^i_{\phantom{i}k}:=(-1)^{(\abs{\xi^k}+\abs{\zeta^i})\cdot\abs{\zeta^i}}\dd[\varphi^{\sharp}(\zeta^i)]{\xi^k}\quad
\mathrm{s.th.}\quad d\varphi[X]=\sum_{i,k}\left(\varphi^{\sharp}\circ\dd{\zeta^i}\right)\cdot d\varphi^i_{\phantom{i}k}\cdot X^k
\end{align*}
Interchanging derivatives gives rise to the following formula.
\begin{align}
\label{eqnDifferentialHessian}
\partial_j(d\varphi^m_{\phantom{m}n})
=(-1)^{\abs{n}\abs{j}+\abs{n}\abs{m}+\abs{j}\abs{m}}\partial_n(d\varphi^m_{\phantom{m}j})
\end{align}
We state the chain rule next.
Let $\varphi:M\rightarrow N$ and $\psi:N\rightarrow P$ be morphisms. Then
\begin{align}
d(\psi\circ\varphi)[X]
=\left(\varphi^{\sharp}\circ\psi^{\sharp}\circ\dd{\pi^l}\right)\cdot\varphi^{\sharp}(d\psi^l_{\phantom{l}i})\cdot d\varphi^i_{\phantom{i}k}\cdot X^k
\end{align}
with $(\pi^l)$ coordinates on $P$ and indices $k$, $i$ referring to (unlabelled)
coordinates on $M$ and $N$, respectively.

Let $\varphi$ be invertible. We write $d\varphi^{-1}:=(d\varphi)^{-1}$.
In terms of coordinates on $M$ and $N$, this is related to $d(\varphi^{-1})$ by
\begin{align}
\label{eqnDifferentialInverse}
(d\varphi^{-1})^l_{\phantom{l}m}
=\varphi^{\sharp}\left(d(\varphi^{-1})^l_{\phantom{l}m}\right)
\end{align}
as follows immediately from the chain rule. In this context, Jacobi's formula
reads as follows, with $X\in\mT M$ of definite parity.
\begin{align}
\label{eqnJacobiFormula}
X(\sdet d\varphi)=(-1)^{m+\abs{X}(m+n)}\sdet d\varphi\cdot(d\varphi^{-1})^m_{\phantom{m}n}\cdot X(d\varphi^n_{\phantom{n}m})
\end{align}
The even case was proved as Lem. 2.4 of \cite{Gro14b}, and the odd case follows
e.g. by means of introducing an additional odd generator.
(\ref{eqnJacobiFormula}) has the following corollary.

\begin{Lem}
\label{lemJacobiSum}
Let $\varphi:M\rightarrow N$ be invertible. Then
\begin{align*}
\sum_j\partial_j\left(\sdet d\varphi\cdot(d\varphi^{-1})^j_{\phantom{j}k}\right)=0
\end{align*}
\end{Lem}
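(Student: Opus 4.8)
The statement is the supergeometric counterpart of the classical fact that the cofactor matrix of a Jacobian is divergence-free (the Jacobi, or Piola, identity), and the plan is to reduce it to three ingredients already available: the graded Leibniz rule, Jacobi's formula (\ref{eqnJacobiFormula}), and the symmetry of second derivatives (\ref{eqnDifferentialHessian}). First I would apply the Leibniz rule for the even derivation $\partial_j$, using that $\sdet d\varphi$ is even so that no sign is produced in the second summand, to split
\begin{align*}
\sum_j\partial_j\left(\sdet d\varphi\cdot(d\varphi^{-1})^j_{\phantom{j}k}\right)
=\sum_j\partial_j(\sdet d\varphi)\cdot(d\varphi^{-1})^j_{\phantom{j}k}
+\sdet d\varphi\cdot\sum_j\partial_j\left((d\varphi^{-1})^j_{\phantom{j}k}\right).
\end{align*}

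For the first summand I would insert Jacobi's formula (\ref{eqnJacobiFormula}) with $X=\partial_j$, turning $\partial_j(\sdet d\varphi)$ into $\sdet d\varphi$ times the supertrace $(d\varphi^{-1})^a_{\phantom{a}b}\,\partial_j(d\varphi^b_{\phantom{b}a})$ (summed over $a,b$, with the sign $(-1)^{m+\abs{j}(m+n)}$). For the second summand I would differentiate the defining identity $\sum_b(d\varphi^{-1})^a_{\phantom{a}b}\,d\varphi^b_{\phantom{b}c}=\delta^a_c$ and solve for $\partial_j\big((d\varphi^{-1})^j_{\phantom{j}k}\big)$, obtaining the usual expression $-\sum(d\varphi^{-1})^j_{\phantom{j}a}\,\partial_j(d\varphi^a_{\phantom{a}b})\,(d\varphi^{-1})^b_{\phantom{b}k}$ up to signs. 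After factoring out the common $\sdet d\varphi$, both summands are thus sums of terms of the schematic shape $(d\varphi^{-1})\,(\partial\,d\varphi)\,(d\varphi^{-1})$, differing only in the placement of the differentiating index and by a sign.

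The cancellation is then forced by the Hessian symmetry (\ref{eqnDifferentialHessian}): rewriting $\partial_j(d\varphi^b_{\phantom{b}a})$ as $\pm\,\partial_a(d\varphi^b_{\phantom{b}j})$ lets me relabel the summation indices so that the Jacobi contribution becomes term-by-term identical to the inverse-derivative contribution, whence their difference vanishes and the whole sum is zero. The combinatorial skeleton of this argument is exactly that of the classical even case, so the only genuine obstacle is sign bookkeeping: one must verify that the Koszul signs arising from supermatrix multiplication, the factor $(-1)^{m+\abs{j}(m+n)}$ from (\ref{eqnJacobiFormula}), and the factor $(-1)^{\abs{n}\abs{j}+\abs{n}\abs{m}+\abs{j}\abs{m}}$ from (\ref{eqnDifferentialHessian}) conspire to produce an exact cancellation rather than a nonzero multiple of the sum.
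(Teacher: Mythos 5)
Your proposal takes essentially the same route as the paper's proof: split the sum by the Leibniz rule, expand the first term with Jacobi's formula (\ref{eqnJacobiFormula}), handle the second via the derivative of the identity $(d\varphi^{-1})\,d\varphi=\mathrm{id}$, and cancel using the Hessian symmetry (\ref{eqnDifferentialHessian}). The paper performs exactly the sign bookkeeping you defer, transforming the Jacobi term step by step into minus the other term, so the cancellation is indeed exact and your plan is sound.
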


\begin{proof}
We use (\ref{eqnJacobiFormula}) to calculate term $(i)$ in
\begin{align*}
\partial_j(\sdet d\varphi(d\varphi^{-1})^{j}_{\phantom{j}k})
=\partial_j(\sdet d\varphi)\cdot(d\varphi^{-1})^{j}_{\phantom{j}k}
+\sdet d\varphi\cdot\partial_j(d\varphi^{-1})^{j}_{\phantom{j}k}
=:(i)+(ii)
\end{align*}
as follows, using (\ref{eqnDifferentialHessian}).
\begin{align*}
(i)&=(-1)^{n+jn+jm}\sdet d\varphi\cdot(d\varphi^{-1})^n_{\phantom{n}m}\partial_j(d\varphi^m_{\phantom{m}n})(d\varphi^{-1})^{j}_{\phantom{j}k}\\
&=(-1)^{n+nm}\sdet d\varphi\cdot(d\varphi^{-1})^n_{\phantom{n}m}\partial_n(d\varphi^m_{\phantom{m}j})(d\varphi^{-1})^{j}_{\phantom{j}k}\\
&=-\sdet d\varphi\cdot\partial_n(d\varphi^{-1})^n_{\phantom{n}m}d\varphi^m_{\phantom{m}j}(d\varphi^{-1})^{j}_{\phantom{j}k}\\
&=-\sdet d\varphi\cdot\partial_n(d\varphi^{-1})^n_{\phantom{n}k}\\
&=-(ii)
\end{align*}
\end{proof}

Let $\varphi:\bR^{n|m}\rightarrow\bR^{n|m}$ be a change of coordinates
$\zeta^j=\varphi^{\sharp}(\xi^j)$, symbolically denoted
$\varphi:\zeta\rightarrow\xi$.
By construction, the transformation of coordinate expressions of
(co)vectors under $\varphi$ is governed by (\ref{eqnIsoPullback}).
By means of the local expression of the differential, this can be rewritten
\begin{align}
\label{eqnCoordinateVFTrafo}
\varphi^{\star}\partial_{\xi^k}
=\partial_{\zeta^m}\cdot(d\varphi^{-1})^m_{\phantom{m}k}\;,\qquad
\varphi^{\star}d\xi^j=(-1)^{j+jm}d\zeta^m\cdot d\varphi^j_{\phantom{j}m}
=d\zeta^m\cdot\left(d\varphi^{ST}\right)^m_{\phantom{m}j}
\end{align}
Here, $d:\mO_M\rightarrow\mT^*M$ is the differential on superfunctions,
which we define with the convention
\begin{align*}
df[X]:=(-1)^{\abs{f}\abs{X}}X(f)\;,\qquad\mathrm{for}\quad X\in\mT M
\end{align*}
As usual, the superscript ''ST'' in (\ref{eqnCoordinateVFTrafo}) means
supertranspose, and the indices refer to the local basis elements
$(d\xi^m)$ and $(d\zeta^k)$ of $\mT^*M$ induced by the coordinate functions.
Nota bene, such a basis is related to the standard dual basis
of the coordinate vector fields by $(\partial_{\xi^m})^*=(-1)^md\xi^m$.

\subsection{Complex Supermanifolds and Differential Forms}

In the following, we consider a complex supermanifold $M=(M_{\overline{0}},\mO_M)$
of dimension $n|m$, which we always assume connected.
By a standard construction (see \cite{HW87}), $M$ carries a canonical
structure of a real supermanifold of dimension $2n|2m$, which is
compatible with the real manifold of dimension $2n$ associated with $M_{\overline{0}}$.
In terms of (complex) local coordinates $(\xi^k)$, the picture is as follows.
With a natural notion of conjugating superfunctions, the relations
\begin{align}
\label{eqnComplexCoordinates}
\xi^k_R=\frac12(\xi^k+\overline{\xi}^k)\;,\qquad
\xi^j_I=\frac1{2i}(\xi^j-\overline{\xi}^j)
\end{align}
define real coordinates $(\xi^k_R,\xi^j_I)$.
Likewise, a morphism $\varphi:M\rightarrow N$ between complex supermanifolds
canonically induces a morphism between the corresponding real supermanifolds.
We shall use the resulting functor frequently in an implicit manner, writing
$M$ and $\varphi$ for both the complex or real supermanifold or morphism,
respectively. An analogous comment applies to the sheaves
of holomorphic and real superfunctions. By a slight abuse of
notation, we shall write $\mO_M$ in either case.

We define a holomorphic (super) vector bundle on $M$
to be a vector bundle in the general sense of the previous subsection,
on $M$ considered as a complex supermanifold.
Likewise, a smooth (super) vector bundle on $M$ is a vector bundle
on $M$ considered as a real supermanifold.
There is a forgetful functor from holomorphic vector bundles
to smooth vector bundles.

The tangent bundle of a supermanifold, complex or real, was defined
in the previous subsection as the sheaf of superderivations with
respect to the sheaf of superfunctions.
In the present context, there are two variants thereof, according
to the complex or real pictures of $M$.
We adopt the following convention. The real and complex tangent bundles
shall be denoted by $\mT M$ and $\mT^{1,0}M$, respectively.
They are related to each other, analogous to the case of classical
complex manifolds, as follows. The complex structure of $M$ induces a
complex structure $J$ on the smooth bundle $\mT M$.
The eigensheaf with eigenvalue $+1$ of the complexification $\mT M\otimes\bC$
can be identified with the complex tangent bundle $\mT^{1,0}M$.
As usual, we denote the $-1$-eigensheaf by $\mT^{0,1}M$.
There is a canonical complex-linear isomorphism
\begin{align}
\label{eqnJmath}
\jmath:(\mT M,J)\rightarrow(\mT^{1,0}M,i)\;,\qquad X\mapsto\frac12(X-iJX)
\end{align}
In terms of complex coordinates $(\xi^k)$ with induced real coordinates
(\ref{eqnComplexCoordinates}), the respective canonical derivations
are related by
\begin{align}
\label{eqnLocalJmath}
\jmath\left(\partial_{\xi^k_R}\right)=\partial_{\xi^k}\;,\qquad
\jmath\left(\partial_{\xi^k_I}\right)=i\partial_{\xi^k}
\end{align}

In the rest of this subsection, we state some properties
of bundles of differential forms. Up to signs, this part largely parallels
the ungraded case and is, therefore, kept short.
The prescription
\begin{align*}
\Omega^{p,q}M:=\bigwedge^p(\mT^{1,0}M)^*\otimes\bigwedge^q(\mT^{0,1}M)^*
\end{align*}
defines smooth vector bundles over $M$
which are referred to as bundles of differential forms.
On $\Omega^{p,q}M$, there is a natural notion of exterior derivative,
and its canonical projection to $\Omega^{p,q+1}M$ is denoted $\dbar$.
We are mostly interested in the case $p=0$.
A section can then locally be written as a sum of elements of the form
$d\oxi^I\cdot f$, where $(\xi^k)$ are complex coordinates of $M$
with conjugation as in (\ref{eqnComplexCoordinates}),
and $I$ denotes some multiindex. The $\dbar$-operator then reads
\begin{align*}
\dbar\left(d\oxi^I\cdot f\right)
=\sum_k\left((-1)^{\abs{\xi^k}\abs{\xi^I}}d\oxi^k\wedge d\oxi^I\cdot\dd[f]{\xi^k}\right)
\end{align*}

We also consider differential forms with values in some holomorphic
vector bundle $\mE$, that is bundles of the form
\begin{align*}
\Omega^{0,q}(\mE):=\Omega^{0,q}M\otimes\mE
\end{align*}
Define the $\dbar$-operator on such bundles via
\begin{align}
\label{eqnDbar}
\dbar(\alpha\otimes e):=\dbar(\alpha)\otimes e
\end{align}
This is well-defined, since $e$ transforms holomorphically (annihilated by $\dbar$).

Of particular interest will be multivector fields $\mE=\bigwedge^p\mT^{1,0}M$.
Throughout, we adopt the degree conventions explained in Sec.
\ref{secBV}. In particular, the cohomological degree of
a section $\alpha\in\Omega^{0,q}(\bigwedge^p\mT^{1,0}M)$ is $\deg(\alpha)=p+q$.
The Schouten-Nijenhuis bracket, to be defined next, is a natural
superisation of the classical bracket bearing that name (see e.g.
Chp. 6 of \cite{Huy05} for a formula)
It extends the vector field bracket.

\begin{Def}[Schouten-Nijenhuis bracket]
\label{defSchoutenBracket}
Let $\scal[[]{\cdot}{\cdot}$ be the map
\begin{align*}
\scal[[]{\cdot}{\cdot}:\Omega^{0,q}\left(\bigwedge^p\mT^{1,0}M\right)\times\Omega^{0,q'}\left(\bigwedge^{p'}\mT^{1,0}M\right)\longrightarrow\Omega^{0,q+q'}\left(\bigwedge^{p+p'-1}\mT^{1,0}M\right)
\end{align*}
defined as follows.
For $f\in\mO_M$ and $v_i,w_j\in\mT^{1,0}X$, we set
\begin{align*}
\scal[[]{f}{v_1\wedge\ldots\wedge v_p}:&=-(-1)^{(i-1)+\abs{v_i}\left(\sum_{l=1}^{i-1}\abs{v_l}+\abs{f}\right)}v_i(f)\cdot v_1\wedge\ldots\wedge\widehat{v_i}\wedge\ldots\wedge v_p\\
\scal[[]{v_1\wedge\ldots\wedge v_p}{f}:&=-(-1)^{(p+1)+\abs{f}\left(\abs{v_1}+\ldots+\abs{v_p}\right)}\scal[[]{f}{v_1\wedge\ldots\wedge v_p}
\end{align*}
and
\begin{align*}
&\scal[[]{w_1\wedge\ldots\wedge w_{p'}}{v_1\wedge\ldots\wedge v_p}\\
&\qquad:=\sum_{j,i}(-1)^{M_{(i,j)}}\scal[[]{w_j}{v_i}\wedge(w_1\wedge\ldots\wedge\widehat{w_j}\wedge\ldots)\wedge(v_1\wedge\ldots\widehat{v_i}\wedge\ldots)
\end{align*}
with the sign
\begin{align*}
M_{(i,j)}:=i+j+\abs{w_j}\sum_{l=1}^{j-1}\abs{w_l}+\abs{v_i}\left(\sum_{l=1}^{i-1}\abs{v_l}+\sum_{l=1}^{p'}\abs{w_l}+\abs{w_j}\right)
\end{align*}
With $(\xi^k)$ local coordinates and $I$,$J$,$K$,$L$ denoting multiindices, we
further prescribe
\begin{align*}
&\scal[[]{fd\oxi^I\otimes\partial_{\xi^J}}{gd\oxi^K\otimes\partial_{\xi^L}}\\
&\qquad\qquad\qquad:=(-1)^{q'(p+1)+\abs{f}\abs{\xi^I}+\abs{\xi^K}(\abs{g}+\abs{\xi^J}+\abs{f})}
d\oxi^I\wedge d\oxi^K\otimes\scal[[]{f\partial_{\xi^J}}{g\partial_{\xi^L}}
\end{align*}
where $q=\abs{I}$, $p=\abs{J}$, $q'=\abs{K}$, $p'=\abs{L}$.
\end{Def}

Although written in terms of
coordinates $(\xi^k)$ on $M$, the resulting bracket is welldefined
since the transformation of $d\oxi^I$ gets annihilated by $v\in\mT^{1,0}M$.

\begin{Lem}
\label{lemSchoutenSymmetry}
The Schouten-Nijenhuis bracket satisfies the following symmetry.
\begin{align*}
\scal[[]{\alpha}{\beta}=-(-1)^{(\deg(\alpha)+1)(\deg(\beta)+1)+\abs{\alpha}\abs{\beta}}\scal[[]{\beta}{\alpha}
\end{align*}
\end{Lem}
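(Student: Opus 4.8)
The plan is to reduce, by the bilinearity of $\scal[[]{\cdot}{\cdot}$ in each argument, to the case of monomials written in a common coordinate chart $(\xi^k)$, say $\alpha=f\,d\oxi^I\otimes\partial_{\xi^J}$ and $\beta=g\,d\oxi^K\otimes\partial_{\xi^L}$, so that $\deg(\alpha)=\abs{I}+\abs{J}$, $\deg(\beta)=\abs{K}+\abs{L}$, and $\abs{\alpha}=\abs{f}+\abs{\xi^I}+\abs{\xi^J}$ with $\abs{\xi^I}=\sum_{k\in I}\abs{\xi^k}$. For such monomials the last prescription of Def. \ref{defSchoutenBracket} separates the $(0,*)$-form part $d\oxi^I\wedge d\oxi^K$ from the multivector part $\scal[[]{f\partial_{\xi^J}}{g\partial_{\xi^L}}$, up to an explicit prefactor sign. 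I would therefore establish the symmetry first for the multivector part and then recombine with the form part.

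For the multivector part I would start from the base case of two single tangent vectors $v,w\in\mT^{1,0}M$ of definite parity, where $\scal[[]{\cdot}{\cdot}$ is the graded Lie bracket and hence $\scal[[]{w}{v}=-(-1)^{\abs{w}\abs{v}}\scal[[]{v}{w}$; since $\deg(v)=\deg(w)=1$ one has $(\deg(w)+1)(\deg(v)+1)\equiv 0$, so this is exactly the asserted identity and serves as the seed of the overall sign. I would then extend to products $w_1\wedge\ldots\wedge w_{p'}$ and $v_1\wedge\ldots\wedge v_p$ via the summation formula of Def. \ref{defSchoutenBracket}: computing $\scal[[]{v_1\wedge\ldots\wedge v_p}{w_1\wedge\ldots\wedge w_{p'}}$ by the same formula, I reindex the double sum by $(i,j)\mapsto(j,i)$, apply the single-vector antisymmetry to each $\scal[[]{w_j}{v_i}$, and reorder the two surviving wedge factors $(w_1\wedge\ldots\widehat{w_j}\ldots)\wedge(v_1\wedge\ldots\widehat{v_i}\ldots)$ into the opposite order. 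The product of signs — $M_{(i,j)}$, the reindexed $M_{(j,i)}$, the antisymmetry sign, and the reordering sign — should collapse to $-(-1)^{(p'+1)(p+1)+\abs{W}\abs{V}}$, where $\abs{W}=\sum_l\abs{w_l}$ and $\abs{V}=\sum_l\abs{v_l}$. Function coefficients are carried through the derivation structure encoded in the definition; the mixed function–multivector brackets are governed by the first two formulas of Def. \ref{defSchoutenBracket}, whose mutual sign already realises the required antisymmetry with $\deg(f)=0$, so they are consistent with the same collapse.

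Finally I would reinstate the form part. In the graded-commutative algebra of $(0,*)$-forms one has $d\oxi^I\wedge d\oxi^K=(-1)^{\abs{I}\abs{K}+\abs{\xi^I}\abs{\xi^K}}d\oxi^K\wedge d\oxi^I$. Here $(-1)^{\abs{I}\abs{K}}$ supplies precisely the form-degree cross term feeding into $(\deg(\alpha)+1)(\deg(\beta)+1)$, while $(-1)^{\abs{\xi^I}\abs{\xi^K}}$ feeds into $\abs{\alpha}\abs{\beta}$. Comparing the explicit prefactor $(-1)^{q'(p+1)+\abs{f}\abs{\xi^I}+\abs{\xi^K}(\abs{g}+\abs{\xi^J}+\abs{f})}$ attached to $\scal[[]{\alpha}{\beta}$ with its index-swapped analogue attached to $\scal[[]{\beta}{\alpha}$, and combining with the form-reordering sign and the multivector symmetry just obtained, the total sign should reorganise into $-(-1)^{(\deg(\alpha)+1)(\deg(\beta)+1)+\abs{\alpha}\abs{\beta}}$, using $\deg(\alpha)=\abs{I}+\abs{J}$ and $\abs{\alpha}=\abs{f}+\abs{\xi^I}+\abs{\xi^J}$.

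The main obstacle is purely the sign bookkeeping in the two recombination steps, carried out for two gradings at once. The delicate point is to verify that the cohomological-degree contributions (the $\abs{I}\abs{K}$ from reordering forms, the $q'(p+1)$ from the prefactor, and the $(p'+1)(p+1)$ from the multivector part) assemble into the single cross term $(\deg(\alpha)+1)(\deg(\beta)+1)$, while the parity contributions ($\abs{\xi^I}\abs{\xi^K}$, the $\abs{f},\abs{g}$-dependent prefactor terms, and $\abs{W}\abs{V}$) assemble independently into $\abs{\alpha}\abs{\beta}$. I expect no conceptual difficulty beyond disciplined tracking of the two gradings; the $+1$-shifts in the exponent are traceable to the fact that $\scal[[]{\cdot}{\cdot}$ lowers the multivector degree by one, i.e. is a graded Lie bracket for the shifted grading, with the single-vector case pinning down the shift, together with the odd cohomological degree of the $(0,1)$-form generators $d\oxi^k$.
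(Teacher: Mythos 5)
Your proposal is correct and is essentially the paper's own argument: the paper disposes of this lemma with the single line ``this follows from the definition by a straightforward calculation,'' and your plan (reduce to monomials, establish the sign collapse for the pure multivector case by reindexing the double sum, then recombine with the $(0,*)$-form prefactor) is precisely that calculation, carried out correctly --- I checked that both your multivector sign identity and the final assembly of the cohomological and parity cross terms do work out modulo $2$.
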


\begin{proof}
This follows from the definition by a straightforward calculation.
\end{proof}

\begin{Def}
\label{defExteriorProduct}
Exterior product:
\begin{align*}
\Omega^{0,p}\left(\bigwedge^r\mT^{1,0}M\right)\times\Omega^{0,q}\left(\bigwedge^s\mT^{1,0}M\right)
\rightarrow\Omega^{0,p+q}\left(\bigwedge^{r+s}\mT^{1,0}M\right)
\end{align*}
defined as follows, for $\oalpha\in\Omega^{0,p}M$, $\beta\in\Gamma(\bigwedge^r\mT^{1,0}M)$, $\ogamma\in\Omega^{0,q}M$ and $\delta\in\bigwedge^s\mT^{1,0}M$.
\begin{align*}
(\oalpha\otimes\beta)\wedge(\ogamma\otimes\delta)
:=(-1)^{rq+\abs{\beta}\abs{\ogamma}}(\oalpha\wedge\ogamma)\otimes(\beta\wedge\delta)
\end{align*}
\end{Def}

\begin{Lem}
\label{lemSchoutenDerivation}
Let $\alpha\in\Omega^{0,q}(\bigwedge^p\mT^{1,0}M)$. The map
\begin{align*}
\scal[[]{\alpha}{\cdot}:\Omega^{0,q'}\left(\bigwedge^{p'}\mT^{1,0}M\right)\longrightarrow\Omega^{0,q+q'}\left(\bigwedge^{p+p'-1}\mT^{1,0}M\right)
\end{align*}
is a derivation. More precisely,
\begin{align*}
\scal[[]{\alpha}{\beta\wedge\gamma}
=\scal[[]{\alpha}{\beta}\wedge\gamma
+(-1)^{(\deg(\alpha)+1)\deg(\beta)+\abs{\alpha}\abs{\beta}}\beta\wedge\scal[[]{\alpha}{\gamma}
\end{align*}
\end{Lem}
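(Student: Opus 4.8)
The plan is to verify the Leibniz identity by reducing, via $\mathbb C$-bilinearity of both the Schouten--Nijenhuis bracket and the exterior product (Def.~\ref{defExteriorProduct}), to the case where $\alpha$, $\beta$ and $\gamma$ are local monomials of the form $f\,d\oxi^I\otimes\partial_{\xi^J}$. First I would peel off the antiholomorphic form factors $d\oxi^I$. By the last formula of Def.~\ref{defSchoutenBracket} the bracket acts on the form parts only through a fixed Koszul sign followed by a wedge of forms, and the remark following that definition guarantees that the vector entries annihilate the transforming $d\oxi$'s; combined with Def.~\ref{defExteriorProduct}, this isolates the form degrees as an overall sign and reduces the statement to the purely multivectorial case $\alpha=v_1\wedge\cdots\wedge v_s$, with $\beta,\gamma$ products of vector fields in $\mT^{1,0}M$ (superfunction coefficients being absorbed into the $v_i$, $w_j$). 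The cost of this reduction is the bookkeeping of the signs relating the cohomological degree $p+q$ to the parity, which must ultimately recombine into the advertised factor $(-1)^{(\deg(\alpha)+1)\deg(\beta)+\abs{\alpha}\abs{\beta}}$.

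For the multivectorial core I would write $\beta=w_1\wedge\cdots\wedge w_a$ and $\gamma=w_{a+1}\wedge\cdots\wedge w_{a+b}$, so that $\beta\wedge\gamma=w_1\wedge\cdots\wedge w_{a+b}$, and expand $\scal[[]{\alpha}{\beta\wedge\gamma}$ using the double-sum formula of Def.~\ref{defSchoutenBracket} (applying the graded antisymmetry of Lem.~\ref{lemSchoutenSymmetry} to place $\alpha$ on the side whose factors carry the summation index). The decisive step is to split the index $j$ labelling the factors of $\beta\wedge\gamma$ into the ranges $1\le j\le a$ and $a<j\le a+b$. Terms of the first range produce a contribution $\scal[[]{w_j}{v_i}$ sitting among the surviving factors of $\beta$, with those of $\gamma$ merely spectating; these should reassemble precisely into $\scal[[]{\alpha}{\beta}\wedge\gamma$. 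Terms of the second range should likewise reassemble into $\beta\wedge\scal[[]{\alpha}{\gamma}$, up to the reordering needed to move the untouched block $\beta$ past $\alpha$, which is exactly what produces the sign $(-1)^{(\deg(\alpha)+1)\deg(\beta)+\abs{\alpha}\abs{\beta}}$.

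It remains to dispose of the degenerate cases in which $\beta$ or $\gamma$ has multivector degree zero, i.e. is a superfunction. Here $\beta\wedge\gamma$ is ordinary multiplication, and I would invoke the first two prescriptions of Def.~\ref{defSchoutenBracket} together with the super-Leibniz rule for $v_i(f\cdot g)$; the resulting identity matches the general sign, so no separate argument is needed once the coordinate expansion is written out. Alternatively, the whole lemma can be organised as an induction on $\deg(\beta)$ with the single-vector and single-function cases as base, using associativity of the wedge and Lem.~\ref{lemSchoutenSymmetry} in the inductive step; I regard the direct split above as cleaner.

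The hard part is neither conceptual nor structural but sign-theoretic: under the adopted ``point of view~I'' the cohomological degree $\deg=p+q$ and the parity $\abs{\cdot}$ contribute two independent families of signs, and the combinatorial sign $M_{(i,j)}$ attached to omitting the $i$-th and $j$-th factors must be shown, after the split, to coincide simultaneously with the reordering signs implicit in Def.~\ref{defExteriorProduct} and with the target exponent $(\deg(\alpha)+1)\deg(\beta)+\abs{\alpha}\abs{\beta}$. Getting both gradings to reconcile in the second partial sum---where an entire block is transported past $\alpha$---is where I expect essentially all of the effort to lie.
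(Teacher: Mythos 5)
Your proposal is correct and takes essentially the same route as the paper: the paper's (very terse) proof likewise establishes the purely multivectorial case $\alpha,\beta,\gamma\in\Omega^{0,0}(\bigwedge\mT^{1,0}M)$ first and then deduces the general case by tracking the form factors through the last prescription of Def.~\ref{defSchoutenBracket}, exactly as in your reduction. Your splitting of the summation over the factors of $\beta\wedge\gamma$ into the $\beta$-range and the $\gamma$-range is a correct organisation of the core computation that the paper dismisses as ``lengthy, but otherwise straightforward.''
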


\begin{proof}
This follows from a lengthy, but otherwise straightforward, calculation
in local expressions, using the definitions. One proves the case
$\alpha,\beta,\gamma\in\Omega^{0,0}(\bigwedge\mT^{1,0}M)$ first
and deduces from this the general case.
\end{proof}

\subsection{Connections, Parallel Transport and Holonomy}

Let $M$ continue to denote a complex supermanifold with associated
real supermanifold referred to by $M$, too.
A connection on a smooth vector bundle $\mE$ on $M$ is an even real-linear
sheaf morphism
\begin{align*}
\nabla:\mE\rightarrow\mT M^*\otimes_{\mO_M}\mE\;,\qquad
\nabla(fe)=df\otimes_{\mO_M}e+f\cdot\nabla e\qquad\mathrm{for}\quad f\in\mO_M
\end{align*}
with $\mO_M$ denoting the real superfunctions.
In the case of a holomorphic vector bundle, we refer to a connection
as a connection in the above sense on the associated smooth vector bundle.
In particular, in our point of view, we do not demand complex linearity
but, if present, consider it as an extra structure.

A connection $\nabla$ on the tangent bundle $\mE=\mT M$ induces a connection
on $\mT^{1,0}M$,
in the following referred to by the same symbol, through the prescription
\begin{align}
\label{eqnJmathNabla}
X\mapsto\jmath\circ\nabla_{\cdot}\circ\jmath^{-1}(X)
\end{align}
involving the canonical isomorphism (\ref{eqnJmath}).
This connection is complex linear precisely if $\nabla J=0$ (K\"ahler case).
In this case, it coincides with the complex-linear
extension $\nabla^{\bC}$ restricted to $\mT^{1,0}M\subseteq\mT^{\bC}M$.

We shall need the transformation behaviour of a connection $\nabla$ on $\mT M$
under a transformation of coordinates $\varphi:\zeta\rightarrow\xi$.
In terms of coordinates $(\xi^i)$, the connection is determined by its
left or right Christoffel symbols
\begin{align}
\label{eqnChristoffelSymbols}
\nabla_{\partial_{\xi^k}}\partial_{\xi^l}=\Gamma^q_{kl}\cdot\partial_{\xi^q}=\partial_{\xi^q}\cdot ^R\Gamma^q_{kl}
\end{align}
The pullback under a (local) isomorphism $\varphi$, as demanded by
(\ref{eqnIsoPullback}), is defined as follows
\begin{align}
\label{eqnConnectionPullback}
(\varphi^{\star}\nabla)_{\varphi^{\star}X}(\varphi^{\star}Y):=\varphi^{\star}(\nabla_XY)
\end{align}
By the discussion leading to (\ref{eqnCoordinateVFTrafo}), equation
(\ref{eqnConnectionPullback}) describes the transformation of
the local expression for $\nabla$
in case $\varphi:\zeta\rightarrow\xi$ is a change of coordinates.
The Christoffel symbols with respect to $\zeta$ are then given by
\begin{align*}
\hat{\Gamma}^a_{bc}\cdot\partial_{\zeta^a}=(\varphi^{\star}\nabla)_{\partial_{\zeta^a}}\partial_{\zeta^b}
\end{align*}
They are related to $\Gamma$ by the following result.

\begin{Lem}
\label{lemChristoffelTrafo}
Let $\nabla$ be a connection on $\mT M$ and $\varphi:\zeta\rightarrow\xi$
be a coordinate transformation. Then, with the notation introduced above,
\begin{align*}
&\varphi^{\sharp}(\Gamma_{kl}^p)=(-1)^{m(m+k)+ql+p(p+q)}\\
&\qquad\qquad\qquad(d\varphi^{-1})^m_{\phantom{m}k}\cdot\left((-1)^{qn}\hat{\Gamma}^q_{mn}(d\varphi^{-1})^n_{\phantom{n}l}+(-1)^q\partial_{\xi^m}(d\varphi^{-1})^q_{\phantom{q}l}\right)\cdot(d\varphi)^p_{\phantom{p}q}
\end{align*}
\end{Lem}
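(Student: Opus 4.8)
The plan is to carry out a direct local computation, starting from the defining relation (\ref{eqnConnectionPullback}) of the pullback connection together with the transformation law (\ref{eqnCoordinateVFTrafo}) for coordinate vector fields. The conceptual content is minimal: the statement is the super-analogue of the classical inhomogeneous transformation law of Christoffel symbols, and its overall shape — a homogeneous $\hat{\Gamma}$-term plus an inhomogeneous derivative term, conjugated by the Jacobian $d\varphi$ — is exactly what one expects. The work lies entirely in the sign bookkeeping.

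First I would rewrite both slots of the pullback connection in terms of $\zeta$-coordinate fields. By (\ref{eqnCoordinateVFTrafo}) we have $\varphi^{\star}\partial_{\xi^k}=\partial_{\zeta^m}\cdot(d\varphi^{-1})^m_{\phantom{m}k}$ and likewise for the $\xi^l$-slot, so that
\begin{align*}
(\varphi^{\star}\nabla)_{\varphi^{\star}\partial_{\xi^k}}(\varphi^{\star}\partial_{\xi^l})
=(\varphi^{\star}\nabla)_{\partial_{\zeta^m}\cdot(d\varphi^{-1})^m_{\phantom{m}k}}\big(\partial_{\zeta^n}\cdot(d\varphi^{-1})^n_{\phantom{n}l}\big).
\end{align*}
Next I would strip the function off the first slot using $\mO_M$-linearity of a connection in its directional argument, and expand the second slot by the Leibniz rule. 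This produces a homogeneous term $\hat{\Gamma}^q_{mn}\cdot\partial_{\zeta^q}\cdot(d\varphi^{-1})^n_{\phantom{n}l}$, via the defining relation (\ref{eqnChristoffelSymbols}) for $\varphi^{\star}\nabla$ in $\zeta$-coordinates, and an inhomogeneous term in which $\partial_{\zeta^m}$ falls on $(d\varphi^{-1})^n_{\phantom{n}l}$. Each manipulation generates Koszul signs from commuting the (possibly odd) entries $(d\varphi^{-1})^m_{\phantom{m}k}$ past the coordinate derivations, and these account for the exponents $m(m+k)$, $ql$ and the interior factors $(-1)^{qn}$, $(-1)^q$ in the claimed formula.

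On the other side I would expand $\varphi^{\star}(\nabla_{\partial_{\xi^k}}\partial_{\xi^l})=\varphi^{\star}(\Gamma^p_{kl}\cdot\partial_{\xi^p})$ using the section-pullback rule $\varphi^{\star}(X\cdot f)=\varphi^{\star}X\cdot\varphi^{\sharp}(f)$ together with (\ref{eqnCoordinateVFTrafo}), so that it reads $\varphi^{\sharp}(\Gamma^p_{kl})\cdot\partial_{\zeta^q}\cdot(d\varphi^{-1})^q_{\phantom{q}p}$ up to sign. Equating the two expressions and comparing coefficients of the frame $(\partial_{\zeta^q})$ yields a relation between $\varphi^{\sharp}(\Gamma^p_{kl})$ contracted with $(d\varphi^{-1})^q_{\phantom{q}p}$ and the bracketed expression; solving for $\varphi^{\sharp}(\Gamma^p_{kl})$ then amounts to contracting once more with the inverse matrix $(d\varphi)^p_{\phantom{p}q}$, giving precisely the stated formula. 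Along the way I would invoke the identification (\ref{eqnDifferentialInverse}) between $d\varphi^{-1}$ and $\varphi^{\sharp}(d(\varphi^{-1}))$ in order to rewrite the derivative term as $\partial_{\xi^m}(d\varphi^{-1})^q_{\phantom{q}l}$ on the $\xi$-side, as it appears in the claim.

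The main obstacle is purely the sign bookkeeping: keeping the many Koszul signs consistent when commuting odd matrix entries past coordinate derivations and through the connection, and matching them against the precise conventions (\ref{eqnChristoffelSymbols}) for the \emph{left} Christoffel symbols and the supertranspose in (\ref{eqnCoordinateVFTrafo}). I would organise the computation by assigning definite parities to all free indices at the outset and tracking each transposition separately, which renders the argument routine if tedious.
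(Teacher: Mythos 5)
Your proposal is correct and follows exactly the paper's own route: the paper's entire proof is the one-line remark that the lemma follows from a straightforward calculation involving (\ref{eqnCoordinateVFTrafo}) and Leibniz' rule, which is precisely the computation you outline (expand both sides of (\ref{eqnConnectionPullback}) in the $\zeta$-frame via (\ref{eqnChristoffelSymbols}), compare coefficients, and contract with $d\varphi$). Your sketch is in fact more explicit than the published proof, and the Koszul-sign bookkeeping you defer is exactly what the paper also leaves to the reader.
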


\begin{proof}
This follows from a straightforward calculation involving
(\ref{eqnCoordinateVFTrafo}) and Leibniz' rule.
\end{proof}

Returning to the general case of a connection $\nabla$ on a vector bundle $\mE$,
we mention the notion of parallel transport
$P_{\gamma}:x^*\mE\rightarrow y^*\mE$ along an $S$-path
$\gamma:S\times[0,1]\rightarrow M$ connecting $S$-points $x,y:S\rightarrow M$.
Here, $S$ is a parameter space which we restrict to the case of superpoints
$S=\bR^{0|L}$. This is detailed in \cite{Gro14c}.
In terms of local coordinates $(\xi^k)$ and a trivialisation $(T^m)$ of $\mE$,
the parallelness condition $(\gamma^*\nabla)_{\partial_t}P_{\gamma}[X_x]=0$
for $X_x\in x^*\mE$ reads as follows.
\begin{align}
\label{eqnParallelTransport}
\partial_t(P_{\gamma})^m_{\phantom{m}p}
=-(-1)^{m(k+1)}\partial_t(\gamma^*(\xi^l))\cdot\gamma^*(\Gamma^m_{lk})\cdot(P_{\gamma})^k_{\phantom{k}p}
\end{align}
In this equation, the Christoffel symbols are defined analogous to
(\ref{eqnChristoffelSymbols}).

\begin{Lem}
Let $P_{\gamma}$ denote parallel transport of a connection on $\mT M$
along some $S$-path $\gamma$. Then parallel transport with respect to
the induced connection (\ref{eqnJmathNabla}) on $\mT^{1,0}$ is given by
$\jmath\circ P_{\gamma}\circ\jmath^{-1}$.
\end{Lem}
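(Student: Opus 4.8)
The plan is to exploit that $\jmath$ is an even bundle isomorphism $\mT M\to\mT^{1,0}M$ covering the identity of $M$, and that the induced connection (\ref{eqnJmathNabla}) is by construction the one obtained from $\nabla$ by conjugation with $\jmath$. Denote this induced connection by $\tilde{\nabla}$, so that $\tilde{\nabla}_X s=\jmath(\nabla_X\jmath^{-1}s)$ for every section $s$ of $\mT^{1,0}M$. The structural fact to be used is that a connection-preserving isomorphism intertwines the associated parallel transports; I would establish the present statement as the corresponding instance, reducing everything to the uniqueness of solutions of the parallelness equation.

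First I would recall that, along an $S$-path $\gamma$, parallel transport $P_{\gamma}$ is characterised as the unique solution of the parallelness condition $(\gamma^*\nabla)_{\partial_t}P_{\gamma}[X_x]=0$ with initial value $P_{\gamma}|_{t=0}=\id$, which in local coordinates is the linear ODE (\ref{eqnParallelTransport}). Setting $\tilde{P}_{\gamma}:=\jmath\circ P_{\gamma}\circ\jmath^{-1}$, the claim becomes that $\tilde{P}_{\gamma}$ solves the analogous parallelness equation for $\tilde{\nabla}$ with the same initial value.

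Next I would verify this directly. Given $Y_x\in x^*\mT^{1,0}M$, put $X_x:=\jmath^{-1}(Y_x)$; then $P_{\gamma}[X_x]$ is parallel for $\gamma^*\nabla$. Since $\jmath$ is defined over $M$, is even, and intertwines $\nabla$ with $\tilde{\nabla}$, its pullback $\gamma^*\jmath$ is parallel along $\gamma$ and therefore commutes with covariant differentiation in $t$; hence $(\gamma^*\tilde{\nabla})_{\partial_t}\bigl(\jmath\,P_{\gamma}[X_x]\bigr)=\jmath\bigl((\gamma^*\nabla)_{\partial_t}P_{\gamma}[X_x]\bigr)=0$. As $\tilde{P}_{\gamma}[Y_x]=\jmath\,P_{\gamma}[X_x]$ takes the value $Y_x$ at $t=0$, uniqueness of solutions of the parallelness ODE forces $\tilde{P}_{\gamma}$ to coincide with parallel transport for the induced connection, which is the assertion.

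I expect the main obstacle to be making the interchange $(\gamma^*\tilde{\nabla})_{\partial_t}\circ\gamma^*\jmath=\gamma^*\jmath\circ(\gamma^*\nabla)_{\partial_t}$ precise within the $S$-path formalism of \cite{Gro14c}, that is, checking that the pullback of the globally defined even morphism $\jmath$ is genuinely parallel and that no parity signs intervene. Should a coordinate argument be preferred instead, the same conclusion follows by using (\ref{eqnLocalJmath}) to express the Christoffel symbols of $\tilde{\nabla}$ in the frame $(\partial_{\xi^k})$ as the $\jmath$-conjugates of those of $\nabla$ in the real frame $(\partial_{\xi^k_R},\partial_{\xi^k_I})$, and then substituting $\jmath\,P_{\gamma}\,\jmath^{-1}$ into (\ref{eqnParallelTransport}); this is routine but sign-sensitive.
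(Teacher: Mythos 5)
Your proposal is correct. Note that the paper itself states this lemma without proof, treating it as immediate from the definition (\ref{eqnJmathNabla}); your argument via uniqueness of solutions of the parallelness equation is exactly the standard way to make that immediacy precise. The one point you flag as a potential obstacle --- the interchange $(\gamma^*\tilde{\nabla})_{\partial_t}\circ\gamma^*\jmath=\gamma^*\jmath\circ(\gamma^*\nabla)_{\partial_t}$ --- is not genuinely an obstacle: since $\tilde{\nabla}$ is \emph{defined} as $\jmath\circ\nabla_{\cdot}\circ\jmath^{-1}$, evaluating both sides on a section written in a pulled-back frame, $Z=\gamma^*T^k\cdot Z^k$ of $\gamma^*\mT M$, and using that $\jmath$ is even and $\mO_M$-linear (so no parity signs arise and the $\partial_t Z^k$-terms pass through $\jmath$ untouched) yields the identity directly; the ``parallelness of $\jmath$'' as a section of $\mathrm{Hom}(\mT M,\mT^{1,0}M)$ is merely a restatement of the definition of the induced connection, so there is no circularity and nothing further to check in the $S$-path formalism. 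With that identity, $\jmath\circ P_{\gamma}\circ\jmath^{-1}$ satisfies the parallelness condition (\ref{eqnParallelTransport}) for the induced connection with initial value the identity, and uniqueness of solutions of this linear ODE concludes the proof.
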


A meaningful notion of holonomy in supergeometry was first introduced in
\cite{Gal09} via a Harish-Chandra superpair construction,
while a categorical approach was developed in \cite{Gro14c}.
As analysed in \cite{Gro16}, both theories are equivalent,
although in a nontrivial fashion through the Twofold Theorem
and the Comparision Theorem established in that reference.
In this article, we utilise the categorical approach that we shall sketch
in the following.

For a fixed $S$-point $x$, the holonomy group $\Hol_x$ is defined as the
set of parallel transport operators $P_{\gamma}$ with $\gamma$ a piecewise
smooth $S$-loop starting and ending in $x$.
It can be shown to carry the structure of a Lie group.
By a theorem of Ambrose-Singer type, its Lie algebra $\hol_x$ is
generated by endomorphisms of the form
\begin{align}
\label{eqnHolonomyAlgebra}
\{P_{\gamma}^{-1}\circ\scal[R_y]{u}{v}\circ P_{\gamma}\setsep
y:S\rightarrow M\,,\;\gamma:x\rightarrow y\;\;\mathrm{pw. smooth}\,,\;
u,v\in(y^*\mT M)_{\overline{0}}\}
\end{align}
where $R$ denotes the curvature tensor with respect to $\nabla$.
In particular, $\Hol_x$ is trivial for a flat connection (with $R=0$)
on a simply-connected supermanifold.

As it stands, $\Hol_x$ alone does not contain enough information
for a good holonomy theory.
To that end, let $T=\bR^{0|L'}$ be another superpoint and consider $x$
as an $S\times T$-point, denoted $x_T:S\times T\rightarrow M$.
The prescription $T\mapsto\Hol_x(T):=\Hol_{x_T}$ extends to
a Lie group valued functor, referred to as the holonomy group functor $\Hol_x$
The reader should note that, in general, this functor is not representable.
In particular, it cannot be identified with the $\Lambda$-point functor
of Schwarz \cite{Shv84} and Voronov \cite{Vor84}, by means of which
a supermanifold is characterised.

For the purposes of the present article it suffices to consider the holonomy
group functor with respect to a topological point $x:\bR^{0|0}\rightarrow M$.
The Holonomy Principle can then be cast in the following form.
Recall that all supermanifolds occurring are assumed connected.

\begin{Thm}[Holonomy Principle]
\label{thmHolonomyPrinciple}
The pullback $X_x:=x^*X\in x^*\mE$ of a parallel global section
$X\in\mE$ with $\nabla X\equiv 0$ is holonomy invariant $\Hol_x(T)\cdot X_x=X_x$
for every $T=\bR^{0|L'}$.
Conversely, invariance in this sense of a section $X_x\in x^*\mE$
implies the unique existence of a parallel global section $X\in\mE$ such that
$x^*X=X_x$.
\end{Thm}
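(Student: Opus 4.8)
*The statement to be proven asserts a bijective correspondence between parallel global sections $X \in \mE$ and holonomy-invariant sections $X_x \in x^*\mE$.*

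The plan is to treat the two directions separately, the forward direction being essentially a compatibility computation and the converse being the substantive construction. First I would establish the forward implication. Given a parallel global section $X$ with $\nabla X \equiv 0$, I would show that for any piecewise smooth $S\times T$-loop $\gamma$ based at $x_T$, the parallel transport satisfies $P_\gamma[X_{x}] = X_{x}$. The key observation is that the restriction $\gamma^* X$ is a section along $\gamma$ which, by functoriality of pullback and the parallelness hypothesis, satisfies $(\gamma^*\nabla)_{\partial_t}(\gamma^* X) = \gamma^*(\nabla X) = 0$. Since $P_\gamma[X_x]$ solves the same first-order parallel transport equation (\ref{eqnParallelTransport}) with the same initial value $X_x$ at $t=0$, uniqueness of solutions to that ODE forces $P_\gamma[X_x] = (\gamma^* X)|_{t=1} = X_x$ (using that $\gamma$ returns to $x_T$). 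Thus $\Hol_x(T)\cdot X_x = X_x$ for every $T = \bR^{0|L'}$, as claimed.

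For the converse, suppose $X_x \in x^*\mE$ is holonomy invariant in the stated sense. The idea is to \emph{define} the candidate global section by parallel transport: for any $S$-point $y$ reached by a piecewise smooth path $\gamma : x \to y$, set $X_y := P_\gamma[X_x]$. The crucial point is well-definedness, i.e.\ independence of the chosen path. If $\gamma_1, \gamma_2 : x \to y$ are two such paths, then $\gamma_2^{-1}\cdot\gamma_1$ is a loop based at $x$, and holonomy invariance gives $P_{\gamma_2}^{-1}\circ P_{\gamma_1}[X_x] = X_x$, hence $P_{\gamma_1}[X_x] = P_{\gamma_2}[X_x]$. Here one uses that $\Hol_x(T)$ is genuinely a functor in $T$, so that invariance at all levels $T = \bR^{0|L'}$ supplies enough relations to control $S\times T$-points and not merely the topological point. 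Connectedness of $M$ (assumed throughout) guarantees every $y$ is reachable, so $X$ is defined everywhere. One then checks that the resulting assignment is a genuine section of $\mE$, is smooth (holomorphic, in the complex-linear case), and satisfies $\nabla X \equiv 0$: the latter follows because along any path the transported value solves the parallel transport equation by construction, so its covariant derivative in the path direction vanishes, and since the tangent directions at each point are spanned by such path velocities, $\nabla X = 0$ identically. Uniqueness with $x^*X = X_x$ is immediate, since any parallel section is determined by its value at one point via parallel transport.

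The main obstacle I anticipate is the well-definedness and smoothness of $X$ in the \emph{functorial} setting of \cite{Gro14c}, where $\Hol_x$ is a Lie group valued functor that is in general not representable. One cannot simply argue pointwise at the topological point $x : \bR^{0|0}\to M$; the holonomy invariance must be invoked at every parameter $T = \bR^{0|L'}$ to ensure that the section one builds behaves correctly as an $S\times T$-family and glues into an honest sheaf morphism. Verifying that path-independence at all $T$ assembles into a well-defined, smooth (holomorphic) global section---rather than merely a collection of fibrewise values---is where the categorical superholonomy machinery does the real work, and I would lean on the structural results of \cite{Gro14c} and \cite{Gro16} to carry out this gluing rigorously.
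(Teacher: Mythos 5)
A preliminary remark: the paper itself does not prove this theorem. It appears in the appendix as part of a synopsis of the superholonomy theory of \cite{Gro14c} and \cite{Gro16}, and is imported from those references without argument. So there is no proof in the paper to compare yours against; what can be assessed is whether your sketch would stand on its own.

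Your forward direction is complete and correct: pulling back the parallel section along an $S$-loop, noting $(\gamma^*\nabla)_{\partial_t}(\gamma^*X)=\gamma^*(\nabla X)=0$, and invoking uniqueness of solutions of the linear parallel-transport equation (\ref{eqnParallelTransport}) (which holds with superpoint parameters) forces $P_\gamma[X_x]=X_x$. Your converse is the right strategy -- define $X_y:=P_\gamma[X_x]$, get path-independence from invariance under $\Hol_x(T)$ for all $T=\bR^{0|L'}$, then check parallelness and uniqueness -- but it has two genuine gaps, both of which you name rather than fill. First, ``connectedness of $M$ guarantees every $y$ is reachable'' is a statement about \emph{$S$-points}: one must show that for every superpoint $S$, any $S$-point $y:S\to M$ can be joined to $x_S$ by a piecewise smooth $S$-path. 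This follows from connectedness of $M_{\overline{0}}$ but is not automatic; it is a lemma of the theory in \cite{Gro14c}. Second, and more seriously, the passage from a compatible family of fibrewise values $\{X_y\}$ (indexed by all $S$-points $y$) to an honest section of the sheaf $\mE$ of $\mO_M$-supermodules is the actual content of the theorem in the super setting: a section is not determined by, nor automatically assembled from, values at topological points, and the paper explicitly warns that $\Hol_x$ is not representable, so one cannot shortcut this via a Yoneda-type argument. Smoothness/holomorphy of the glued section (via smooth dependence of ODE solutions on parameters, carried out for families over superpoints with enough odd generators) is precisely where \cite{Gro14c} does its work. Deferring this to the references is defensible -- the paper defers the entire theorem -- but as a self-contained proof your proposal is incomplete at exactly the step that distinguishes the supergeometric statement from its classical counterpart.
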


\subsection{The Canonical Bundle and Integral Forms}
\label{secCanonicalBundle}

The Berezinian of a free $A$-supermodule $M$ of rank $p|q$ for $A$ a
supercommutative superalgebra is the free $A$-supermodule of rank
$1|0$ (for $q$ even) or $0|1$ (for $q$ odd) defined through a distinguished
class of bases $[s]=[s_1\ldots s_{p+q}]$ for $(s_i)$ an $M$-basis
and the relation $[s']=[s]\cdot\Ber\varphi$ if $(s'_i)=(s_i)\cdot\varphi$,
see e.g. Chp. 3 of \cite{Man88} for details.
This construction carries over to vector bundles through local trivialisations.
The essential notion for the present paper is the following.

\begin{Def}
Let $M$ be a complex supermanifold. Its \emph{canonical bundle}
is the Berezinian of the complex cotangent sheaf
$\Ber M:=\Ber(\mT^{1,0}M)^*$.
\end{Def}

As mentioned above, any choice of coordinates $(\xi^k)$ on $M$
induces a local covector basis
$(d\xi^k)$ of $(\mT^{1,0}M)^*$.
We shall use the following notation for the induced Berezinian section.
\begin{align*}
[d\xi]:=[d\xi^1\ldots d\xi^{p+q}]
\end{align*}
By (\ref{eqnCoordinateVFTrafo}), this transforms under a coordinate transformation
$\varphi:\zeta\rightarrow\xi$ as follows.
\begin{align*}
\varphi^{\star}[d\xi]:=[\varphi^{\star}d\xi]=[d\zeta\cdot d\varphi^{ST}]=\sdet d\varphi^{ST}\cdot[d\zeta]=\sdet d\varphi\cdot[d\zeta]
\end{align*}

\begin{Def}
\label{defIntegralForms}
The sheaves of integral forms are defined by
\begin{align*}
I^{n-p}:=\bigwedge^p\mT^{1,0}M\otimes\Ber M
\end{align*}
for $p\geq 0$.
\end{Def}

There is a real counterpart of the previous definition, which plays a role
for the theory of integration similar to that of the de Rham-complex
in classical geometry, see e.g. Chp. 3 of \cite{DM99}.
For the present complex case, there is no immediate reference to integration.
Nevertheless, we refer to the $I^{n-p}$ as 'integral' forms
for the otherwise analogous structure.
There is a natural operator as follows.

\begin{Def}
\label{defPartialIntegralForms}
We define $\partial:I^{n-p}\rightarrow I^{n-p+1}$ by
\begin{align*}
&\partial\left(f\cdot\dd{\xi^1}\wedge\ldots\wedge\dd{\xi^p}\otimes[d\xi]\right)\\
&\qquad\qquad:=\sum_{i=1}^{n+m}(-1)^{M_i}\dd[f]{\xi^i}\cdot\left(\dd{\xi^1}\wedge\ldots\wedge\dd{\xi^{i-1}}\wedge\widehat{\dd{\xi^i}}\wedge\dd{\xi^{i+1}}\wedge\ldots\wedge\dd{\xi^p}\right)\otimes[d\xi]
\end{align*}
with respect to (complex) local coordinates $\xi^i$ on $M$.
As is usual, the hat symbol means omission. Moreover, the sign $M_i$ reads
\begin{align*}
M_i=(i-1)+\abs{\xi^i}\cdot\left(\sum_{j=1}^{i-1}\abs{\xi^j}+\abs{f}\right)
\end{align*}
as arising from moving $\dd{\xi^i}$ to the front.
\end{Def}

\begin{Lem}
\label{lemPartialWelldefined}
$\partial$ is well-defined (independent of coordinates), and $\partial^2=0$.
\end{Lem}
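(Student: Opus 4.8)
The plan is to prove the two assertions of Lemma~\ref{lemPartialWelldefined} separately, and I expect the coordinate-independence to be the genuine difficulty while $\partial^2=0$ is a short formal computation.

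For $\partial^2=0$, I would work entirely in a fixed coordinate system $(\xi^i)$. Applying $\partial$ twice to a generator $f\cdot\partial_{\xi^{1}}\wedge\ldots\wedge\partial_{\xi^p}\otimes[d\xi]$ produces a double sum over pairs of omitted indices, with a second-derivative factor $\partial^2 f/\partial\xi^j\partial\xi^i$ and a compound sign. The standard mechanism is that each unordered pair $\{i,j\}$ contributes twice, once deleting $\partial_{\xi^i}$ first and once deleting $\partial_{\xi^j}$ first; the mixed partials agree up to the Koszul sign $(-1)^{\abs{\xi^i}\abs{\xi^j}}$ encoded in the commutation of superderivations, while the two combinatorial signs $M_i$ (computed before deletion) differ by exactly the opposite sign, so the two contributions cancel in pairs. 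Concretely I would isolate the sign picked up in passing from "delete $i$ then $j$" to "delete $j$ then $i$" and check it is $-(-1)^{\abs{\xi^i}\abs{\xi^j}}$, matching the symmetry of $\partial_{\xi^j}\partial_{\xi^i}f$; this is the analogue of the classical $d^2=0$ cancellation and is purely bookkeeping once the signs are laid out carefully.

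The harder claim is well-definedness. Here I would take two coordinate systems related by a change $\varphi:\zeta\rightarrow\xi$ and show the two formulas for $\partial$ agree on a generator. The key ingredients already available in the excerpt are the transformation $\varphi^{\star}[d\xi]=\sdet d\varphi\cdot[d\zeta]$ of the Berezinian section, the coordinate vector-field rule $\varphi^{\star}\partial_{\xi^k}=\partial_{\zeta^m}\cdot(d\varphi^{-1})^m_{\phantom{m}k}$ from (\ref{eqnCoordinateVFTrafo}), and above all Lemma~\ref{lemJacobiSum}, $\sum_j\partial_j(\sdet d\varphi\cdot(d\varphi^{-1})^j_{\phantom{j}k})=0$. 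The plan is to rewrite a $\xi$-generator in the $\zeta$-coordinates, apply the $\zeta$-version of $\partial$, and use the Leibniz rule to split the $\zeta$-derivatives hitting the factor $\sdet d\varphi\cdot(d\varphi^{-1})^{\cdot}_{\phantom{\cdot}\cdot}$ into a "desired" term reproducing the $\xi$-formula and an "extra" term. The extra term is precisely where Lemma~\ref{lemJacobiSum} must intervene: the spurious derivatives of the Jacobian cofactors should organise into the vanishing sum, so that only the intrinsic part survives.

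The main obstacle will be controlling the signs through this change of variables, since one must juggle the parity-dependent sign $M_i$ in Definition~\ref{defPartialIntegralForms}, the supertranspose sign hidden in $\sdet d\varphi^{ST}=\sdet d\varphi$, and the Koszul signs from reordering the wedge of transformed $\partial_{\zeta^m}$'s relative to the omitted slot. I anticipate that the cleanest route is to verify the identity first for the scalar case $p=0$ (where $\partial$ lands in $I^{n+1}$ and only $[d\xi]$ and one derivative are involved, so the computation reduces almost directly to Lemma~\ref{lemJacobiSum}), and then to propagate to general $p$ by the multivector transformation law, checking that the combinatorial signs on both sides match. If the signs are arranged consistently with the conventions fixed around (\ref{eqnCoordinateVFTrafo}), the equality should close; the real work is bookkeeping rather than conceptual, but it is easy to drop a sign, so I would keep the supertranspose and parity factors explicit at each step.
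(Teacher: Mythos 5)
Your strategy is essentially the direct-computation route that the paper itself declares viable (it remarks that the lemma ``can be proved by a lengthy but direct calculation \ldots{} using Lem.~\ref{lemJacobiSum}''), and your two cancellation mechanisms are the correct ones: for $\partial^2=0$, pairing the two orders of deletion against the super-symmetry $\partial_{\xi^i}\partial_{\xi^j}f=(-1)^{\abs{\xi^i}\abs{\xi^j}}\partial_{\xi^j}\partial_{\xi^i}f$; for coordinate independence, Leibniz plus Lem.~\ref{lemJacobiSum} to kill the spurious derivatives of $\sdet d\varphi\cdot(d\varphi^{-1})^{m}_{\phantom{m}k}$. However, the proof the paper actually gives is a genuinely different one: it constructs the parity-reversal comparison map $\Gamma:\Ber M\otimes\bigwedge\mT^{1,0}M\rightarrow\Ber M\otimes S\left(\mT^{1,0}M\Pi\right)$ identifying its integral forms with Manin's, proves $\delta\circ\Gamma=-\Gamma\circ\partial$ (Lem.~\ref{lemManinOperator}) for Manin's operator $\delta$, and then imports well-definedness and $\delta^2=0$ from \cite{Man88}; since $\Gamma$ is a module isomorphism defined independently of coordinates, both properties transfer to $\partial$ at once. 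What your approach buys is self-containedness; what it costs is the full super sign bookkeeping, and one point you gloss over is that after a coordinate change a simple wedge $\partial_{\xi^1}\wedge\ldots\wedge\partial_{\xi^p}$ becomes a sum of wedges in which odd coordinate fields may repeat ($\partial_{\zeta^m}\wedge\partial_{\zeta^m}\neq 0$ for $\abs{\zeta^m}=\overline{1}$), so the deletion sum must be organised over positions in the wedge rather than over coordinate indices before your induction on $p$ can run. Finally, one slip to fix: your proposed base case $p=0$ is vacuous, since on $I^{n}=\Ber M$ there is no factor to delete and $\partial=0$ (there is no $I^{n+1}$); the case that ``reduces almost directly'' to Lem.~\ref{lemJacobiSum} is $p=1$, where $\partial\left(f\cdot\partial_{\xi^k}\otimes[d\xi]\right)=\pm\,\partial_{\xi^k}(f)\otimes[d\xi]$ lands in $I^{n}$, and the Leibniz splitting of $\sum_m\partial_{\zeta^m}\left(f\,\sdet d\varphi\,(d\varphi^{-1})^{m}_{\phantom{m}k}\right)$ exhibits exactly the vanishing sum of Lem.~\ref{lemJacobiSum}. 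Neither issue invalidates your plan; both are part of the ``lengthy'' bookkeeping the paper chose to avoid by passing through Manin's theory.
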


The operator $\partial$ should be thought of as a suitable generalisation of
the classical operator $\partial:\Omega^{p,q}\rightarrow\Omega^{p+1,q}$
(with $q=0$) on a complex manifold. Indeed, in this case there is a canonical
isomorphism $I^{n-p}\cong\Omega^{p,0}$, and the local expression for $\delta$
is as stated. In the case at hand of a complex supermanifold, the bundles
$I^{n-p}\neq\Omega^{q,0}$ are different, and a welldefined operator
of the local form stated can only be defined on the former, in the way
presented here. Lem. \ref{lemPartialWelldefined} can be proved by a lengthy
but direct calculation
involving the explicit expressions under a coordinate transformation
of the objects occurring, and using Lem. \ref{lemJacobiSum}.

Let $\dbar$ denote the operator on $\Omega^{0,q}(I^{n-l})$
as defined by (\ref{eqnDbar}).
Moreover, we define
\begin{align}
\label{eqnPartial}
\partial:\Omega^{0,q}(I^{n-p})\rightarrow\Omega^{0,q}(I^{n-p+1})\;,\qquad
\partial(d\oxi^I\otimes e):=(-1)^{q}d\oxi^I\otimes\partial e
\end{align}
where $I$ is a multiindex with $\abs{I}=q$ and $e\in I^{n-p}$.
The operator $\partial$ on the right hand side is the map from
Def. \ref{defPartialIntegralForms}. Although written in terms of
coordinates $(\xi^k)$ on $M$, the resulting operator is welldefined
since the transformation of $d\oxi^I$ gets annihilated by $\partial$.
It follows immediately that
\begin{align}
\label{eqnPartialSquared}
\partial^2=0
\end{align}

\begin{Lem}
\label{lemPartialDbar}
$\partial$ anticommutes with $\dbar$:
\begin{align*}
\partial\dbar=-\dbar\partial
\end{align*}
\end{Lem}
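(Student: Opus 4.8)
The plan is to reduce the identity to a computation in a single chart and to isolate the one sign that forces anticommutation. By the remark following \eqref{eqnPartial} and by Lem. \ref{lemPartialWelldefined}, both $\partial$ and $\dbar$ are coordinate independent, so it suffices to verify $\partial\dbar+\dbar\partial=0$ in fixed complex coordinates $(\xi^k)$, and by linearity only on a monomial $d\oxi^I\otimes e$ with $\abs{I}=q$ and $e=f\cdot\partial_{\xi^1}\wedge\ldots\wedge\partial_{\xi^p}\otimes[d\xi]$. The structural observation, visible already in \eqref{eqnDbar} and Def. \ref{defPartialIntegralForms}, is that $\dbar$ touches only the $(0,q)$-form factor $d\oxi^I$ (wedging some $d\oxi^k$ in front and differentiating the coefficient by the antiholomorphic derivative $\partial/\partial\overline{\xi}^k$), whereas $\partial$ touches only the integral-form factor $e$ (differentiating the coefficient by the holomorphic derivative $\partial/\partial\xi^i$ and omitting $\partial_{\xi^i}$), up to the global prefactor $(-1)^q$ of \eqref{eqnPartial}. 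Since $\partial/\partial\xi^i$ preserves $\mO_M$, the element $\partial e$ is again holomorphic, so $\dbar$ still acts only on the form factor after $\partial$ has been applied.

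First I would strip off the sign. Setting $\tilde\partial(d\oxi^I\otimes e):=d\oxi^I\otimes\partial e$, so that $\partial=(-1)^q\tilde\partial$ on $\Omega^{0,q}(I^{n-p})$, and using that $\dbar$ raises the form degree by one, one obtains on $\Omega^{0,q}(I^{n-p})$
\begin{align*}
\partial\dbar=(-1)^{q+1}\tilde\partial\dbar\;,\qquad
\dbar\partial=(-1)^q\dbar\tilde\partial\;.
\end{align*}
Hence $\partial\dbar+\dbar\partial=(-1)^q(\dbar\tilde\partial-\tilde\partial\dbar)$, and the claimed anticommutation is \emph{equivalent} to the plain commutation $\tilde\partial\dbar=\dbar\tilde\partial$ of the unsigned operators. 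This is the statement I would actually prove.

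To establish $\tilde\partial\dbar=\dbar\tilde\partial$, I would insert the two local formulas and compare term by term. Both compositions produce, for each pair of indices $(k,i)$, the form $d\oxi^k\wedge d\oxi^I$ tensored with the multivector $\partial_{\xi^1}\wedge\ldots\wedge\widehat{\partial_{\xi^i}}\wedge\ldots\wedge\partial_{\xi^p}\otimes[d\xi]$ and a coefficient obtained by applying both $\partial/\partial\xi^i$ and $\partial/\partial\overline{\xi}^k$ to $f$. Since the wedge (on the form slot) and the omission (on the multivector slot) are performed on disjoint tensor factors, the only differences between the two compositions are the order in which the two derivatives hit $f$ and the accompanying Koszul signs, so the comparison collapses to a single sign identity.

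The main obstacle will be precisely this sign identity. The mixed super-partials super-commute with a sign, $\partial_{\overline{\xi}^k}\partial_{\xi^i}=(-1)^{\abs{\xi^k}\abs{\xi^i}}\partial_{\xi^i}\partial_{\overline{\xi}^k}$; moreover the sign $M_i=(i-1)+\abs{\xi^i}(\sum_{j=1}^{i-1}\abs{\xi^j}+\abs{f})$ of Def. \ref{defPartialIntegralForms} depends on the parity of the differentiated coefficient, which shifts by $\abs{\xi^k}$ according to whether $\partial/\partial\overline{\xi}^k$ is applied before or after $\partial$, and the sign $(-1)^{\abs{\xi^k}\abs{\xi^I}}$ of \eqref{eqnDbar} must also be tracked. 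The hard part is to check that these parity-dependent contributions recombine so that the net discrepancy between $\tilde\partial\dbar$ and $\dbar\tilde\partial$ is exactly zero; once that bookkeeping is done, the lemma follows from the equivalence established above.
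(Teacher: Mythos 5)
Your plan is correct and is essentially the paper's own proof: the paper likewise establishes the identity by a direct check in local coordinates (remarking, as you isolate explicitly, that the prefactor $(-1)^q$ in (\ref{eqnPartial}) is exactly what makes it work). Your reduction to commutation of the unsigned operators is sound, and the bookkeeping you defer does close up: the shift of $M_i$ by $\abs{\xi^i}\abs{\xi^k}$ (from the changed parity of the coefficient $\partial_{\oxi^k}f$) precisely cancels the sign $(-1)^{\abs{\xi^i}\abs{\xi^k}}$ from super-commuting the mixed partials, while the factor $(-1)^{\abs{\xi^k}\abs{\xi^I}}$ from (\ref{eqnDbar}) appears identically in both compositions.
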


We remark that the sign $(-1)^q$ in (\ref{eqnPartial}) is important
for this lemma to hold.

\begin{proof}
This follows from a straightforward calculation involving the
expressions of either operator in local coordinates.
\end{proof}

We close this appendix with a comparison to Manin's theory of integral
forms as presented in Sec. 4.5 of \cite{Man88}. It turns out that the
operator $\delta$ defined in paragraph 4 of that reference coincides with
our operator $\partial$, upon considering suitable identifications
as to be detailed in the following. In particular,
Lem. \ref{lemPartialWelldefined} then follows from analogous properties
of $\delta$.

To begin with let, in general,
$A$ be a supercommutative superalgebra and $V$ be an
$A$-supermodule. Let $\Pi V$ and $V\Pi$ denote the supermodules with reversed
parity and, respectively, same right and left $A$-multiplication as $V$.
Let $TV$ denote the tensor algebra of $V$.
We define a map $\Gamma:TV\rightarrow T(V\Pi)$ through the prescription
\begin{align*}
\Gamma(a):=a\;,\qquad
\Gamma(v_1\otimes\ldots\otimes v_p)
:=(-1)^{\sum_{i=1}^p(p-i+1)\abs{v_i}}(v_1\Pi)\otimes\ldots\otimes(v_p\Pi)
\end{align*}
for $a\in A$ and $v_i\in V$. This is well-defined, $A$-linear and bijective.
Restricting to the exterior algebra yields a well-defined map, still denoted
$\Gamma$, as follows.
\begin{align*}
\Gamma:\bigwedge(V)\rightarrow S(V\Pi)\;,\qquad
\abs{\Gamma|_{\bigwedge^pV}}=\sum_{i=1}^p\abs{v_i}+\sum_{i=1}^p(\abs{v_i}+1)=p
\end{align*}
This map is an isomorphism (in the weaker sense of non-parity preserving)
of $A$-supermodules, but not of superalgebras. We may form the tensor product
of either side with another $A$-module $W$ and consider the trivial extension
of $\Gamma$, that we shall denote by the same symbol.

In our case of interest, this yields the bundle map
\begin{align*}
\Gamma:\Ber M\otimes\bigwedge\mT^{1,0}M\rightarrow\Ber M\otimes S\left(\mT^{1,0}M\Pi\right)
\end{align*}
defined through
\begin{align*}
&\Gamma\left([d\xi]\cdot f\otimes\dd{\xi^1}\wedge\ldots\wedge\dd{\xi^p}\right)\\
&\qquad\qquad\qquad=(\id\otimes\Gamma)\left([d\xi]\cdot f\otimes\dd{\xi^1}\wedge\ldots\wedge\dd{\xi^p}\right)\\
&\qquad\qquad\qquad=(-1)^{p(m+\abs{f})+\sum_{i=1}^p(p-i+1)\abs{\xi^i}}[d\xi]\cdot f\otimes\dd{\xi^1}\Pi\odot\ldots\odot\dd{\xi^p}\Pi
\end{align*}
where $m=\abs{[d\xi]}$ is the odd dimension $\dim_{\bC}M=n|m$,
and the sign $p(m+\abs{f})$ comes from commuting $\Gamma$ past $[d\xi]\cdot f$.
On the right hand side, there are the integral forms (in the complex case)
in the sense of \cite{Man88}.
On the left, we have the integral forms as in Def. \ref{defIntegralForms},
with the general canonical isomorphism $V\otimes W\cong W\otimes V$ for
$A$-supermodules $V$ and $W$, defined through $v\otimes w\mapsto (-1)^{\abs{v}\abs{w}}w\otimes v$, left implicit.

\begin{Lem}
\label{lemManinOperator}
Let $\delta:\Ber M\otimes S(\mT^{1,0}M\Pi)\circlearrowleft$ denote the operator
defined in Sec.4.5 of \cite{Man88}. Then
$\delta\circ\Gamma=-\Gamma\circ\partial$.
\end{Lem}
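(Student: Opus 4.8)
The plan is to verify the identity $\delta\circ\Gamma=-\Gamma\circ\partial$ by a direct computation in local coordinates, tracking signs carefully through both operators. Since $\partial$, $\Gamma$ and Manin's $\delta$ are all defined by explicit local formulas (and $\partial$ is already known to be coordinate-independent by Lem. \ref{lemPartialWelldefined}), it suffices to check the equality on a generating section of the form $[d\xi]\cdot f\otimes\partial_{\xi^1}\wedge\ldots\wedge\partial_{\xi^p}$, and by $A$-linearity we may even restrict to such monomials. First I would recall from paragraph 4 of \cite{Man88} the precise local expression for $\delta$ on $\Ber M\otimes S(\mT^{1,0}M\Pi)$; it is a first-order operator that differentiates the coefficient function and contracts against one of the symmetric factors $\partial_{\xi^i}\Pi$, carrying a sign determined by the parities passed.

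Next I would compute the two compositions separately on the chosen monomial. For $\Gamma\circ\partial$, I would apply $\partial$ first using Def. \ref{defPartialIntegralForms}, producing the sum $\sum_i(-1)^{M_i}\partial_{\xi^i}(f)\cdot(\partial_{\xi^1}\wedge\ldots\wedge\widehat{\partial_{\xi^i}}\wedge\ldots\wedge\partial_{\xi^p})\otimes[d\xi]$ with $M_i=(i-1)+\abs{\xi^i}(\sum_{j<i}\abs{\xi^j}+\abs{f})$, and then apply $\Gamma$ to each surviving wedge of length $p-1$, picking up the factor $(-1)^{(p-1)(m+\abs{f}+\abs{\xi^i})+\sum_{l}(p-1-\text{pos}+1)\abs{\xi^l}}$ coming from the definition of $\Gamma$ on $\bigwedge^{p-1}$. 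For $\delta\circ\Gamma$, I would apply $\Gamma$ first, obtaining the sign $(-1)^{p(m+\abs{f})+\sum_{i}(p-i+1)\abs{\xi^i}}$ times the symmetric monomial $\partial_{\xi^1}\Pi\odot\ldots\odot\partial_{\xi^p}\Pi$, and then apply $\delta$, which differentiates $f$ and removes one symmetric factor with its own contraction sign.

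The heart of the proof is then a bookkeeping comparison: for each index $i$, the coefficient of the term $\partial_{\xi^i}(f)\cdot(\partial_{\xi^1}\Pi\odot\ldots\odot\widehat{\partial_{\xi^i}\Pi}\odot\ldots\odot\partial_{\xi^p}\Pi)$ arising from $\delta\circ\Gamma$ must equal $-1$ times the coefficient arising from $\Gamma\circ\partial$. I would isolate the total sign on each side as a function of $i$, $p$, the parities $\abs{\xi^j}$, $\abs{f}$ and the odd dimension $m=\abs{[d\xi]}$, and reduce the claimed identity to a single congruence modulo $2$ for each $i$. The expected main obstacle is precisely this sign reconciliation: the three sources of signs ($M_i$ from $\partial$; the $\Gamma$-signs on $\bigwedge^p$ versus $\bigwedge^{p-1}$, which differ by a controlled amount because omitting $\partial_{\xi^i}$ shifts both the length $p\mapsto p-1$ and the running parity sums; and $\delta$'s own contraction sign together with the $p(m+\abs{f})$ commutation factor) must combine to produce the overall $-1$, and verifying this uniformly in $i$ is where care is needed. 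Once the per-$i$ sign identity is checked, summing over $i$ yields $\delta\circ\Gamma=-\Gamma\circ\partial$ on monomials, and $A$-linearity of all maps involved extends it to arbitrary sections, completing the proof.
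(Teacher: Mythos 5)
Your proposal follows essentially the same route as the paper, whose proof of Lem.~\ref{lemManinOperator} is precisely the direct local-coordinate calculation you outline: both operators are given by explicit local formulas, and one checks on monomial sections that the signs combine to the overall factor $-1$. Your write-up merely makes explicit the bookkeeping (the sign $M_i$ from $\partial$, the $\Gamma$-signs on $\bigwedge^p$ versus $\bigwedge^{p-1}$, and Manin's contraction sign for $\delta$) that the paper leaves implicit, so it is a correct, if uncompleted, elaboration of the same argument.
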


\begin{proof}
Both operators are defined in terms of local coordinates.
By a direct calculation, they coincide.
\end{proof}

As detailed in \cite{Man88}, the operator $\delta$ is welldefined and
satisfies $\delta^2=0$. By the preceding lemma, it is clear that
$\partial$ has analogous properties. This gives a proof of Lem. \ref{lemPartialWelldefined}.

\addcontentsline{toc}{section}{References}

\bibliographystyle{alpha}

\end{document}